\newtheorem{theorem}{Theorem}
\newtheorem{corollary}{Corollary}
\newtheorem{lemma}{Lemma}
\newtheorem{definition}{Definition}
\newcommand{\BO}[1]{{O}\mathopen{}\left(#1\right)\mathclose{}}
\newcommand{\BT}[1]{{\Theta}\mathopen{}\left(#1\right)\mathclose{}}
\newcommand{\BOM}[1]{{\Omega}\mathopen{}\left(#1\right)\mathclose{}}
\newcommand{\A}{\mathcal{A}}
\newcommand{\B}{\mathcal{B}}
\newcommand{\C}{\mathcal{C}}
\newcommand{\bfm}[1]{\mbox{\boldmath $#1$}}
\newcommand{\of}{\sigma}
\newcommand{\spec}[1]{$M(#1)$}
\newcommand{\eval}[1]{$M(#1)$}
\newcommand{\evald}{\eval{p,\of}}
\newcommand{\exec}[1]{D-BSP$(#1)$}
\newcommand{\execd}{\exec{p,\bfm{g},\bfm{\ell}}}
\newcommand{\sync}[1][i]{\texttt{sync$(#1)$}}
\newcommand{\send}{\texttt{send$(m,q)$}}
\newcommand{\receive}{\texttt{receive$()$}}
\newcommand{\GAP}{\mbox{\rm GAP}}
\title{Network-Oblivious Algorithms\thanks{This work was
  supported, in part, by MIUR of Italy under project AMANDA and by the University of Padova
  under projects STPD08JA32 and CPDA121378/12. A preliminary version of this paper appeared
  in \textit{Proceedings of the 21st IEEE International Parallel and Distributed Processing
  Symposium (IPDPS)}, 2007.}}
\author{Gianfranco Bilardi\thanks{Department of Information Engineering, University of
  Padova, 35131 Padova, Italy. \hbox{E-mail}:~\texttt{bilardi@dei.unipd.it}.}
  \and
  Andrea Pietracaprina\thanks{Department of Information Engineering, University of
  Padova, 35131 Padova, Italy. \hbox{E-mail}:~\texttt{capri@dei.unipd.it}.}
  \and
  Geppino Pucci\thanks{Department of Information Engineering, University of
  Padova, 35131 Padova, Italy. \hbox{E-mail}:~\texttt{geppo@dei.unipd.it}.}
  \and
  Michele Scquizzato\thanks{Department of Computer Science, University of
  Pittsburgh, Pittsburgh, PA 15260, USA. \hbox{E-mail}:~\texttt{scquizza@pitt.edu}.
  Most of this work was done while this author was a Ph.D.\ student at the
  University of Padova.}
  \and
  Francesco Silvestri\thanks{Department of Information Engineering, University of
  Padova, 35131 Padova, Italy. \hbox{E-mail}:~\texttt{silvest1@dei.unipd.it}.}}
\date{}
\begin{document}

\maketitle

\begin{abstract}
A framework is proposed for the design and analysis of \emph{network-oblivious algorithms},
namely, algorithms that can run unchanged, yet efficiently, on a variety of machines
characterized by different degrees of parallelism and communication capabilities.
The framework prescribes that a network-oblivious algorithm be specified on a parallel
model of computation where the only parameter is the problem's input size, and then
evaluated on a model with two parameters, capturing parallelism granularity and
communication latency. It is shown that, for a wide class of network-oblivious algorithms,
optimality in the latter model implies optimality in the Decomposable BSP model,
which is known to effectively describe a wide and significant class of parallel
platforms. The proposed framework can be regarded as an attempt to port the notion of
obliviousness, well established in the context of cache hierarchies, to the realm
of parallel computation. Its effectiveness is illustrated by providing optimal
network-oblivious algorithms for a number of key problems. Some limitations of
the oblivious approach are also discussed.
\end{abstract}

\newpage

\section{Introduction}\label{sec:introduction}

Communication plays a major role in determining the performance of
algorithms on current computing systems and has a
considerable impact on energy consumption. Since the relevance of 
communication increases with the size of the system, it is expected to
play an even greater role in the future. Motivated by this scenario,
a large body of results have been devised concerning the design and
analysis of communication-efficient algorithms.  While often useful
and deep, these results do not yet provide a coherent and unified
theory of the communication requirements of computations. One major
obstacle toward such a theory lies in the fact that, \emph{prima facie},
communication is defined only with respect to a specific mapping of a
computation onto a specific machine structure.  Furthermore, the
impact of communication on performance depends on the latency and
bandwidth properties of the channels connecting different parts of the
target machine. Hence, the design, optimization, and analysis of
algorithms can become highly machine-dependent, which is undesirable
from the economical perspective of developing efficient and portable
software. The outlined situation has been widely recognized, and a
number of approaches have been proposed to solve it or to mitigate it.

On one end of the spectrum, we have the \emph{parallel slackness}
approach, based on the assumption that, as long as a sufficient amount
of parallelism is exhibited, general and automatic
latency-hiding techniques can be deployed to achieve an efficient
execution.  Broadly speaking, the required algorithmic parallelism
should be proportional to the product of the number of processing units
by the worst-case latency of the target
machine~\cite{Valiant90}. Further assuming that this amount of
parallelism is available in computations of practical
interest, algorithm design can dispense altogether with communication
concerns and focus on the maximization of parallelism. The
functional/data-flow and the PRAM models of computations have often
been supported with similar arguments.  Unfortunately, as argued
in~\cite{BilardiP95,BilardiP97,BilardiP99}, latency hiding is not a
scalable technique, due to fundamental physical constraints.  Hence,
parallel slackness does not really solve the communication
problem. (Nevertheless, functional and PRAM models are quite valuable
and have significantly contributed to the understanding of other
dimensions of computing.)

On the other end of the spectrum, we could place the
\emph{universality} approach, whose objective is the development of
machines (nearly) as efficient as any other machine of (nearly) the
same cost, at executing any computation (see, e.g.,
\cite{Leiserson85,BilardiP95,BhattBP08,BilardiP11a}). To the extent
that a universal machine with very small performance and cost gaps
could be identified, one could adopt a model of computation
sufficiently descriptive of such a machine, and focus most of the
algorithmic effort on this model.  As technology approaches the
inherent physical limitations to information processing, storage, and
transfer, the emergence of a universal architecture becomes more
likely. Economy of scale can also be a force favoring convergence in
the space of commercial machines.  While this appears as a perspective
worthy of investigation, at the present stage, neither the known
theoretical results nor the trends of commercially available platforms
indicate an imminent strong convergence.

In the middle of the spectrum, a variety of computational models proposed
in the literature can be viewed as variants of an approach aiming at
realizing an \emph{efficiency/portability/design-complexity tradeoff}
\cite{BilardiP11}. Well-known examples of these models are
LPRAM~\cite{AggarwalCS90}, BSP~\cite{Valiant90} and its refinements
(such as D-BSP~\cite{delaTorreK96,BilardiPP07},
BSP*~\cite{BaumkerDH98}, E-BSP~\cite{JuurlinkW98}, and
BSPRAM~\cite{Tiskin98}), LogP~\cite{CullerKPSSSSE96},
QSM~\cite{GibbonsMR99}, and several others. These models aim at
capturing features common to most (reasonable) machines, while
ignoring features that differ.  The hope is that performance of real
machines be largely determined by the modeled features, so that
optimal algorithms in the proposed model translate into near optimal
ones on real machines. A drawback of these models is that they include
parameters that affect execution time. Then, in general, efficient
algorithms are parameter-aware, since different algorithmic strategies
can be more efficient for different values of the parameters.  One
parameter present in virtually all models is the number of processors.
Most models also exhibit parameters describing the time required to
route certain communication patterns. Increasing the number of
parameters, from just a small constant to logarithmically many in the
number of processors, can considerably increase the effectiveness of
the model with respect to realistic architectures, such as
point-to-point networks, as extensively discussed
in~\cite{BilardiPP07}. A price is paid in the increased complexity of
algorithm design necessary to gain greater efficiency across a larger
class of machines. The complications further compound if the
hierarchical nature of the memory is also taken into account, so that
communication between processors and memories becomes an optimization
target as well.

It is natural to wonder whether, at least for some problems, parallel
algorithms can be designed that, while independent of any
machine/model parameters, are nevertheless efficient for wide ranges
of these parameters.  In other words, we are interested in exploring
the world of efficient \emph{network-oblivious} algorithms with a
spirit similar to the one that motivated the development of efficient
\emph{cache-oblivious} algorithms~\cite{FrigoLPR12}.  In this paper,
we define the notion of network-oblivious algorithms and propose a
framework for their design and analysis.  Our framework is based on
three models of computation, each with a different role, as briefly
outlined below.

The three models are based on a common organization consisting of a
set of CPU/memory nodes communicating through some interconnection.
Inspired by the BSP model and its aforementioned variants, we assume
that the computation proceeds as a sequence of supersteps, where in a
superstep each node performs local computation and sends/receives
messages to/from other nodes, which will be consumed in the subsequent
superstep. Each message occupies a constant number of words.

The first model of our framework (specification model), is used
to specify network-oblivious algorithms. In this model, the number of
CPU/memory nodes, referred to as virtual processors, is a
function $v(n)$ of the input size and captures the amount of
parallelism exhibited by the algorithm.  The second model
(evaluation model) is the basis for analyzing the performance
of network-oblivious algorithms on different machines. It is
characterized by two parameters, independent of the input: the number
$p$ of CPU/memory nodes, simply referred to as processors in
this context, and a fixed latency/synchronization cost $\sigma$ per
superstep. The communication complexity of an algorithm is defined in
this model as a function of $p$ and $\sigma$.  Finally, the third
model (execution machine model) enriches the evaluation model,
by replacing parameter $\sigma$ with two independent parameter vectors
of size logarithmic in the number of processors, which represent,
respectively, the inverse of the bandwidth and the latency costs of
suitable nested subsets of processors. In this model, the
communication time of an algorithm is analyzed as a function of $p$ and
of the
two parameter vectors.  In fact, the execution machine model of our
framework coincides with the Decomposable Bulk Synchronous Parallel
(D-BSP) model~\cite{delaTorreK96,BilardiPP07}, which is known to
describe reasonably well the behavior of a large class of
point-to-point networks by capturing their hierarchical structure
\cite{BilardiPP99}.

A \emph{network-oblivious algorithm} is designed in the specification
model but can be run on the evaluation or execution machine models by
letting each processor of these models carry out the work of a
pre-specified set of virtual processors.  The main contribution of
this paper is an optimality theorem showing that for a wide and
interesting class of network-oblivious algorithms, which satisfy some
technical conditions and whose communication requirements depend only
on the input size and not on the specific input instance, optimality
in the evaluation model automatically translates into optimality in
the D-BSP model, for suitable ranges of the models' parameters. It is
this circumstance that motivates the introduction of the intermediate
evaluation model, which simplifies the analysis of network-oblivious
algorithms, while effectively bridging the performance analysis to the
more realistic D-BSP model.

In order to illustrate the potentiality of the framework we devise
network-oblivious algorithms for several fundamental problems, such as
matrix multiplication, Fast Fourier Transform, comparison-based
sorting, and a class of stencil computations. In all cases, except for
stencil computations, we show, through the optimality theorem, that
these algorithms are optimal when executed on the D-BSP, for wide
ranges of the parameters. Unfortunately, there exist problems for
which optimality on the D-BSP cannot be attained in a
network-oblivious fashion for wide ranges of parameters. 
We show that this is the case for the broadcast problem.

To help place our network-oblivious framework into perspective, it may
be useful to compare it with the well-established sequential
cache-oblivious framework~\cite{FrigoLPR12}. In the latter, the
specification model is the Random Access Machine; the evaluation model
is the Ideal Cache Model IC$(\mathcal{M},\mathcal{B})$, with only one
level of cache of size $\mathcal{M}$ and line length $\mathcal{B}$;
and the execution machine model is a machine with a hierarchy of
caches, each with its own size and line length. In the cache-oblivious
context, the simplification in the analysis arises from the fact that,
under certain conditions, optimality on IC$(\mathcal{M},\mathcal{B})$,
for all values of $\mathcal{M}$ and $\mathcal{B}$, translates into
optimality on multilevel hierarchies.

The notion of obliviousness in parallel settings has been recently
addressed by several papers. In a preliminary version of this
work~\cite{BilardiPPS07} (see also~\cite{Herley11}), we proposed a
framework similar to the one presented here, where messages are packed
in blocks whose fixed size is a parameter of the evaluation and execution
machine models.  While blocked communication may be preferable for
models where the memory and communication hierarchies are seamlessly
integrated (e.g., multicores), latency-based models like the one used
here are equivalent for that scenario and also capture the case when communication
is accomplished through a point-to-point network. Later, Chowdhury et
al.~\cite{ChowdhuryRSB13} introduced a multilevel hierarchical model for
multicores and the notion of \emph{multicore-oblivious algorithm} for this model.
A multicore-oblivious algorithm is specified with no mention of any
machine parameters, such as the number of cores, number of cache levels,
cache sizes and block lengths. Cole and Ramachandran~\cite{ColeR10,ColeR12a,ColeR12b}
presented parallel algorithms for a shared-memory multicore model featuring
a two-level memory, which are oblivious to the number of processors and to
the memory parameters. Finally, Blelloch et al.~\cite{BlellochFGV11} introduced
a parallel version of the cache-oblivious framework in~\cite{FrigoLPR12},
named the Parallel Cache-Oblivious model, and described a scheduler for
oblivious irregular computations. In contrast to these approaches,
Valiant~\cite{Valiant11} studies parallel algorithms for multicore architectures
advocating a parameter-aware design of portable algorithms. In his paper,
he presents optimal algorithms for the Multi-BSP, a bridging model for
multicore architectures which exhibits a hierarchical structure akin
to that of our execution machine model. In fact, for the latter model
we are able to show that, for the same computational problems, optimal
parallel algorithms do exist that are parameter-free.

The rest of the paper is organized as follows. In
Section~\ref{sec:framework} we formally define the three models
relevant to the framework and in Section~\ref{sec:optimality} we prove
the optimality theorem mentioned above. In
Section~\ref{sec:algorithms}, we present the network-oblivious
algorithms for matrix multiplication, Fast Fourier Transform,
comparison-based sorting, and stencil computations. We also discuss
the impossibility result regarding the broadcast
problem. Section~\ref{sec:extension} extends the optimality theorem by
presenting a less powerful version which, however, applies to a wider
class of algorithms.  Section~\ref{sec:conclusions}
concludes the paper with some final remarks.

\section{The Framework}\label{sec:framework}
We begin by introducing a parallel machine model \spec{v}, which
underlies the specification, the evaluation, and the execution
components of our framework. Specifically, \spec{v} consists of a set
of $v$ processing elements, denoted by
$\text{P}_0,\text{P}_1,\dots,\text{P}_{v-1}$, each equipped with a CPU
and an unbounded local memory, which communicate through some
interconnection. (For simplicity, throughout this paper, $v$ is
assumed to be a power of two.)  The instruction set of each CPU is
essentially that of a standard Random Access Machine, augmented with
the three primitives \sync, \send, and \receive.  Furthermore, each
$\text{P}_r$ has access to its own index $r$ and to the number $v$ of
processing elements.  When $\text{P}_r$ invokes primitive \sync, with
$i$ in the integer range $[0,\log v)$, then it waits that each
processing element whose index shares with $r$ the $i$ most significant 
bits also perform a \sync, or
terminate its execution.\footnote{For notational convenience,
  throughout this paper we use $\log x$ to mean $\max\{1,\log_2
  x\}$.} In other words, \sync\ is a barrier which synchronizes the $v/2^i$ 
processing elements whose indices share the $i$ most significant bits.
When $\text{P}_r$ invokes \send, with $0 \leq q < v$, a constant-size
message $m$ is sent to $\text{P}_q$; the message will be available in
$\text{P}_q$ only after a \sync[k], where $k$ is not bigger than the
number of most significant bits shared by $r$ and $q$. On the other
hand, the function \receive\ returns an element in the set of messages
received up to the preceding barrier and removes it from the set.

In this paper, we restrict our attention to algorithms where the
sequence of labels of the sync operations is the same for all
processing elements, and where the last operation executed by each
processing element is a \texttt{sync}.\footnote{As we will see in the
  paper, several algorithms naturally comply or can easily be adapted
  to comply with these restrictions. Nevertheless, a less restrictive
  family of algorithms for \spec{v} can be defined, by allowing
  processing elements to feature different traces of labels of their
  sync operations, still ensuring termination.  The exploration of the
  potentialities of these algorithms is left for future research.}
In this case, the execution of an algorithm can be viewed as a
sequence of \emph{supersteps}, where a superstep consists of all 
operations performed between two consecutive sync operations,
including the second of these sync operations. Supersteps are labeled
by the index of their terminating sync operation: namely, a superstep
terminating with \sync\ will be referred to as an
\emph{$i$-superstep}, for $0 \leq i < \log v$.  Furthermore, we make
the reasonable assumption that in an $i$-superstep, each $\text{P}_r$
can send messages only to processing elements whose index agrees with
$r$ in the $i$ most significant bits, that is, message exchange occurs
only between processors belonging to the same synchronization subset.
We observe that the results of this paper would hold even if, in the
various models considered, synchronizations were not explicitly
labeled.  However, explicit labels can help reduce synchronization
costs. For instance, they become crucial for the efficient execution
of the algorithms on point-to-point networks, especially those of
large diameter.

Consider an \spec{v}-algorithm $\A$ satisfying the above restrictions.
For a given input instance $I$, we use $L^i_{\A}(I)$ to
denote the set of $i$-supersteps executed by $\A$ on input $I$, and
define $S^i_{\A}(I) = |L^i_{\A}(I)|$, for $0 \leq i < \log
v$. Algorithm $\A$ can be naturally and automatically adapted to
execute on a smaller machine \spec{p}, with $p < v$, by stipulating
that processing element $\text{P}_j$ of \spec{p} will carry out the
operations of the $v/p$ consecutively numbered processing elements of
\spec{v} starting with $\text{P}_{j(v/p)}$, for each $0\leq j<p$. We
call this adaptation \emph{folding}. Under folding, supersteps with a
label $i < \log p$ on \spec{v} become supersteps with the same label
on \spec{p}, while supersteps with label $i \geq \log p$ on \spec{v}
become local computation on \spec{p}. Hence, when considering the
communication occurring in the execution of $\A$ on \spec{p}, the set
$L^i_{\A}(I)$ is relevant as long as $i < \log p$. 

A \emph{network-oblivious algorithm} $\A$ for a given computational
problem $\Pi$ is designed on \spec{v(n)}, referred to as
\emph{specification model}, where the number $v(n)$ of processing elements,
which is a function of the input size, is selected as part of the
algorithm design. The processing elements are called \emph{virtual
processors} and are denoted by
$\text{VP}_0,\text{VP}_1,\dots,\text{VP}_{v(n)-1}$, in order to
distinguish them from the processing elements of the other two
models. Since the folding mechanism illustrated above enables $\A$ to
be executed on a smaller machine \spec{p}, the design effort can be
kept focussed on just one convenient virtual machine size, oblivious
to the actual number of processors on which the algorithm will be
executed. 

While a network-oblivious algorithm is specified for a large virtual
machine, it is useful to analyze its communication requirements on
machines with reduced degrees of parallelism.  For these purposes, we
introduce the \emph{evaluation model} \eval{p,\of}, where $p\geq 1$ is
power of two and $\of \geq 0$, which is essentially an \spec{p} where
the additional parameter $\of$ is used to account for the latency plus
synchronization cost of each superstep. The processing elements of
\eval{p,\of} are called \emph{processors} and are denoted by
$\text{P}_0,\text{P}_1,\dots,\text{P}_{p-1}$.  Consider the execution
of an algorithm $\A$ on \eval{p, \of} for a given input $I$.  For each
superstep $s$, the metric of interest that we use to evaluate the
communication requirements of the algorithm is the maximum number of
messages $h^s_{\A}(I,p)$ sent/destined by/to any processor in that
superstep. Thus, the set of messages exchanged in the superstep can be
viewed as forming an $h^s_{\A}(I,p)$-\emph {relation}, where
$h^s_{\A}(I,p)$ is often referred to as the \emph {degree} of the
relation. In the evaluation model, the communication cost of a
superstep of degree $h$ is defined as $h + \of$, and it is independent
of the superstep's label.  For our purposes, it is convenient to
consider the cumulative degree of all $i$-supersteps, for $0 \leq i <
\log p$:
\begin{eqnarray*}
F^i_{\A}(I,p) = \sum_{s \in L^i_{\A}(I)} h^s_{\A}(I,p).
\end{eqnarray*}
Then, the \emph {communication complexity} of $\A$ on \eval{p,\of} is
defined as
\begin{equation}\label{eq:commcompNO}
H_{\A}(n,p,\of) =
\max_{I:|I|=n} \left\{\sum_{i=0}^{\log p -1}\left( F^i_{\A}(I,p) + 
S^i_{\A}(I) \cdot \of\right)\right\}.
\end{equation}
\sloppy
We observe that the evaluation model with this performance metric
coincides with the BSP model~\cite{Valiant90} where the
bandwidth parameter $g$ is set to 1 and the latency/synchronization
parameter $\ell$ is set to $\of$.

Next, we turn our attention to the last model used in the framework,
called \emph{execution machine model}, which represents the machines
where network-oblivious algorithms are actually executed.
We focus on parallel machines whose underlying interconnection
exhibits a hierarchical structure, and use the \emph{Decomposable BSP}
(\emph{D-BSP}) model~\cite{delaTorreK96,BilardiPP07} as our execution
machine model. A \execd\/, with $\bfm{g} = (g_0, \dots, g_{\log p-1})$
and $\bfm{\ell} = (\ell_0, \dots, \ell_{\log p-1})$, is an
\spec{p} where the cost of an $i$-superstep depends on parameters $g_i$
and $\ell_i$, for $0\leq i < \log p$. The processing elements, called
processors and denoted by $\text{P}_0,\text{P}_1,\dots,\text{P}_{p-1}$
as in the evaluation model, are partitioned into nested clusters: for
$0\leq i\leq \log p$, a set formed by all the $p/2^i$ processors whose
indices share the most significant $i$ bits is called an
\emph{$i$-cluster}.  Hence, during an $i$-superstep each processor
communicates only with processors of its $i$-cluster. For the
communication within an $i$-cluster, parameter $\ell_i$ represents the
latency plus synchronization cost (in time units), while $g_i$
represents an inverse measure of bandwidth (in units of time per
message). By importing the notation adopted in the evaluation
model, we define the \emph {communication time} of 
an algorithm $\A$ on \execd\/ as,
\begin{equation}\label{eq:commtimeNO}
D_{\A}(n,p,\bfm{g},\bfm{\ell}) = \max_{I:|I|=n} \left\{\sum_{i=0}^{\log p-1}
\left(F^i_{\A}(I,p)g_i + S^i_{\A}(I)\ell_i\right)\right\}.
\end{equation}

Through the folding mechanism discussed above, any network-oblivious
algorithm $\A$ specified on \spec{v(n)} can be transformed into an
algorithm for \spec{p} with $p < v(n)$, hence into an algorithm for
\eval{p,\of} or \execd.  In this case, the quantities
$H_{\A}(n,p,\of)$ and $D_{\A}(n,p,\bfm{g},\bfm{\ell})$ denote,
respectively, the communication complexity and communication time
of the folded algorithm.  Moreover, since algorithms designed on the
evaluation model \eval{p,\of} or on the execution machine model
\execd\ can be regarded as algorithms for \spec{p}, once the
parameters $\of$ or $\bfm{g}$ and $\bfm{\ell}$ are fixed, we can also
analyze the communication complexities/times of their foldings on
smaller machines.  These relations among the models are crucial for
the effective exploitation of our framework.

The following definitions establish useful notions of optimality for
the two complexity measures introduced above relatively to the evaluation
and execution machine models. For each measure, optimality is defined
with respect to a class of algorithms, whose actual role
will be made clear later in the paper. Let $\mathscr{C}$ denote
a class of algorithms, solving a given problem  $\Pi$.

\begin{definition}\label{def:bopt}
Let $0 < \beta \leq 1$. 
An \evald-algorithm $\B \in \mathscr{C}$ is
$\beta$-\emph{optimal on \evald\ with respect to $\mathscr{C}$} 
if for each \evald-algorithm $\B' \in \mathscr{C}$ 
and for each $n$,
\[
H_{\B}(n,p,\sigma) \leq \dfrac{1}{\beta} H_{\B'}(n,p,\sigma)\,.
\]
\end{definition}
\begin{definition}\label{def:bopt2}
Let $0 < \beta \leq 1$. 
A \execd-algorithm $\B \in \mathscr{C}$ is $\beta$-\emph{optimal on
\execd\ with respect to $\mathscr{C}$} if for each \execd-algorithm $\B' \in
\mathscr{C}$ and for each $n$,
\[
D_{\B}(n,p,\bfm{g},\bfm{\ell}) \leq
\dfrac{1}{\beta} D_{\B'}(n,p,\bfm{g},\bfm{\ell})\,.
\]
\end{definition}

Note that the above definitions do not require $\beta$ to be a constant:
intuitively, larger values of $\beta$ correspond to higher degrees of
optimality.

\section{Optimality Theorem for Static Algorithms}\label{sec:optimality}
In this section, we show that, for a certain class of
network-oblivious algorithms, $\beta$-optimality in the evaluation
model, for suitable ranges of parameters $p$ and $\sigma$, translates
into $\beta'$-optimality in the execution machine model, for some
$\beta'=\BT{\beta}$ and suitable ranges of parameters $p$, $\bfm{g}$,
and $\bfm{\ell}$.  This result, which we refer to as optimality
theorem, holds under a number of restrictive assumptions;
nevertheless, it is applicable in a number of interesting case
studies, as illustrated in the subsequent sections. The optimality
theorem shows the usefulness of the intermediate evaluation model
since it provides a form of ``bootstrap,'' whereby from a given degree
of optimality on a family of machines we infer a related degree of
optimality on a much larger family.  It is important to remark that the
class of algorithms for which the optimality theorem holds includes
algorithms that are \emph{network-aware}, that is, whose code can make
explicit use of the architectural parameters of the model ($p$ and
$\of$, for the evaluation model, and $p$, $\bfm{g}$, and $\bfm{\ell}$,
for the execution machine model) for optimization purposes.

In a nutshell, the approach we follow hinges on the fact that both
communication complexity and communication time (Equations
\ref{eq:commcompNO} and \ref{eq:commtimeNO}) are expressed in terms of
quantities of the type $F^i_{\A}(I,p)$. If communication complexity is
low then these quantities must be low, whence communication time must
be low as well. Below, we discuss a number of obstacles to be faced
when attempting to refine the outlined approach into a rigorous
argument and how they can be handled.

A first obstacle arises whenever the performance functions are linear
combinations of other auxiliary metrics.  Unfortunately, worst-case
optimality of these metrics does not imply optimality of their linear
combinations (nor viceversa), since the worst case of different
metrics could be realized by different input instances. In the cases
of our interest, the ``losses'' incurred cannot be generally bounded
by constant factors. To circumvent this obstacle, we restrict our
attention to \emph{static algorithms}, defined by the property that
the following quantities are equal for all input instances of the same
size $n$: (i) the number of supersteps; (ii) the sequence of labels of
the various supersteps; and (iii) the set of source-destination pairs
of the messages exchanged in any individual superstep.  This
restriction allows us to overload the notation, writing $n$ instead of
$I$ in the argument of functions that become invariant for instances of
the same size, namely $L^i_{\A}(n)$, $S^i_{\A}(n)$, $h^s_{\A}(n,p)$,
and $F^i_{\A}(n,p)$. Likewise, the $\max$ operation becomes
superfluous and can be omitted in Equation~\ref{eq:commcompNO} and
Equation~\ref{eq:commtimeNO}. Static algorithms naturally arise in DAG
(Directed Acyclic Graph) computations. In a \emph{DAG algorithm}, for
every instance size $n$ there exists (at most) one DAG where each node
with indegree 0 represents an input value, while each node with
indegree greater than 0 represents a value produced by a unit-time
operation whose operands are the values of the node's predecessors
(nodes with outdegree 0 are viewed as outputs). The computation
requires the execution of all operations specified by the nodes,
complying with the data dependencies imposed by the arcs.\footnote{In
  the literature, DAG problems have also been referred to as pebble
  games~\cite[Section~10.1]{Savage98}.}

In order to prove the optimality theorem, we need a number of technical
results and definitions. Recall that folding can be employed to transform an
\eval{p,\of}-algorithm into an \eval{q,\of}-algorithm, for any $q < p$.  The
following lemma establishes a useful relation between the
communication metrics when folding is applied. 

\begin{lemma}\label{lem:general_up}
Let $\B$ be a static \eval{p,\of}-algorithm. For every $1 \leq j \leq
\log p$ and for every input size $n$, considering the folding of $\B$
on \eval{2^{j},0} we have
\[
\sum_{i=0}^{j-1} F^i_{\B}(n,2^j) \leq 
\dfrac{p}{2^{j}}\sum_{i=0}^{j-1} F^i_{\B}(n,p)\,.
\]
\end{lemma}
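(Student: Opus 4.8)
The plan is to push a single superstep through the folding and then add up. Recall what folding $\B$ from \eval{p,\sigma} onto \eval{2^j,0} does: processor $\text{P}_k$ of the target machine performs the operations of the block $O_k=\{\text{P}_{kp/2^j},\dots,\text{P}_{(k+1)p/2^j-1}\}$ of $p/2^j$ consecutively numbered processors of \eval{p,\sigma}. Since $j\le\log p$, every $i$-superstep of $\B$ with $i<j$ remains an $i$-superstep after folding, while the supersteps with label $i\ge j$ turn into local computation (this is part of the description of folding given in Section~\ref{sec:framework}); moreover this correspondence between the $i$-supersteps (for $0\le i<j$) of $\B$ on \eval{p,\sigma} and those of the folded algorithm is a label-preserving bijection, so $L^i_{\B}(n)$ is the same set in both cases. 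This is exactly why both sums in the statement run over $i\in\{0,\dots,j-1\}$, and it lets us match the two sides term by term. (Throughout, the assumption that $\B$ is static is what makes $L^i_{\B}(n)$, the source-destination pattern of each superstep, and hence $h^s_{\B}(n,\cdot)$ and $F^i_{\B}(n,\cdot)$, well-defined functions of $n$ alone.)

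Next I would bound $h^s_{\B}(n,2^j)$ for a fixed $i<j$ and a fixed superstep $s\in L^i_{\B}(n)$. In superstep $s$ on the folded machine, processor $\text{P}_k$ sends precisely the messages of $s$ whose source lies in $O_k$ and whose destination lies outside $O_k$, since a message internal to $O_k$ becomes local data movement; hence the number it sends is at most the total number of messages sent in $s$ by the processors of $O_k$ when $\B$ runs on \eval{p,\sigma}, which is at most $|O_k|\cdot h^s_{\B}(n,p)=(p/2^j)\,h^s_{\B}(n,p)$. The same bound holds for the messages $\text{P}_k$ receives, so, maximizing over $k$, $h^s_{\B}(n,2^j)\le (p/2^j)\,h^s_{\B}(n,p)$.

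Summing the last inequality over $s\in L^i_{\B}(n)$ gives $F^i_{\B}(n,2^j)\le (p/2^j)\,F^i_{\B}(n,p)$ for every $0\le i<j$, and summing over $i$ yields the claim. I do not expect a genuine obstacle here: the only point needing care is the folding bookkeeping -- namely, that consecutive $i$-supersteps with $i<j$ are not merged by folding even though what lies between them collapses to local computation, so that the two sides really are sums over corresponding index sets, and that the count of messages sent (or received) by $\text{P}_k$ is subadditive over the block $O_k$, the only loss being the intra-block messages that disappear, which can only help an upper bound.
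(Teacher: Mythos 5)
Your proof is correct and follows essentially the same route as the paper's: the paper's one-sentence argument is precisely your key observation that the messages sent/received by a processor of $M(2^j,0)$ in an $i$-superstep with $i<j$ form a subset of those sent/received by the $p/2^j$ processors of $M(p,\sigma)$ it simulates, giving $h^s_{\B}(n,2^j)\le (p/2^j)\,h^s_{\B}(n,p)$ and hence the claim by summation. Your additional bookkeeping about the label-preserving correspondence of supersteps under folding is implicit in the paper's description of folding and is harmless detail.
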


\begin{proof}
The lemma follows by observing that in every $i$-superstep,
with $i < j$, messages sent/destined by/to processor $P_k$ of \eval{2^{j},0},
with $0 \leq k < 2^{j}$, are a subset of those sent/destined
by/to the $p/2^{j}$ \eval{p,\of}-processors whose computations
are carried out by $P_k$.
\end{proof}

It is easy to come up with algorithms where the bound stated in the
above lemma is not tight. In fact, while in an $i$-superstep each
message must be exchanged between processors whose indices share \emph{at
  least} $i$ most significant bits, some messages which contribute to
$F^i_{\B}(n,p)$ may be exchanged between processors whose indices
share $j > i$ most significant bits, thus not contributing to
$F^i_{\B}(n,2^j)$. Motivated by this observation, we define below a
class of network-oblivious algorithms where a parameter $\alpha$
quantifies how tight the upper bound of Lemma~\ref{lem:general_up} is,
when considering their foldings on smaller machines. This parameter
will be employed to control the extent to which an optimality
guarantee in the evaluation model translates into an optimality
guarantee in the execution model.

\begin{definition}\label{def:wiseness}
A static network-oblivious algorithm $\A$ specified on
\spec{v(n)} is said to be $(\alpha,p)$-\emph{wise}, for some
$0 < \alpha \leq 1$ and $1 < p \leq v(n)$, if
considering the folding of 
$\A$ on \eval{2^{j},0} we have 
\[
\sum_{i=0}^{j-1} F^i_{\A}(n,2^j)
\geq \alpha\dfrac{p}{2^{j}} \sum_{i=0}^{j-1}F^i_{\A}(n,p)\, ,
\]
for every $1\leq j \leq \log p$ and every input size $n$. 
\end{definition}

\noindent
(We remark that in the above definition parameter $\alpha$ is not
necessarily a constant and can be made, for example, a function of
$p$.) As an example, a network-oblivious algorithm for \spec{v(n)}
where, for each $i$-superstep there is always at least one segment of
$v(n)/2^{i+1}$ virtual processors consecutively numbered starting from
$k \cdot (v(n)/2^{i+1})$, for some $k \geq 0$, each sending a number of
messages equal to the superstep degree to processors outside the
segment, is an $(\alpha, p)$-wise algorithm for each $1 < p \leq v(n)$
and $\alpha =1$.  However, $(\alpha, p)$-wiseness holds even
if the aforementioned communication scenario is realized only in an
average sense. Furthermore, consider a pair of values $\alpha'$ and
$p'$ such that $1 < p' \leq p$, and $1 < \alpha' \leq \alpha$. It is
easy to see that $(p/p')F^i_{\A}(n,p) \geq F^i_{\A}(n,p')$, for every
$0 \leq i < \log p'$, and this implies that a network-oblivious
algorithm which is $(\alpha,p)$-wise is also
$(\alpha',p')$-wise.

A final issue to consider is that the degrees of supersteps with
different labels contribute with the same weight to the communication
complexity while they contribute with different weights to the
communication time. The following lemma will help in bridging
this difference.
\begin{lemma}\label{lem:dominance}
For $m \geq 1$, let $\langle X_0, \ldots, X_{m-1} \rangle $ and $\langle Y_0,
\ldots, Y_{m-1} \rangle $ be two arbitrary sequences of real values, and
let $\langle f_0, \ldots, f_{m-1} \rangle$ be a non-increasing sequence of
non-negative values. If $\sum_{i=0}^{k-1} X_i \leq
\sum_{i=0}^{k-1} Y_i$, for every $1 \leq k \leq m$, then
\[
\sum_{i=0}^{m-1} X_i f_i \leq \sum_{i=0}^{m-1} Y_i f_i\,.
\]
\end{lemma}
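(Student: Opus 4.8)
Proof proposal for Lemma~\ref{lem:dominance} (Abel summation / summation by parts)

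The statement is the classical "dominance implies weighted dominance for non-increasing weights" fact, and the plan is to prove it by Abel summation. Set $P_k = \sum_{i=0}^{k-1} X_i$ and $Q_k = \sum_{i=0}^{k-1} Q_i$ for $1 \le k \le m$, with $P_0 = Q_0 = 0$; the hypothesis is $P_k \le Q_k$ for all $1 \le k \le m$. Writing $X_i = P_{i+1} - P_i$ and rearranging, one obtains
\[
\sum_{i=0}^{m-1} X_i f_i = \sum_{i=0}^{m-1} (P_{i+1}-P_i) f_i
= P_m f_{m-1} + \sum_{i=0}^{m-2} P_{i+1}(f_i - f_{i+1}),
\]
and identically for the $Y$'s with $Q$ in place of $P$. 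Subtracting the two identities gives
\[
\sum_{i=0}^{m-1} Y_i f_i - \sum_{i=0}^{m-1} X_i f_i
= (Q_m - P_m) f_{m-1} + \sum_{i=0}^{m-2} (Q_{i+1} - P_{i+1})(f_i - f_{i+1}).
\]

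Now every term on the right-hand side is non-negative: $f_{m-1} \ge 0$ and $Q_m - P_m \ge 0$ by hypothesis, while for each $i$ in the sum we have $Q_{i+1} - P_{i+1} \ge 0$ by hypothesis and $f_i - f_{i+1} \ge 0$ because the sequence $\langle f_0, \ldots, f_{m-1}\rangle$ is non-increasing. Hence the difference is $\ge 0$, which is exactly the claimed inequality. (The case $m = 1$ is the degenerate instance where the sum is empty and the inequality reduces to $X_0 f_0 \le Y_0 f_0$, which follows from $X_0 = P_1 \le Q_1 = Y_0$ and $f_0 \ge 0$.)

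There is no real obstacle here; the only point requiring a little care is getting the index bookkeeping in the summation-by-parts identity right (in particular, that the boundary term attaches to $f_{m-1}$ with coefficient $P_m$, and that the telescoped differences $f_i - f_{i+1}$ pair with the partial sums $P_{i+1}$), and then checking that non-negativity of the weights together with monotonicity makes each summand have the correct sign. An alternative, equivalent route would be to induct on $m$, peeling off the last term and using $f_{m-1} \le f_{m-2}$ to absorb the slack, but the Abel-summation argument is cleaner and makes the sign structure transparent.
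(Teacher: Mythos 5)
Your proof is correct and follows essentially the same route as the paper's: both are summation by parts exploiting the non-negativity of the partial-sum differences $Q_k - P_k$ (the paper's $S_k$) and the monotonicity and non-negativity of the $f_i$, arriving at the identical non-negative expression $(Q_m-P_m)f_{m-1}+\sum_{i=0}^{m-2}(Q_{i+1}-P_{i+1})(f_i-f_{i+1})$. (Only note the harmless typo where you wrote $Q_k=\sum_{i=0}^{k-1}Q_i$ instead of $\sum_{i=0}^{k-1}Y_i$.)
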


\begin{proof}
By defining $S_{0}=0$ and $S_k = \sum_{j=0}^{k-1} (Y_j-X_j) \geq 0$,
for $1 \leq k \leq m$, we have:
\begin{gather*}
\sum_{i=0}^{m-1} f_i(Y_i-X_i) = \sum_{i=0}^{m-1} f_i(S_{i+1}-S_{i})
= \sum_{i=0}^{m-1} f_i S_{i+1} - \sum_{i=1}^{m-1}f_i S_{i} \geq\\
\geq \sum_{i=0}^{m-1} f_i S_{i+1} - \sum_{i=1}^{m-1}f_{i-1} S_{i}
= f_{m-1}S_{m} \geq 0\,.
\end{gather*}
\end{proof}

We are now ready to state and prove the optimality theorem.  Let
$\mathscr{C}$ denote a class of static algorithms solving a
problem $\Pi$, with the property that for any algorithm $\A \in
\mathscr{C}$ for $q$ processing elements, all of its foldings
on $q'$ processing elements, $2 \leq q' < q$, also belong to
$\mathscr{C}$.
\begin{theorem}[Optimality theorem]\label{thm:optimality} \sloppy
Let $\A \in \mathscr{C}$ be network-oblivious and
$(\alpha,p^\star)$-wise, for some $\alpha \in (0,1]$ and a power of
two $p^\star$. Let also $(\of^m_0, \ldots, \of^m_{\log
    p^{\star}-1})$ and $(\of^M_0, \ldots, \of^M_{\log p^{\star}-1})$
  be two vectors of non-negative values, with 
  $\of^m_j \leq \of^M_j$, for every $0 \leq j < \log p^{\star}$. 
  If $\A$ is $\beta$-optimal on
  \eval{2^{j},\of} w.r.t.\ $\mathscr{C}$, for 
  $\of_{j-1}^m \leq \of \leq \of^M_{j-1}$ and $1 \leq j \leq \log
  p^\star$, then, for
  every power of two $p \leq p^\star$, $\A$ is
  $\alpha\beta/(1+\alpha)$-optimal on \execd\ w.r.t. $\mathscr{C}$ 
  as long as:
\begin{itemize}
 \item $g_{i} \geq g_{i+1}$ and $\ell_{i}/g_{i} \geq \ell_{i+1}/g_{i+1}$,
  for $0 \leq i < \log p-1$; 
 \item $\max_{1 \leq k \leq \log p}\{\of^m_{k-1} 2^{k} /p\}
  \leq \ell_{i}/g_{i} \leq \min_{1 \leq k \leq \log p}\{\of^M_{k-1} 2^{k} /p\}$, 
  for $0 \leq i < \log p$.\footnote{%
Note that in order to allow for a nonempty range of values for the ratio
$\ell_{i}/g_{i}$, the $\of^m$ and $\of^M$ vectors  must be such
that $\max_{1 \leq k \leq \log p}\{\of^m_{k-1} 2^{k} /p\}
 \leq \min_{1 \leq k \leq \log p}\{\of^M_{k-1} 2^{k} /p\}$. This will
always be the case for the applications discussed in the next section.}
\end{itemize}
\end{theorem}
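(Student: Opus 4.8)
\emph{Plan.}
Fix a power of two $p\le p^{\star}$ and vectors $\bfm{g},\bfm{\ell}$ obeying the two bulleted conditions; by Definition~\ref{def:bopt2} it suffices to show $D_{\A}(n,p,\bfm{g},\bfm{\ell})\le\frac{1+\alpha}{\alpha\beta}D_{\B'}(n,p,\bfm{g},\bfm{\ell})$ for every $\B'\in\mathscr{C}$ and every $n$. Put $m=\log p$ and $\rho_{i}=\ell_{i}/g_{i}$, so $\rho_{0}\ge\dots\ge\rho_{m-1}$, every $\rho_{i}$ lies in $[A,B]$ with $A=\max_{1\le k\le m}\of^{m}_{k-1}2^{k}/p$ and $B=\min_{1\le k\le m}\of^{M}_{k-1}2^{k}/p$, and $\bfm{\ell}$ is non-increasing. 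For $X\in\{\A,\B'\}$ and $0\le j\le m$ let $P^{X}_{j}=\sum_{i<j}F^{i}_{X}(n,p)$ and $Q^{X}_{j}=\sum_{i<j}S^{i}_{X}(n)$ (well defined, as the algorithms are static). The core is a ``transfer lemma'': for every $1\le j\le m$ and every $\tau\in[A,B]$,
\[
\alpha\,P^{\A}_{j}+\tau\,Q^{\A}_{j}\ \le\ \tfrac1\beta\bigl(P^{\B'}_{j}+\tau\,Q^{\B'}_{j}\bigr).
\]
To prove it, instantiate $\beta$-optimality at the evaluation model \eval{2^{j},\of} with $\of=\tau\,p/2^{j}$ --- a legal value since $\tau\in[A,B]$ forces $\of\in[\of^{m}_{j-1},\of^{M}_{j-1}]$, and the folding of $\B'$ on $2^{j}$ processors lies in $\mathscr{C}$ by the closure hypothesis --- so $H_{\A}(n,2^{j},\of)\le\frac1\beta H_{\B'}(n,2^{j},\of)$. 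Bound $H_{\A}$ below using that $\A$ is $(\alpha,p)$-wise (being $(\alpha,p^{\star})$-wise with $p\le p^{\star}$), i.e.\ $\sum_{i<j}F^{i}_{\A}(n,2^{j})\ge\alpha(p/2^{j})P^{\A}_{j}$, and $H_{\B'}$ above by Lemma~\ref{lem:general_up}, i.e.\ $\sum_{i<j}F^{i}_{\B'}(n,2^{j})\le(p/2^{j})P^{\B'}_{j}$; rescaling by $2^{j}/p$ gives the lemma.

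\emph{A telescoping identity.}
Introduce weights $w_{j}=(g_{j-1}-g_{j})/g_{j-1}$ for $1\le j<m$ and $w_{m}=1$; these are non-negative as $\bfm{g}$ is non-increasing (the degenerate case $g_{j}=0$ is handled by the convention that the corresponding vanishing terms are dropped). Writing $\delta_{j}:=(\ell_{j-1}-\ell_{j})-(g_{j-1}-g_{j})\rho_{j-1}$, a one-line computation gives $\delta_{j}=g_{j}(\rho_{j-1}-\rho_{j})\ge0$, and reversing the order of summation yields $\sum_{j>i}w_{j}g_{j-1}=g_{i}$ and $\sum_{j>i}w_{j}\ell_{j-1}=\ell_{i}-\sum_{i<j<m}\delta_{j}$. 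Hence, with $L^{X}_{j}:=g_{j-1}P^{X}_{j}+\ell_{j-1}Q^{X}_{j}$, one gets $\sum_{j=1}^{m}w_{j}L^{\B'}_{j}\le D_{\B'}$ (dropping the non-negative $\delta_{j}$ terms), while $\sum_{j=1}^{m}w_{j}\bigl(\alpha g_{j-1}P^{\A}_{j}+\ell_{j-1}Q^{\A}_{j}\bigr)=\alpha\sum_{i}g_{i}F^{i}_{\A}(n,p)+\sum_{i}\ell_{i}S^{i}_{\A}(n)-\sum_{j=1}^{m-1}\delta_{j}Q^{\A}_{j}$.

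\emph{Three inequalities and the conclusion.}
Apply the transfer lemma with $\tau=\rho_{j-1}$ (in $[A,B]$ by hypothesis) and multiply by $g_{j-1}$: $\alpha g_{j-1}P^{\A}_{j}+\ell_{j-1}Q^{\A}_{j}\le\frac1\beta L^{\B'}_{j}$. Discarding $\ell_{j-1}Q^{\A}_{j}$ and Abel-summing $\sum_{i}g_{i}F^{i}_{\A}(n,p)$ gives $\sum_{i}g_{i}F^{i}_{\A}(n,p)\le\frac1{\alpha\beta}\sum_{j}w_{j}L^{\B'}_{j}\le\frac1{\alpha\beta}D_{\B'}$; forming instead the $w_{j}$-weighted sum and using the telescoping identity gives $\alpha\sum_{i}g_{i}F^{i}_{\A}(n,p)+\sum_{i}\ell_{i}S^{i}_{\A}(n)\le\frac1\beta D_{\B'}+\sum_{j=1}^{m-1}\delta_{j}Q^{\A}_{j}$. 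For the last piece, apply the transfer lemma with $\tau=B$ and discard $\alpha P^{\A}_{j}$, so $Q^{\A}_{j}\le\frac1\beta\bigl(\frac1B P^{\B'}_{j}+Q^{\B'}_{j}\bigr)$; multiplying by $\delta_{j}$ and summing, the bounds $\sum_{j>i}\delta_{j}\le g_{i}\sum_{j>i}(\rho_{j-1}-\rho_{j})\le g_{i}\rho_{i}\le g_{i}B$ and $\sum_{j>i}\delta_{j}\le\sum_{j>i}(\ell_{j-1}-\ell_{j})\le\ell_{i}$ make the $\B'$-side collapse to $D_{\B'}$, giving $\sum_{j=1}^{m-1}\delta_{j}Q^{\A}_{j}\le\frac1\beta D_{\B'}$. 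Finally decompose $D_{\A}=(1-\alpha)\sum_{i}g_{i}F^{i}_{\A}(n,p)+\bigl[\alpha\sum_{i}g_{i}F^{i}_{\A}(n,p)+\sum_{i}\ell_{i}S^{i}_{\A}(n)\bigr]$ and substitute the three bounds to obtain $D_{\A}\le\bigl(\frac{1-\alpha}{\alpha\beta}+\frac2\beta\bigr)D_{\B'}=\frac{1+\alpha}{\alpha\beta}D_{\B'}$.

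\emph{Where the difficulty lies.}
The hard part is the correction $\sum_{j}\delta_{j}Q^{\A}_{j}$, which is also why the loss factor has the precise form $\alpha\beta/(1+\alpha)$ and not simply $\beta$ or $\alpha\beta$. Abel summation re-expresses $D_{\A}$ through cumulative counts weighted by \emph{differences} of $\bfm{g}$ and $\bfm{\ell}$, and the effective ratio $(\ell_{j-1}-\ell_{j})/(g_{j-1}-g_{j})$ of such a weight can be arbitrarily large, hence far outside $[A,B]$ where the transfer lemma says nothing; so one cannot account for $\sum_{i}\ell_{i}S^{i}_{\A}(n)$ superstep by superstep in the evaluation model. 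The fix is to split off exactly the offending quantity $\sum_{j}\delta_{j}Q^{\A}_{j}$ with $\delta_{j}=g_{j}(\rho_{j-1}-\rho_{j})$ and estimate it at the single largest admissible ratio $\tau=B$, which succeeds precisely because $\sum_{j>i}(\rho_{j-1}-\rho_{j})\le\rho_{i}\le B$ telescopes --- this is where the hypothesis $\ell_{i}/g_{i}\le\min_{k}\of^{M}_{k-1}2^{k}/p$ is indispensable. The residual slack $(1-\alpha)\sum_{i}g_{i}F^{i}_{\A}(n,p)$ is then absorbed by the independent bound $\sum_{i}g_{i}F^{i}_{\A}(n,p)\le\frac1{\alpha\beta}D_{\B'}$, contributing the $\frac{1-\alpha}{\alpha\beta}$ term and hence the overall factor $\frac{1+\alpha}{\alpha\beta}$.
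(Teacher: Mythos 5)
Your proof is correct and yields exactly the claimed constant $(1+\alpha)/(\alpha\beta)$. It is built from the same ingredients as the paper's argument: $\beta$-optimality on \eval{2^{j},\of} at admissible values $\of=\tau p/2^{j}$, Lemma~\ref{lem:general_up} to bound the competitor's folding, wiseness to lower-bound the folded traffic of $\A$, and summation by parts to convert prefix inequalities into the $g_i$- and $\ell_i$-weighted sums. What differs is the final assembly. The paper instantiates evaluation-model optimality only at the two endpoints $\psi_p^m$ and $\psi_p^M$ of the admissible interval, extracts two separate prefix inequalities --- one for $\sum_{i<j}S^i_{\A}(n)\,\ell_i/g_i$, obtained via Lemma~\ref{lem:dominance} with $f_i=\ell_i/g_i$ together with the bound $\ell_i/(\psi_p^M g_i)\le 1$, and one for $\sum_{i<j}F^i_{\A}(n,p)$ --- then adds them and finishes with a single further application of Lemma~\ref{lem:dominance} with $f_i=g_i$. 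You instead package both uses into one ``transfer'' inequality valid for every $\tau$ in the interval, instantiate it at the per-level ratios $\ell_{j-1}/g_{j-1}$ and at the right endpoint $B$, and carry out the Abel summation explicitly with weights $(g_{j-1}-g_j)/g_{j-1}$, which obliges you to control the correction term $\sum_j\delta_jQ^{\A}_j$ by a third appeal to the transfer inequality. The net effect is that your final stage is more laborious (three separate bounds plus the split $D_{\A}=(1-\alpha)\sum_i g_iF^i_{\A}(n,p)+[\,\cdots\,]$, versus one addition and one invocation of Lemma~\ref{lem:dominance}), but the transfer lemma is a clean encapsulation of the evaluation-to-execution step, and your bookkeeping makes explicit where each of the two range constraints on $\ell_i/g_i$ enters. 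Two small points you should state rather than leave implicit: $\ell_i$ is non-increasing (this follows from the monotonicity of $g_i$ and of $\ell_i/g_i$ and is used when you bound $\sum_{j>i}\delta_j\le\ell_i$), and the weights $w_j$ presuppose $g_{j-1}>0$, which is implicit throughout since the theorem speaks of the ratios $\ell_i/g_i$.
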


\begin{proof}
Fix the value $p$ and the vectors $\bfm{g}$ and $\bfm{\ell}$ so to
satisfy the hypotheses of the theorem, and consider a
\execd-algorithm $\C \in \mathscr{C}$. By the $\beta$-optimality of
$\A$ on the evaluation model $M(2^{j},\psi p/2^{j})$, for each $1 \leq
j \leq \log p$ and $\psi$ such that $\sigma^m_{j-1} \leq \psi p/ 2^{j}
\leq \sigma^M_{j-1}$, we have
\[
H_{\A}\left(n,2^j, \dfrac{\psi p}{2^{j}}\right)\leq \dfrac{1}{\beta}
H_{\C}\left(n,2^j, \dfrac{\psi p}{2^{j}}\right)
\]
since $\C$ can be folded into an algorithm for $M(2^{j},\psi p/2^{j})$,
still belonging to $\mathscr{C}$.
By the definition of communication complexity it follows that
\[
\sum_{i=0}^{j-1} \left(F_{\A}^i(n,2^{j}) + S_{\A}^i(n) \dfrac{\psi
p}{2^{j}} \right) \leq
\dfrac{1}{\beta} \sum_{i=0}^{j-1} \left(F_{\C}^i(n,2^{j}) +
S_{\C}^i(n)\dfrac{\psi p}{2^{j}}\right),
\]
and then, by applying Lemma~\ref{lem:general_up} to the right side of the above
inequality, we obtain
\begin{equation}\label{eq:partial_sums}
\sum_{i=0}^{j-1} \left(F_{\A}^i(n,2^{j}) + S_{\A}^i(n) \dfrac{\psi
p}{2^{j}} \right) \leq
\dfrac{1}{\beta} \sum_{i=0}^{j-1} \left(\dfrac{p}{2^{j}}F_{\C}^i(n,p) +
S_{\C}^i(n) \dfrac{\psi p}{2^{j}}\right).
\end{equation}

Define $\psi^m_p = \max_{1 \leq k \leq \log p}\{\of^m_{k-1} 2^{k} /p\}$
and $\psi^M_p = \min_{1 \leq k \leq \log p}\{\of^M_{k-1} 2^{k} /p\}$.
The condition imposed by the theorem on the ratio $\ell_{i}/g_{i}$
implies that $\psi^m_p\leq \psi^M_p$, hence, by definition of these
two quantities, we have that 
$\sigma^m_{j-1} 2^{j} /p\leq \psi^m_p, \psi_p^M
\leq \sigma^M_{j-1} 2^{j} /p$. 

Let us first set $\psi=\psi_p^M$ in Inequality~\ref{eq:partial_sums},
and note that, by the above observation, $\sigma^m_{j-1} \leq \psi_p^M
p/ 2^{j} \leq \sigma^M_{j-1}$.  By multiplying both terms of the
inequality by $2^{j} / (\psi_p^M p)$, and by exploiting the
non-negativeness of the $F_{\A}^i(n,2^{j})$ terms, we obtain
\[
\sum_{i=0}^{j-1} S_{\A}^i(n)  \leq
\dfrac{1}{\beta} \sum_{i=0}^{j-1} \left(\dfrac{F_{\C}^i(n,p)}{\psi_p^M} +
S_{\C}^i(n) \right).
\]
Next, we make $\log p$ applications of Lemma~\ref{lem:dominance}, one for
each $j = 1,2,\dots,\log p$, by setting $m = j$, $X_i = S_{\A}^i(n)$,
$Y_i = (1 / \beta) \left(F_{\C}^i(n,p)/\psi_p^M + S_{\C}^i(n)\right)$,
and $f_i = \ell_i/g_i$. This gives
\[
\sum_{i=0}^{j-1} S_{\A}^i(n)  \dfrac{\ell_i}{g_i} \leq
\dfrac{1}{\beta} \sum_{i=0}^{j-1} \left(F_{\C}^i(n, p)\dfrac{\ell_i}{\psi_p^M
g_i} +
S_{\C}^i(n) \dfrac{\ell_i}{g_i}  \right ),
\]
for $1 \leq j \leq \log p$. Since, by hypothesis, $\ell_i / g_i \leq
\psi_p^M$, for each $0 \leq i < \log p$, we have $\ell_i / \psi_p^M
g_i \leq 1$, hence we can write
\begin{equation}\label{eq:latencies}
\sum_{i=0}^{j-1} S_{\A}^i(n)  \dfrac{\ell_i}{g_i} \leq
\dfrac{1}{\beta} \sum_{i=0}^{j-1} \left(F_{\C}^i(n,p) +
S_{\C}^i(n) \dfrac{\ell_i}{g_i} \right ),
\end{equation}
for $1 \leq j \leq \log p$.

Now, let us set $\psi = \psi_p^m $ in
Inequality~\ref{eq:partial_sums}, which, again, guarantees
$\sigma^m_{j-1} \leq \psi_p^m p/ 2^{j} \leq \sigma^M_{j-1}$. By exploiting
the wiseness of $\A$ in the left side and the non-negativeness of
$S^i_{\A}(n)$, we obtain
\[
\sum_{i=0}^{j-1} \alpha\dfrac{p}{2^j} F_{\A}^i(n,p) \leq
\dfrac{1}{\beta} \sum_{i=0}^{j-1} \left(\dfrac{p}{2^{j}}F_{\C}^i(n,p) +
S_{\C}^i(n) \dfrac{\psi_p^m  p}{2^{j}}\right).
\]
By multiplying both terms by $2^j / (p\alpha)$ and observing that,
by hypothesis, $\psi_p^m \leq \ell_i/g_i$,
for each $0 \leq i < \log p$, we get
\begin{equation}\label{eq:traffics}
\sum_{i=0}^{j-1} F_{\A}^i(n, p)  \leq
\dfrac{1}{\alpha\beta} \sum_{i=0}^{j-1} \left(F_{\C}^i(n,p) +
S_{\C}^i(n) \dfrac{\ell_i}{g_i}\right).
\end{equation}
Summing Inequality~\ref{eq:latencies} with Inequality~\ref{eq:traffics} yields
\[
\sum_{i=0}^{j-1} \left(F_{\A}^i(n,p) + S_{\A}^i(n)
\dfrac{\ell_i}{g_i}\right) \leq
\dfrac{1 + \alpha}{\alpha\beta}  \sum_{i=0}^{j-1} \left( F_{\C}^i(n,p) +
S_{\C}^i(n) \dfrac{\ell_i}{g_i}\right),
\]
for $1 \leq j \leq \log p$. Applying Lemma~\ref{lem:dominance}
with $m = \log p$, $X_i = F_{\A}^i(n,p) + S_{\A}^i(n) \ell_i /g_i$,
$Y_i = (1 + \alpha)/(\alpha \beta)\left(F_{\C}^i(n,p) + S_{\C}^i(n) \ell_i/g_i\right)$,
and $f_i = g_i$ yields
\[
\sum_{i=0}^{\log p -1} \left(F_{\A}^i(n,p) g_i + S_{\A}^i(n)
\ell_i \right) \leq
\dfrac{1 + \alpha}{\alpha\beta} \sum_{i=0}^{\log p -1} \left(F_{\C}^i(n,p)
g_i +
S_{\C}^i(n) \ell_i \right).
\]
Then, by definition of communication time we have
\[
D_{\A}\left(n,p, \bfm{g}, \bfm{\ell}\right)\leq 
\dfrac{1 + \alpha}{\alpha\beta} D_{\C}\left(n,p, \bfm{g}, \bfm{\ell}\right),
\]
and the theorem follows.
\end{proof}

A few remarks are in order regarding the above optimality theorem. 
The complexity metrics adopted in this paper target exclusively
interprocessor communication, thus a (sequential) network-oblivious
algorithm specified on \spec{v} but using only one of the virtual
processors would clearly be optimal with respect to these metrics. For
meaningful applications of the theorem, the class $\mathscr{C}$ must
be suitably defined to exclude such degenerate cases and to contain
algorithms where the work is sufficiently well balanced among the
processing elements.

Note that the theorem requires that both the $g_i$'s and $\ell_i /g_i$'s
form non-increasing sequences. The assumption is rather natural
since it reflects the fact that larger submachines exhibit 
more expensive communication (hence, a larger $g$ parameter) and
larger network capacity (hence, a larger $\ell/g$ ratio).

Some of the issues encountered in establishing the optimality theorem
have an analog in the context of memory hierarchies. For example, time
in the Hierarchical Memory Model (HMM) can be linked to I/O complexity
as discussed in~\cite{AggarwalACS87}, so that optimality of the latter
for different cache sizes implies the optimality of the former for
wide classes of functions describing the access time to different
memory locations. Although, to the best of our knowledge, the question
has not been explicitly addressed in the literature, a careful
inspection of the arguments of~\cite{AggarwalACS87} shows that some
restriction to the class of algorithms is required, to guarantee that
the maximum value of the I/O complexity for different cache sizes is
simultaneously reached for the same input instance. (For example, the
optimality of HMM time does not follow for the class of arbitrary
comparison-based sorting algorithms, since the known I/O complexity
lower bound for this problem~\cite{AggarwalV88} may not be
simultaneously reachable for all relevant cache sizes.) Moreover, the
monotonicity we have assumed for the $g_i$ and the $\ell_i/g_i$
sequences has an analog in the assumption that the function used in
\cite{AggarwalACS87} to model the memory access time is polynomially
bounded.

In the cache-oblivious framework the equivalent of our optimality
theorem requires algorithms to satisfy the \emph{regularity
condition}~\cite[Lemma~6.4]{FrigoLPR12}, which requires that the
number of cache misses decreases by a constant factor when the cache
size is doubled. On the other hand, our optimality theorem gives the
best bound when the network-oblivious algorithm is $(\BT{1},p)$-wise,
that is, when the communication complexity decreases by a constant
factor when the number of processors is doubled.  Although the
regularity condition and  wiseness cannot be formalized in a similar
fashion due to the significant differences between the cache- and
network-oblivious frameworks, we observe that both assumptions require
the oblivious algorithms to react seamlessly and smoothly  to small
changes of the machine parameters.

\section{Algorithms for Key Problems}\label{sec:algorithms}

In this section, we illustrate the use of the proposed framework by
developing efficient network-oblivious algorithms for a number of
fundamental computational problems: matrix multiplication
(Subsection~\ref{sec:alg_mult}), Fast Fourier Transform
(Subsection~\ref{sec:alg_fft}), and sorting
(Subsection~\ref{sec:alg_sort}).  All of our algorithms exhibit
$\BT{1}$-optimality on the D-BSP for wide ranges of the machine
parameters. In Subsection~\ref{sec:stencil}, we also present
network-oblivious algorithms for stencil computations. 
These latter algorithms run efficiently on the D-BSP although they do
not achieve $\BT{1}$-optimality, which appears to be a hard challenge in this case.
In Subsection~\ref{sec:alg_broadcast}, we also establish a negative
result by proving that there cannot exist a network-oblivious algorithm
for broadcasting which is simultaneously $\BT{1}$-optimal on two
sufficiently different \evald\ machines.

As prescribed by our framework, the performance of the
network-oblivious algorithms on the D-BSP is derived by analyzing
their performance on the evaluation model. Optimality is assessed with
respect to classes of algorithms where the computation is not
excessively unbalanced among the processors, namely, algorithms where
an individual processor cannot perform more than a constant fraction
of the total minimum work for the problem.  For this purpose, we
exploit some recent lower bounds which rely on mild assumptions on
work distributions and strengthen previous bounds based on stronger
assumptions~\cite{ScquizzatoS14}. Finally, we want to stress that all
of our algorithms are also work optimal.

\subsection{Matrix Multiplication}\label{sec:alg_mult}
The \emph{$n$-MM problem} consists of multiplying two $\sqrt{n} \times
\sqrt{n}$ matrices, $A$ and $B$, using only semiring operations. For
convenience, we assume that $n$ is an even power of 2 (the general case
requires minor yet tedious modifications).  A result in~\cite{Kerr70}
shows that any static algorithm for the $n$-MM problem
which uses only semiring operations must compute all $n^{3/2}$
\emph{multiplicative terms}, that is the products $A[i,k]\cdot
B[k,j]$, with $0 \leq i,j,k \leq \sqrt{n}$.

Let $\mathscr{C}$ denote the class of static algorithms for the $n$-MM
problem such that any $\A \in \mathscr{C}$ for $q$ processing
elements satisfies the following properties: (i) no entry of $A$ or $B$
is initially replicated (however, the entries of $A$ and $B$ are allowed to be initially
distributed among the processing elements in an arbitrary fashion);
(ii) no processing element computes more than $n^{3/2}/\min\{q,11^3\}$
multiplicative terms; (iii) all of the foldings of $\A$ on $q'$
processing elements, $2 \leq q' < q$, also belong to $\mathscr{C}$.
The following lemma establishes a lower bound on the communication
complexity of the algorithms in $\mathscr{C}$.

\begin{lemma}\label{MMlowerbound}
The communication complexity of any $n$-MM algorithm in $\mathscr{C}$
when executed on \evald\ is $\BOM{n/p^{2/3}+\of}$.
\end{lemma}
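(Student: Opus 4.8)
The plan is to handle the two additive terms $\BOM{n/p^{2/3}}$ and $\BOM{\of}$ separately. The latter is immediate: by our blanket conventions every algorithm in $\mathscr{C}$, executed on \evald, performs at least one superstep (the last operation of each processor is a \texttt{sync}), so $\sum_i S^i_{\A}(n)\ge 1$ and hence $H_{\A}(n,p,\of)\ge\of$; in fact any nontrivial $n$-MM algorithm in $\mathscr{C}$ also needs at least one \emph{communication} superstep. So the crux is the communication term, i.e.\ showing $\sum_{i=0}^{\log p-1}F^i_{\A}(n,p)=\BOM{n/p^{2/3}}$; equivalently it suffices to take $\of=0$, writing $H:=H_{\A}(n,p,0)=\sum_s h^s_{\A}(n,p)$.

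The heart of the argument is the Loomis--Whitney inequality (used as in Hong--Kung red-blue pebbling arguments), combined with the work-balance that the class $\mathscr{C}$ is designed to guarantee; this is essentially the communication lower bound of~\cite{ScquizzatoS14}, which we invoke. Identify each multiplicative term $A[i,k]\cdot B[k,j]$ with the lattice point $(i,j,k)$. By~\cite{Kerr70} all $n^{3/2}$ of these must be computed, so if $m_r$ is the number of terms computed by $\text{P}_r$ then $\sum_r m_r\ge n^{3/2}$, while property~(ii) gives $m_r\le n^{3/2}/\min\{p,11^3\}$. Fix $\text{P}_r$ and let $a_r,b_r,c_r$ count the distinct entries of $A$, entries of $B$, and cells of $C$ that $\text{P}_r$ ever holds or touches; Loomis--Whitney yields $m_r\le\sqrt{a_rb_rc_r}$, hence $a_r+b_r+c_r\ge 3m_r^{2/3}$. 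Now bound each summand by communication plus initial data. For the inputs, property~(i) (no initial replication) means the processors hold at most $2n$ entries of $A$ and $B$ at the start, and since in every superstep $s$ processor $\text{P}_r$ sends and receives at most $h^s_{\A}(n,p)$ messages, over the whole execution it receives at most $H=\sum_s h^s_{\A}(n,p)$ messages; thus $a_r+b_r\le d_r+H$, where $d_r$ is $\text{P}_r$'s initial load and $\sum_r d_r\le 2n$. For the outputs, $\text{P}_r$ fully computes at most $m_r/\sqrt n$ cells of $C$, and every other touched cell forces a partial-sum message to or from $\text{P}_r$ (partial sums must be reduced), so $c_r\le m_r/\sqrt n+2H$. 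Finally I choose which processor to apply this to: by double counting, fewer than $2p^{2/3}$ processors have $d_r>n/p^{2/3}$ and fewer than $p^{2/3}$ compute more than $n^{3/2}/p^{2/3}$ terms, so (in the relevant parameter range) the remaining processors still account for a constant fraction of the $n^{3/2}$ terms, and one of them, say $\text{P}_\star$, computes $m_\star=\BOM{n^{3/2}/p}$ terms while $d_\star=\BO{n/p^{2/3}}$ and $m_\star=\BO{n^{3/2}/p^{2/3}}$. Then $3m_\star^{2/3}\le a_\star+b_\star+c_\star\le d_\star+m_\star/\sqrt n+3H$, i.e.\ $\BOM{n/p^{2/3}}\le\BO{n/p^{2/3}}+3H$, and a sufficiently careful choice of the two thresholds makes the left-hand $\Omega$-term dominate, giving $H=\BOM{n/p^{2/3}}$. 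Property~(iii) is used only so that the optimality class is closed under folding and the statement over $\mathscr{C}$ is well posed.

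The main obstacle is making property~(ii) do enough work uniformly in $p$, which is precisely the processor-selection step above. For $p\le 11^3$ the cap $m_r\le n^{3/2}/p$ forces near-perfect load balance and the argument reduces to the classical communication lower bound for balanced parallel matrix multiplication. For $p>11^3$ the cap $n^{3/2}/11^3$ is weak, and one must rule out the work being concentrated on a small set of processors that happen to hold a large share of $A$ and $B$ initially (and hence need little input communication); controlling this regime cleanly is exactly the ``memory-independent''-style refinement of~\cite{ScquizzatoS14}, stated there for this very class $\mathscr{C}$, which is why it is cited rather than reproved here. A secondary technical point is the reduction-tree accounting used to charge the output footprint $c_r$ to communication; since messages have constant size this only affects the hidden constants.
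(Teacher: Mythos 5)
Your proof is correct and takes essentially the same route as the paper: the paper's entire argument is to invoke~\cite[Theorem~2]{ScquizzatoS14} for the $\of=0$ case and add the trivial $\BOM{\of}$ term from the existence of at least one communication superstep, which is exactly what your proposal reduces to. Your additional Loomis--Whitney sketch is a reasonable outline of the cited result's internals, and you correctly flag (and defer to the citation for) the one genuinely delicate point, namely handling the regime $p>11^3$ where property~(ii) alone does not force balanced work.
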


\begin{proof}
The bound for $\of=0$ is proved in~\cite[Theorem~2]{ScquizzatoS14}, and it clearly
extends to the case $\of>0$. The additive $\of$ term follows since at least
one message is sent by some processing element.
\end{proof}

We now describe a static network-oblivious algorithm for the $n$-MM problem, 
which follows from the parallelization of the respective cache-oblivious
algorithm~\cite{FrigoLPR12}. Then, we prove its optimality in the evaluation
model, for wide ranges of the parameters, and in the execution model through
the optimality theorem. The algorithm is specified on \spec{n}, and requires
that the input and output matrices be evenly distributed among the $n$ VPs.
We denote with $A$, $B$ and $C$ the two input matrices and the output
matrix respectively, and with $A_{hk}$, $B_{hk}$ and $C_{hk}$, with
$0 \leq h,k \leq 1$, their four quadrants. The network-oblivious
algorithm adopts the following recursive strategy:
\begin{enumerate}
\item
Partition the VPs into eight segments $S_{hk\ell}$, with $0 \leq h, k, \ell \leq
1$, containing the same number of consecutively numbered VPs. Replicate and
distribute the inputs so that the entries of $A_{h\ell}$ and $B_{\ell k}$ are
evenly spread among the VPs in $S_{hk\ell}$.
\item
In parallel, for each $0 \leq h, k, \ell \leq 1$, compute recursively the
product $M_{hk\ell} = A_{h\ell} \cdot B_{\ell k}$ within $S_{hk\ell}$.
\item
In parallel, for each $0 \leq i,j < \sqrt{n}$, the VP responsible for
$C[i,j]$ collects $M_{hk0}[i',j']$ and $M_{hk1}[i',j']$, with
$h = \lfloor 2i/\sqrt{n}\rfloor$, $k = \lfloor 2j/\sqrt{n}\rfloor$, 
$i' = i \mod (\sqrt{n}/2)$ and $j' = j \mod (\sqrt{n}/2)$, and
computes $C[i,j] = M_{hk0}[i',j']+M_{hk1}[i',j']$.
\end{enumerate}
At the $i$-th recursion level, with $0 \leq i \leq (\log n)/3$, $8^i$
$(n/4^i)$-MM subproblems are solved by distinct \spec{n/8^i}'s formed
by distinct segments of VPs.  The recursion stops at $i = (\log n)/3$
when each VP solves sequentially an $n^{1/3}$-MM subproblem.  It is
easy to see, by unfolding the recursion, that the algorithm comprises
a constant number of $3i$-supersteps for every $0 \leq i < (\log n)/3$,
where each VP sends/receives $\BO{2^i}$ messages. In order to enforce
wiseness, without worsening the asymptotic communication complexity
and communication time exhibited by the algorithm in the evaluation
and execution machine models, we may assume that, in each $3i$-superstep,
$\text{VP}_j$ sends $2^i$ dummy messages to $\text{VP}_{j+n/2^{3i+1}}$,
for $0 \leq j < n/2^{3i+1}$.

\begin{theorem}\label{mm-omm}
The communication complexity of the above $n$-MM network-oblivious algorithm 
when executed on \evald\ is
\[
H_{\textnormal{MM}}(n,p,\of) = \BO{\dfrac{n}{p^{2/3}} + \of \log p},
\]
for every $1 < p \leq n$ and $\of \geq 0$. The algorithm is $(\BT{1},n)$-wise
and $\BT{1}$-optimal with respect to $\mathscr{C}$ on any \eval{p,\of} with
$1 < p \leq n$ and $\of = \BO{n/(p^{2/3}\log p)}$.
\end{theorem}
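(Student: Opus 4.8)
The plan is to establish the three assertions of the theorem in turn: the upper bound on $H_{\textnormal{MM}}(n,p,\of)$, the $(\BT{1},n)$-wiseness of the algorithm, and finally $\BT{1}$-optimality on \eval{p,\of} by combining the upper bound with the lower bound of Lemma~\ref{MMlowerbound}. The execution-model optimality will then follow by invoking the optimality theorem (Theorem~\ref{thm:optimality}), but the statement to be proved here concerns only the evaluation model, so I focus on that.

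First I would analyze the folded algorithm on \eval{p,\of}. From the structural description, the algorithm has a constant number of $3i$-supersteps for each $0\le i<(\log n)/3$, with each VP sending/receiving $\BO{2^i}$ messages in a $3i$-superstep; all other superstep labels are empty. Folding on $p$ processors, a $3i$-superstep with $3i\ge\log p$ becomes local computation and contributes nothing to communication, while for $3i<\log p$ the maximum number of messages per processor is at most $(n/p)$ times the per-VP bound $\BO{2^i}$, but a sharper count is needed: I would argue directly that $h^s_{\textnormal{MM}}(n,p)=\BO{2^i\cdot(n/p)/2^{?}}$ — more carefully, since a $p$-processor block corresponds to a segment of $n/p$ consecutive VPs and the $3i$-superstep communication at recursion level $i$ is confined within $\spec{n/8^i}$ sub-blocks, the degree seen by a processor is $\BO{n/(p\,2^{2i})}$ when $8^i\le p$ and $\BO{1}$-scaled otherwise; summing $F^i_{\textnormal{MM}}(n,p)=\BO{1}\cdot h$ over the $\BO{\log p}$ relevant levels gives a geometric sum dominated by its first term, yielding $\sum_i F^i_{\textnormal{MM}}(n,p)=\BO{n/p^{2/3}}$. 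The $S^i_{\textnormal{MM}}(n)\cdot\of$ contributions sum to $\BO{\of\log p}$ since there are $\BO{\log p}$ nonempty superstep labels each with $\BO{1}$ supersteps. This gives the claimed $H_{\textnormal{MM}}(n,p,\of)=\BO{n/p^{2/3}+\of\log p}$. The main obstacle here is the bookkeeping of exactly how the per-VP message count $\BO{2^i}$ translates, under folding, into the per-processor degree; I expect the dummy-message convention (VP$_j$ sending $2^i$ messages to VP$_{j+n/2^{3i+1}}$) to be precisely what makes this count clean and tight, both ensuring the degree is $\BT{n/(p\,2^{2i})}$ rather than merely $\BO{\cdot}$ and forcing the inter-segment crossing that wiseness requires.

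Next, for $(\BT{1},n)$-wiseness: by Definition~\ref{def:wiseness} I must show that for the folding on \eval{2^j,0}, $\sum_{i=0}^{j-1}F^i_{\textnormal{MM}}(n,2^j)\ge\alpha\,(p/2^j)\sum_{i=0}^{j-1}F^i_{\textnormal{MM}}(n,p)$ with $\alpha=\BT{1}$ and $p=n$. The idea is exactly the worked example following Definition~\ref{def:wiseness}: in each $3i$-superstep, the dummy messages guarantee a segment of $n/2^{3i+1}$ consecutively numbered VPs each sending $2^i$ messages to VPs outside the segment, so this communication is not absorbed when folding onto any $\spec{2^j}$ with $2^j$ larger than the segment granularity — it still crosses a processor boundary. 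Hence the lower-bound side of wiseness matches the geometric-sum upper bound up to a constant factor, giving $\alpha=\BT{1}$. Some care is needed to handle the superstep labels that are multiples of three versus the generic index $i$ in the definition (the empty labels contribute $0$ on both sides and do not affect the inequality), and to check the boundary cases $8^i$ near $p$.

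Finally, $\BT{1}$-optimality on \eval{p,\of} with respect to $\mathscr{C}$: combine $H_{\textnormal{MM}}(n,p,\of)=\BO{n/p^{2/3}+\of\log p}$ with the lower bound $H_{\B}(n,p,\of)=\BOM{n/p^{2/3}+\of}$ from Lemma~\ref{MMlowerbound} valid for every $\B\in\mathscr{C}$. In the stated regime $\of=\BO{n/(p^{2/3}\log p)}$ we have $\of\log p=\BO{n/p^{2/3}}$, so the upper bound collapses to $\BO{n/p^{2/3}}$, which matches the $\BOM{n/p^{2/3}}$ lower bound; thus the ratio $H_{\textnormal{MM}}/H_{\B}$ is $\BT{1}$, i.e.\ the algorithm is $\BT{1}$-optimal. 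One should also note that the network-oblivious algorithm as specified does satisfy the membership conditions of $\mathscr{C}$ (static; no initial replication of input entries — replication only occurs in intermediate steps; and the per-processor multiplicative-term count is balanced because each of the $p$ processors handles an equal $n^{3/2}/p$ share when $p\le n$, which is at most $n^{3/2}/\min\{p,11^3\}$), so that the optimality comparison is against the right class. I do not expect this last part to present real difficulty; the substantive work is the superstep-by-superstep degree analysis under folding described above.
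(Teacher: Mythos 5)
Your overall strategy is exactly the paper's: bound the folded degree superstep by superstep, note that wiseness follows from the dummy messages (as in the worked example after Definition~\ref{def:wiseness}), check membership in $\mathscr{C}$, and obtain $\BT{1}$-optimality by comparing against Lemma~\ref{MMlowerbound} in the regime $\of\log p=\BO{n/p^{2/3}}$. Those parts are fine. However, the central quantitative step --- the computation of $\sum_i F^i_{\textnormal{MM}}(n,p)$ --- is wrong as written.

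No ``sharper count'' is needed: the naive folding bound is already the right one. In a $3i$-superstep each VP sends/receives $\BO{2^i}$ messages, and communication is confined to segments of $n/8^i$ VPs; when $8^i\le p$ such a segment spans $p/8^i$ processors and carries $\BO{n/4^i}$ messages in total, so each processor has degree $\BT{(n/4^i)/(p/8^i)}=\BT{n2^i/p}$, which is just $(n/p)\cdot\BO{2^i}$. This quantity is \emph{increasing} in $i$, and the geometric sum $\sum_{i=0}^{(\log p)/3} n2^i/p$ is dominated by its \emph{last} term $i=(\log p)/3$, giving $n\cdot2^{(\log p)/3}/p=n/p^{2/3}$; that is precisely where the exponent $2/3$ comes from. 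Your claimed degree $\BO{n/(p\,2^{2i})}$ is decreasing in $i$, and a geometric sum dominated by its first term would yield $\BO{n/p}$, not $\BO{n/p^{2/3}}$ --- which would contradict the $\BOM{n/p^{2/3}}$ lower bound of Lemma~\ref{MMlowerbound} that your own optimality argument relies on. So the arithmetic in your degree analysis is internally inconsistent and the stated bound is reached only by assertion; replacing it with the straightforward count $F^i_{\textnormal{MM}}(n,p)=\BO{n2^i/p}$ for $0\le i\le(\log p)/3$ (and $0$ for larger labels) repairs the proof and makes it coincide with the paper's.
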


\begin{proof}
From the previous discussion, we conclude that
\[
H_{\textnormal{MM}}(n,p,\of) = \BO{\sum_{i=0}^{(\log p)/3} \left(\frac{n2^i}{p} + \of\right)}
=
\BO{\dfrac{n}{p^{2/3}} + \of \log p}.
\]
As anticipated, the wiseness is guaranteed by the dummy messages
introduced in each superstep. Finally, it is easy to see that the
algorithm satisfies the three requirements for belonging to
$\mathscr{C}$, hence its optimality follows from Lemma~\ref{MMlowerbound}.
\end{proof}

\begin{corollary}\label{lem:opt_nmm}
The above $n$-MM network-oblivious algorithm is $\BT{1}$-optimal with
respect to $\mathscr{C}$ on any \execd\ machine with $1 < p \leq n$,
non-increasing $g_i$'s and $\ell_i/g_i$'s, and $\ell_0/g_0 = \BO{n/p}$.
\end{corollary}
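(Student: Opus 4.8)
The plan is to derive the corollary as a direct application of the Optimality Theorem (Theorem~\ref{thm:optimality}), plugging in the properties of the $n$-MM network-oblivious algorithm $\A$ established in Theorem~\ref{mm-omm}. By Theorem~\ref{mm-omm} we already know that $\A\in\mathscr{C}$ is network-oblivious and $(\BT{1},n)$-wise, so we may invoke Theorem~\ref{thm:optimality} with $p^\star=n$ (which is a power of two, since $n$ is an even power of $2$ in this subsection) and $\alpha=\BT{1}$. What is left is (i)~to choose the auxiliary vectors $\of^m,\of^M$ so that the $\beta$-optimality hypothesis of the theorem holds with $\beta=\BT{1}$, and (ii)~to check that the conditions the theorem then imposes on $\bfm{g}$ and $\bfm{\ell}$ reduce exactly to the three hypotheses in the statement of the corollary.

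For step~(i), I would set $\of^m_{j-1}=0$ and $\of^M_{j-1}=\BT{n/((2^j)^{2/3}\log 2^j)}=\BT{n/(2^{2j/3}j)}$ for $1\le j\le\log n$, taking the implicit constant equal to the one hidden in the range $\of=\BO{n/(p^{2/3}\log p)}$ of Theorem~\ref{mm-omm}. Then, applying Theorem~\ref{mm-omm} with $p=2^j$, the algorithm $\A$ is $\BT{1}$-optimal on \eval{2^j,\of} with respect to $\mathscr{C}$ for every $\of\in[\of^m_{j-1},\of^M_{j-1}]$ and every $1\le j\le\log n$ --- precisely the hypothesis needed. Since $\of^m\equiv 0$, the quantity $\psi^m_p:=\max_{1\le k\le\log p}\{\of^m_{k-1}2^k/p\}$ (in the notation of the proof of Theorem~\ref{thm:optimality}) equals $0$, so the lower-endpoint constraint $\psi^m_p\le\ell_i/g_i$ is vacuous and the admissible range for the ratios is automatically nonempty (the footnote's compatibility requirement holds trivially).

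For step~(ii), the only remaining constraint is $\ell_i/g_i\le\psi^M_p:=\min_{1\le k\le\log p}\{\of^M_{k-1}2^k/p\}$ for $0\le i<\log p$. Substituting the chosen $\of^M$ gives $\psi^M_p=\BT{(n/p)\cdot\min_{1\le k\le\log p}\{2^{k/3}/k\}}$, and the elementary observation that $k\mapsto 2^{k/3}/k$ lies between a positive absolute constant (its minimum over the integers $k\ge1$ is attained near $k=4$) and $2^{1/3}$ (its value at $k=1$) shows that $\psi^M_p=\BT{n/p}$. Because the $\ell_i/g_i$ are non-increasing --- one of the corollary's hypotheses --- this whole family of inequalities is equivalent to $\ell_0/g_0=\BO{n/p}$, which is the last hypothesis of the corollary; the monotonicity of the $g_i$'s is the other hypothesis, matching the first bullet of Theorem~\ref{thm:optimality}. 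The theorem then yields that $\A$ is $(\alpha\beta/(1+\alpha))$-optimal, i.e.\ $\BT{1}$-optimal, on \execd\ with respect to $\mathscr{C}$ for every power of two $p\le n$, which is exactly the claim.

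The one slightly delicate point --- and the place I would be most careful --- is the bookkeeping of the constants: the constant hidden in the corollary's $\ell_0/g_0=\BO{n/p}$, the one in the $\of$-range of Theorem~\ref{mm-omm}, the wiseness parameter $\alpha$, and the evaluation-model optimality factor $\beta$ must be chosen consistently so that the final factor $\alpha\beta/(1+\alpha)$ is genuinely $\BT{1}$; in particular one should take the $\of^M$ constant large enough relative to the $\ell_0/g_0$ constant, which merely shrinks $\beta$ by a bounded (still constant) amount. The small arithmetic estimate $\min_{1\le k\le\log p}\{2^{k/3}/k\}=\BT{1}$ is the only genuinely new computation; everything else is already encapsulated in the proof of Theorem~\ref{thm:optimality}.
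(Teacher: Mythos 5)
Your proposal is correct and follows exactly the paper's route: the paper likewise proves the corollary by invoking Theorem~\ref{thm:optimality} with $p^\star=n$, $\of_i^m=0$, and $\of_i^M=\BT{n/((i+1)2^{2i/3})}$, which coincides (up to the constant $2^{2/3}$ absorbed in the $\Theta$) with your choice. Your additional check that $\min_{1\leq k\leq\log p}\{2^{k/3}/k\}=\BT{1}$, so that the constraint collapses to $\ell_0/g_0=\BO{n/p}$, is exactly the arithmetic the paper leaves implicit.
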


\begin{proof}
Since the network-oblivious algorithm is $(\BT{1},n)$-wise and belongs
to $\mathscr{C}$, the corollary follows by plugging $p^\star = n$,
$\of_i^m = 0$, and $\of_i^M = \BT{n/((i+1)2^{2i/3})}$ into
Theorem~\ref{thm:optimality}.
\end{proof}

\subsubsection{Space-Efficient Matrix Multiplication}
Observe that the network-oblivious algorithm described above incurs an
$\BO{n^{1/3}}$ memory blow-up per VP. As described below, the
recursive strategy can be modified so to incur only a constant memory
blow-up, at the expense of an increased communication complexity. The
resulting network-oblivious algorithm turns out to be $\BT{1}$-optimal
with respect to the class of algorithms featuring constant memory
blow-up.

We assume, as before, that the entries of $A,B$ and $C$ be evenly
distributed among the VPs. The VPs are (recursively) divided into four
segments which solve the eight $(n/4)$-MM subproblems in two rounds:
in the first round, the segments compute $A_{00} \cdot B_{00}$,
$A_{01} \cdot B_{11}$, $A_{11} \cdot B_{10}$ and $A_{10} \cdot B_{01}$
(one product per segment), while in the second round they compute
$A_{01} \cdot B_{10}$, $A_{00} \cdot B_{01}$, $A_{10} \cdot B_{00}$ and
$A_{11} \cdot B_{11}$ (again, one product per segment).  The recursion
ends when each VP solves sequentially an $1$-MM subproblem. By
unfolding the recursion, we get that for every $0 \leq i < \log n /2$,
the algorithm executes $\BT{2^i}$ $2i$-supersteps where each VP
sends/receives $\BT{1}$ messages. At any time each VP contains only
$\BO{1}$ matrix entries, but the recursion requires it to handle a
stack of $\BO{\log n}$ entries. However, it is easy to see that only a
constant number of bits are needed for each stack entry, hence, under
the natural assumption that each matrix entry occupies a constant
number of $\BT{\log n}$-bit words, the entire stack at each VP
requires storage proportional to $\BO{1}$ matrix entries. Therefore,
the algorithm incurs only a constant memory blow-up. As before, the
algorithm can be easily made $(\BT{1},n)$-wise by adding suitable
dummy messages and, when executed on \evald, its communication
complexity becomes $\BO{n/\sqrt{p} + \of \sqrt{p}}$.

Let $\mathscr{C}'$ denote the class of static algorithms for the
$n$-MM problem such that any $\A \in \mathscr{C}'$ for $q$ processing
elements satisfies the following properties: (i) the local storage
required at each processing element is $\BO{n/q}$; (ii) all of the
foldings of $\A$ on $q'$ processing elements, $2 \leq q' < q$, also
belong to $\mathscr{C}'$. Since it is proved in~\cite{IronyTT04} that
any $n$-MM algorithm in $\mathscr{C}'$ when running on \eval{p,0} must
exhibit an $\BOM{n/\sqrt{p}}$ communication complexity, the above
network-oblivious algorithm is $\BT{1}$-optimal with respect to
$\mathscr{C}'$ on any \eval{p,\of} with $1 < p \leq n$ and $\of =
\BO{n/p}$.  Consequently, Theorem~\ref{thm:optimality} yields
optimality of the algorithm on any \execd\ machine with $1 < p \leq n$,
non-increasing $g_i$'s and $\ell_i/g_i$'s, and $\ell_0/g_0 =
\BO{n/p}$.

\subsection{Fast Fourier Transform}\label{sec:alg_fft}
The \emph{$n$-FFT problem} consists of computing the Discrete Fourier
Transform of $n$ values using the $n$-input FFT DAG, where a vertex is
a pair $\langle w,l \rangle$, with $0 \leq w <n$ and $0 \leq l < \log n$,
and there exists an arc between two vertices $\langle w,l \rangle$ and
$\langle w',l' \rangle$ if $l'=l+1$, and either $w$ and $w'$ are
identical or their binary representations differ exactly in the
$l$-th bit~\cite{Leighton92}. 

Let $\mathscr{C}$ denote the class of static algorithms for the
$n$-FFT problem such that any $\A \in \mathscr{C}$ for $q$ processing
elements satisfies the following properties: (i) each DAG node is
evaluated exactly once (i.e., recomputation is not allowed); (ii)
no input value is initially replicated; (iii) no processing element
computes more than $\epsilon n \log n$ DAG nodes, for some constant
$0 < \epsilon < 1$; (iv) all of the foldings of $\A$ on $q'$ processing
elements, $2 \leq q' < q$, also belong to $\mathscr{C}$. Note that,
as in the preceding subsection, the class of algorithms we are considering
makes no assumptions on the input and output distributions.
we make no assumptions on the input and output distributions. The
following lemma establishes a lower bound on the communication complexity
of the algorithms in $\mathscr{C}$.

\begin{lemma}\label{FFTlowerbound}
The communication complexity of any $n$-FFT algorithm in $\mathscr{C}$
when executed on \evald\ is $\BOM{(n \log n)/(p \log(n/p)) + \of}$.
\end{lemma}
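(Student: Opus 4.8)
The plan is to split the bound into its two additive pieces and to obtain the main term, $\BOM{(n\log n)/(p\log(n/p))}$, by reducing to the case $\of=0$ and invoking the now-standard communication lower bound for the FFT DAG under precisely the kind of ``mild'' work-balance hypotheses that have been baked into $\mathscr{C}$. This mirrors the treatment of the $n$-MM lemma: the content is in a cited DAG lower bound, and everything else is bookkeeping.

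First, the additive $\of$ term. Every processing element's last operation is a \texttt{sync}, and on an $M(p,\of)$-machine every superstep carries a label in $[0,\log p)$; hence at least one superstep is counted in Equation~\ref{eq:commcompNO}, so
\[
H_{\A}(n,p,\of)\ \geq\ \of\sum_{i=0}^{\log p-1} S^i_{\A}(n)\ \geq\ \of .
\]
(One can say slightly more when $p>1$: since no input is replicated (property (ii)) and, by property (iii), no single processing element evaluates all $n\log n$ DAG nodes, while the level-$\log n$ nodes depend on every input, at least one message must be exchanged; thus in fact $H_{\A}(n,p,\of)\geq 1+\of$. But $\of$ alone suffices for the $+\of$ part of the claim.)

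Second, and this is the substantive part, the term $\BOM{(n\log n)/(p\log(n/p))}$ for $\of=0$. I would invoke the FFT analog of the distributed-memory lower bound of~\cite{ScquizzatoS14} that was already used for $n$-MM above: properties (i)–(iii) of $\mathscr{C}$ — no recomputation, no initial replication of inputs, and at most an $\epsilon$-fraction of the $n\log n$ node evaluations at any processing element — are exactly the hypotheses under which that bound holds, and it extends from $\of=0$ to $\of>0$ since $H_{\A}(n,p,\of)$ is non-decreasing in $\of$. Combining this with the first piece via $H_{\A}\geq\max\{\BOM{(n\log n)/(p\log(n/p))},\,\of\}=\BOM{(n\log n)/(p\log(n/p))+\of}$ completes the argument.

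If instead a self-contained proof were wanted, the route is a Hong--Kung / red-blue pebbling style partition argument. By property (i) the $n\log n$ non-input nodes are partitioned among the processing elements, so some processing element $P$ evaluates $\Omega((n\log n)/p)$ of them; one breaks $P$'s execution into supersteps, observes that in a superstep of degree $h^s$ only $\BO{h^s}$ distinct externally produced values can enter $P$ (its owned inputs being a fixed set, by (ii)), and then bounds how many FFT-DAG nodes $P$ can legitimately evaluate ``inside'' that superstep in terms of this boundary. \textbf{The main obstacle is exactly this last combinatorial estimate}: it rests on the FFT-specific fact that a sub-DAG of the $n$-input butterfly whose boundary (dominator) set of values has size $b$ contains only $\BO{b\log(n/p)}$ nodes — the superconcentrator/bisection flavour of the butterfly is what forces the communication — and, crucially, getting $\log(n/p)$ rather than $\log n$ in the denominator requires summing the per-superstep estimates carefully over all of $P$'s supersteps, exploiting that the relevant window is governed by the per-processor share $n/p$, rather than bounding each superstep in isolation.
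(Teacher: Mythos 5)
Your proposal is correct and follows essentially the same route as the paper: the main term is obtained by citing the FFT communication lower bound of \cite{ScquizzatoS14} (the paper invokes Theorem~11 there) for $\of=0$ and noting monotonicity in $\of$, while the additive $\of$ term comes from the existence of at least one superstep. Your sketched self-contained dominator-set argument is offered only as an alternative and is not needed; the paper does not pursue it either.
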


\begin{proof}
The bound for $\of=0$ is proved in~\cite[Theorem~11]{ScquizzatoS14}, and it clearly
extends to the case $\of>0$. The additive $\of$ term follows since at least
one message is sent by some processing element.
\end{proof}

We now describe a static network-oblivious algorithm for the $n$-FFT
problem, and then prove its optimality in the evaluation and execution
models. The algorithm is specified on \spec{n}, and exploits the well-known
decomposition of the FFT DAG into two sets of $\sqrt{n}$-input FFT subDAGs,
with each set containing $\sqrt{n}$ such subDAGs~\cite{AggarwalCS87}.
For simplicity, in order to ensure integrality of the quantities involved,
we assume $n = 2^{2^k}$ for some integer $k \geq 0$. We assume that at
the beginning the $n$ inputs are evenly distributed among the $n$ VPs.
In parallel, each of the $\sqrt{n}$ segments of $\sqrt{n}$ consecutively
numbered VPs computes recursively the assigned subDAG. Then, the outputs
of the first set of subDAGs are permuted in a 0-superstep so to distribute
the inputs of each subDAGs of the second set among the VPs of a distinct
segment. The permutation pattern is equivalent to the transposition of a
$\sqrt{n}\times \sqrt{n}$ matrix. Finally, each segment computes 
recursively the assigned subDAG.

At the $i$-th recursion level, with $i \geq 0$, $n^{1-1/2^i}$
$n^{1/2^i}$-FFT subproblems are solved by $n^{1-1/2^i}$ \spec{n^{1/2^i}}
models formed by distinct segments of VPs.  The recurrence stops at
$i = \log \log n$ when each segment of two VPs computes a 2-input subDAG.
It is easy to see, by unfolding the recursion, that the algorithm comprises
$\BO{2^i}$ supersteps with label $(1-1/2^i)\log n$, for each $0 \leq i <
\log \log n$, where each VP sends/receives $\BO{1}$ messages.  As before,
in order to enforce wiseness without affecting the algorithm's asymptotic
performance, we assume that in each $(1-1/2^i)\log n$-superstep,
$\text{VP}_j$ sends a dummy message to $\text{VP}_{j+n^{1/2^i}/2}$,
for each $0 \leq j < n^{1/2^i}/2$.

\begin{theorem}\label{fft-ub}
The communication complexity of the above $n$-FFT network-oblivious
algorithm when executed on \eval{p,\of} is
\[
H_{\textnormal{FFT}}(n,p,\of) =
\BO{\left(\dfrac{n}{p}+\of\right)\dfrac{\log n}{\log(n/p)}},
\]
for every $1 < p \leq n$ and $\of \geq 0$. The algorithm is $(\BT{1},n)$-wise
and $\BT{1}$-optimal with respect to $\mathscr{C}$ on any \eval{p,\of} with
$1 < p \leq n$ and $\of = \BO{n/p}$.
\end{theorem}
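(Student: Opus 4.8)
The plan is to follow the same three-part template used for Theorem~\ref{mm-omm}: first derive the closed form for $H_{\textnormal{FFT}}(n,p,\of)$ by unfolding the recursion and folding onto \eval{p,\of}; then establish $(\BT{1},n)$-wiseness from the dummy messages; and finally conclude $\BT{1}$-optimality by matching the upper bound against the lower bound of Lemma~\ref{FFTlowerbound}, after verifying that the algorithm and all its foldings lie in $\mathscr{C}$.

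\emph{Communication complexity.} I would start from the structural facts already recorded: unfolding the recursion, for each $0\le m<\log\log n$ the algorithm performs $\BT{2^m}$ supersteps with label $\ell_m=(1-1/2^m)\log n$, and on \spec{n} each VP sends/receives $\BT{1}$ messages in any such superstep (one message for the transpose, one for the dummy). Folding onto \eval{p,\of}, a label-$\ell$ superstep contributes to $H$ only when $\ell<\log p$; since $\ell_m<\log p$ amounts to $2^m$ being below $\BT{\log n/\log(n/p)}$, only those levels survive, and in each of their supersteps a processor, which now carries $n/p$ VPs, sends/receives $\BO{n/p}$ messages. Hence $F^{\ell_m}_{\textnormal{FFT}}(n,p)=\BO{2^m\,n/p}$ and $S^{\ell_m}_{\textnormal{FFT}}(n)=\BT{2^m}$ for the surviving labels, and the superstep counts form a geometric series summing to $\BO{\log n/\log(n/p)}$; multiplying by the per-superstep cost $\BO{n/p+\of}$ yields the claimed bound for every $1<p\le n$ and $\of\ge 0$.

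\emph{Wiseness.} This is where the dummy messages do the work, and it is the step I expect to require the most care. Fix $1\le j\le\log n$ and a level $m$ with $\ell_m<j$; then $j\ge\ell_m+1$, which guarantees that the dummy displacement $n^{1/2^m}/2=2^{(\log n)/2^m-1}$ is an exact multiple of the processor block size $n/2^j$ of \eval{2^{j},0}. Consequently, in each of the $\BT{2^m}$ label-$\ell_m$ supersteps, every one of the $n^{1/2^m}/2$ dummy-sending VPs sends its message to a VP that folding places in a \emph{different} processor; these senders occupy $2^{j-1-\ell_m}\ge 1$ full processors, so some processor sends $\BOM{n/2^j}$ messages, i.e.\ $F^{\ell_m}_{\textnormal{FFT}}(n,2^j)=\BOM{2^m\,n/2^j}$, while $F^{\ell_m}_{\textnormal{FFT}}(n,n)=\BT{2^m}$. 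Summing over the levels $m$ with $\ell_m<j$ (the labels not of the form $\ell_m$ contribute zero on both sides) gives $\sum_{i=0}^{j-1}F^i_{\textnormal{FFT}}(n,2^j)\ge\alpha\,(n/2^j)\sum_{i=0}^{j-1}F^i_{\textnormal{FFT}}(n,n)$ for an absolute constant $\alpha>0$, which is exactly $(\alpha,n)$-wiseness. The delicate points are: checking that the displacement is a multiple of the block size for \emph{every} admissible $j$ (this is precisely what $j\ge\ell_m+1$ buys), and making sure the constants hidden in the $\BT{2^m}$ superstep counts and the $\BT{1}$ per-VP degrees do not depend on $n$, $p$, $j$, so that $\alpha$ is truly a constant.

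\emph{Optimality.} Finally I would verify membership in $\mathscr{C}$: every FFT-DAG node is evaluated exactly once, no input value is replicated (one input per VP), and the work is evenly balanced, so on $p\ge 2$ processors each performs a $1/p\le 1/2$ fraction of the $\BT{n\log n}$ DAG nodes, meeting requirement~(iii) with e.g.\ $\epsilon=1/2$; requirement~(iv) is then immediate, since the same properties hold for every folding. For $\of=\BO{n/p}$ the formula proved above simplifies to $\BO{n\log n/(p\log(n/p))}$, which matches the lower bound $\BOM{n\log n/(p\log(n/p))+\of}$ of Lemma~\ref{FFTlowerbound} (the $\of$ term absorbed since $\of=\BO{n/p}=\BO{n\log n/(p\log(n/p))}$); hence the algorithm is $\BT{1}$-optimal with respect to $\mathscr{C}$ on every \eval{p,\of} with $1<p\le n$ and $\of=\BO{n/p}$. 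With the wiseness alignment settled, everything else is routine bookkeeping.
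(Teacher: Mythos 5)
Your proposal is correct and follows essentially the same route as the paper's (very terse) proof: sum the geometric series of superstep counts over the recursion levels that survive the folding, invoke the dummy messages for wiseness, and match against Lemma~\ref{FFTlowerbound} for optimality. Your explicit verification that the dummy displacement $n^{1/2^m}/2$ is a multiple of the processor block size $n/2^j$ exactly when $\ell_m<j$ is a correct and welcome elaboration of the wiseness claim that the paper only asserts.
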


\begin{proof}
From the previous discussion, we conclude that
\[
H_{\textnormal{FFT}}(n,p,\of) 
= \BO{\sum_{i=0}^{\log (\log n/\log(n/p))} 2^i\left(\frac{n}{p} + \of\right)}
= \BO{\left(\dfrac{n}{p}+\of\right)\dfrac{\log n}{\log (n/p)}}.
\]
The wiseness is ensured by the dummy messages, and since the algorithm
satisfies the requirements for belonging to $\mathscr{C}$, its
optimality follows from Lemma~\ref{FFTlowerbound}.
\end{proof}

We now apply Theorem~\ref{thm:optimality} to show that the
network-oblivious algorithm is $\BT{1}$-optimal on the D-BSP for wide
ranges of the machine parameters.

\begin{corollary}
The above $n$-FFT network-oblivious algorithm is $\BT{1}$-optimal with
respect to $\mathscr{C}$ on any \execd\ machine with $1 < p \leq n$,
non-increasing $g_i$'s and $\ell_i/g_i$'s, and $\ell_0/g_0 = \BO{n/p}$.
\end{corollary}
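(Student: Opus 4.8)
The plan is to obtain this corollary from the optimality theorem (Theorem~\ref{thm:optimality}) by the same instantiation used in Corollary~\ref{lem:opt_nmm} for matrix multiplication. Theorem~\ref{fft-ub} already hands us the two facts the optimality theorem consumes: the $n$-FFT network-oblivious algorithm is $(\BT{1},n)$-wise, and it is $\beta$-optimal with $\beta=\BT{1}$ on \eval{2^{j},\of}, for every $1<2^{j}\le n$ and every $\of$ in a suitable range, with respect to a class $\mathscr{C}$ that is closed under folding by property~(iv) of its definition. So the only work left is to choose the vectors $\of^{m}$ and $\of^{M}$ that witness this range, and then to read off the constraints the theorem places on $\bfm{g}$ and $\bfm{\ell}$.

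Concretely, I would take $p^{\star}=n$, $\of^{m}_{i}=0$ for $0\le i<\log n$, and $\of^{M}_{i}=\BT{n/2^{i+1}}$ --- the largest latency for which Theorem~\ref{fft-ub} guarantees $\BT{1}$-optimality on \eval{2^{i+1},\of}, up to the constant hidden in its hypothesis $\of=\BO{n/p}$. With this choice the two quantities appearing in Theorem~\ref{thm:optimality} that delimit the admissible range of $\ell_{i}/g_{i}$ become $\psi^{m}_{p}=\max_{1\le k\le\log p}\{0\cdot 2^{k}/p\}=0$ and $\psi^{M}_{p}=\min_{1\le k\le\log p}\{\of^{M}_{k-1}\,2^{k}/p\}=\BT{n/p}$, since each term $\of^{M}_{k-1}2^{k}/p=\BT{(n/2^{k})\cdot 2^{k}/p}=\BT{n/p}$ up to constants. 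Hence, for a non-increasing sequence $\ell_{i}/g_{i}$, the theorem's second condition collapses to the single requirement $\ell_{0}/g_{0}=\BO{n/p}$, exactly as stated, while its first condition is precisely the assumed monotonicity of the $g_{i}$'s and the $\ell_{i}/g_{i}$'s. The theorem then yields $\alpha\beta/(1+\alpha)$-optimality on \execd, and since $\alpha=\BT{1}$ and $\beta=\BT{1}$ this is $\BT{1}$-optimality, for every power of two $p\le n$.

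There is no genuinely hard step here; the argument is a verbatim application of the optimality theorem, and the only delicate point is bookkeeping. One must pick the constant in $\of^{M}_{i}$ small enough that the $\BT{1}$-optimality of Theorem~\ref{fft-ub} really holds over the whole interval $[\of^{m}_{j-1},\of^{M}_{j-1}]$, and one should note that the nonemptiness requirement $\psi^{m}_{p}\le\psi^{M}_{p}$ from the theorem's footnote is automatic here because $\psi^{m}_{p}=0$. Finally, it is worth recording explicitly that the foldings of the network-oblivious algorithm onto $2\le q'<n$ processing elements remain in $\mathscr{C}$ --- immediate from property~(iv) --- so that $\BT{1}$-optimality is indeed being claimed against the intended class of algorithms.
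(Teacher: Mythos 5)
Your proposal is correct and is essentially the paper's own proof: the paper likewise obtains the corollary by plugging $p^\star = n$, $\of_i^m = 0$, and $\of_i^M = \BT{n/2^i}$ (the same choice as yours up to a constant factor) into Theorem~\ref{thm:optimality}, using the $(\BT{1},n)$-wiseness and $\BT{1}$-optimality on the evaluation model established in Theorem~\ref{fft-ub}. Your additional bookkeeping about $\psi^m_p$, $\psi^M_p$, and closure of $\mathscr{C}$ under folding is accurate and merely spells out what the paper leaves implicit.
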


\begin{proof}
Since the network-oblivious algorithm is $(\BT{1},n)$-wise and belongs
to $\mathscr{C}$, we get the claim by plugging $p^\star = n$, $\of_i^m = 0$,
and $\of_i^M = \BT{n/2^i}$ in Theorem~\ref{thm:optimality}.
\end{proof}

We observe that although we described the network-oblivious algorithm
assuming $n = 2^{2^k}$, in order to ensure integrality of the quantities
involved, the above results can be generalized to the case of $n$
arbitrary power of two. In this case, the FFT DAG is recursively
decomposed into a set of $2^{\lfloor \log \sqrt{n} \rfloor}$-input
FFT subDAGs and a set of $n/2^{\lfloor \log \sqrt{n} \rfloor}$-input
FFT subDAGs. The optimality of the resulting algorithm in both the
evaluation and execution machine models can be proved in a similar
fashion as before.

\subsection{Sorting}\label{sec:alg_sort}
The \emph{$n$-sort problem} requires to label $n$ (distinct) input
keys with their ranks, using only comparisons, where the \emph{rank}
of a key is the number of smaller keys in the input sequence.

Let $\mathscr{C}$ denote the class of static algorithms for the
$n$-sort problem such that any $\A \in \mathscr{C}$ for $q$
processing elements satisfies the following properties: (i) initially,
no input key is replicated and, during the course of the algorithm,
only a constant number of copies per key are allowed at any time; (ii) no
processing element performs more than $\epsilon n \log n$ comparisons,
for an arbitrary constant $0<\epsilon<1$; (iii) all of the foldings of
$\A$ on $q'$ processing elements, $2 \leq q' < q$, also belong to
$\mathscr{C}$.  We make no assumptions on how the keys are distributed 
among the processing elements at the beginning and at the end
of the algorithm. The following lemma establishes a lower bound
on the communication complexity of the algorithms in $\mathscr{C}$.

\begin{lemma}\label{Sortlowerbound}
The communication complexity of any $n$-sort algorithm in $\mathscr{C}$
when executed on \evald\ is $\BOM{(n \log n)/(p \log(n/p))+\of}$.
\end{lemma}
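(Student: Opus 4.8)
Lemma~\ref{Sortlowerbound} — communication complexity lower bound for $n$-sort in $\mathscr{C}$ is $\BOM{(n\log n)/(p\log(n/p)) + \of}$.

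The plan is to mirror the structure used for Lemma~\ref{MMlowerbound} and Lemma~\ref{FFTlowerbound}: reduce the claim to a known bandwidth-type lower bound for the case $\of = 0$, invoke the recent bound of~\cite{ScquizzatoS14}, and then append the additive latency term.

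First I would handle the case $\of = 0$. In this case $H_{\A}(n,p,0)$ is exactly $\sum_{i=0}^{\log p - 1} F^i_{\A}(n,p)$, i.e.\ the total worst-case per-processor message count summed over all supersteps — precisely the BSP communication volume for which~\cite{ScquizzatoS14} proves an $\BOM{(n \log n)/(p \log(n/p))}$ lower bound for comparison-based sorting, under exactly the mild hypotheses encoded in the definition of $\mathscr{C}$: at most a constant number of copies of each key at any time, and no processing element performing more than $\epsilon n \log n$ comparisons. I would cite the corresponding theorem of~\cite{ScquizzatoS14} here. Second, the extension to $\of > 0$ is immediate, since the term $\sum_i S^i_{\A}(n)\,\of$ in Equation~\ref{eq:commcompNO} is non-negative, so $H_{\A}(n,p,\of) \geq H_{\A}(n,p,0)$. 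For the additive $\of$, observe that any $\A \in \mathscr{C}$ solving $n$-sort on $p > 1$ processing elements must exchange at least one message: a single processing element cannot hold and rank all keys by itself for $n$ large, as that would entail more than $\epsilon n \log n$ comparisons. Hence $\sum_i S^i_{\A}(n) \geq 1$ and the superstep charge contributes at least $\of$. Taking the maximum of the two bounds (equivalently, their sum up to a constant factor) yields $H_{\A}(n,p,\of) = \BOM{(n \log n)/(p \log(n/p)) + \of}$.

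The one place where an honest proof must do more than invoke a black box is checking that the lower bound of~\cite{ScquizzatoS14} applies verbatim under the copy-and-work restrictions of $\mathscr{C}$, rather than under a stricter model (no replication at all, or a fixed input distribution). Since the present paper explicitly relies on these recent, weaker assumptions — and in particular makes no assumption on how the keys are distributed initially or finally — I expect this to go through directly, but it is the step I would verify most carefully.
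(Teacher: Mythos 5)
Your proposal matches the paper's proof: the paper likewise cites \cite[Theorem~8]{ScquizzatoS14} for the $\of=0$ case, observes that the bound trivially extends to $\of>0$, and justifies the additive $\of$ term by noting that at least one message must be sent by some processing element. Your extra remark explaining \emph{why} at least one message is needed (via the $\epsilon n\log n$ work bound in the definition of $\mathscr{C}$) is a harmless elaboration of the same argument.
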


\begin{proof}
The bound for $\of=0$ is proved in~\cite[Theorem~8]{ScquizzatoS14}, and it clearly
extends to the case $\of>0$. The additive $\of$ term follows since at least
one message is sent by some processing element.
\end{proof}

We now present a static network-oblivious algorithm for the $n$-sort
problem, and then prove its optimality in the evaluation and execution
models. The algorithm implements a recursive version of the
\emph{Columnsort} strategy, as described in~\cite{Leighton85}.
Consider the $n$ input keys as an $r \times s$ matrix, with $r\cdot
s=n$ and $r\geq s^2$. Columnsort is
organized into seven \emph{phases} numbered from 1 to 7. During Phases
1, 3, 5 and 7 the keys in each column are sorted recursively (in Phase
5 adjacent columns are sorted in reverse order). During Phases 2, 4, 6 
and 8 the keys of the matrix are permuted: in Phase 2 (resp., 4) a transposition (resp.,
diagonalizing permutation~\cite{Leighton85}) of the $r\times s$ matrix is performed maintaining the
$r\times s$ shape; in Phase 6 (resp., 8) an $r/2$-cyclic shift (resp., the reverse of the
$r/2$-cyclic shift) is done.\footnote{In the original paper~\cite{Leighton85}, the shift in Phase 6
is not cyclic: a new column is added containing the $r/2$ overflowing keys and $r/2$ large dummy
keys, while the first column is filled with $r/2$ small dummy keys. However, it is easy to see
that a cyclic shift suffices if the first $r/2$ keys in the first column are considered
smaller than the last $r/2$ keys.}
Columnsort can be implemented on \spec{n} as follows. For convenience,
assume that $n=2^{(3/2)^d}$ for some integer $d \geq 0$, and set
$r=n^{2/3}$ and $s=n/r$ (the more general case is discussed later).
The algorithm starts with the input keys evenly distributed among the
$n$ VPs. In the odd phases the keys of each column are evenly
distributed among the VPs of a distinct segment of $r$ consecutively
numbered VPs, which form an independent \spec{r}. Then, each segment
solves recursively the subproblem corresponding to the column it
received. The even phases entail a constant number of $0$-supersteps
of constant degree.  At the $i$-th recursion level, with $0\leq i \leq
\log_{3/2} \log n$, each segment of $n^{(2/3)^i}$ consecutively
numbered VPs forming an independent \spec{n^{(2/3)^i}} solves $4^i$
subproblems of size $n^{(2/3)^i}$.  The recurrence stops at $i =
\log_{3/2} \log n$ when each VP solves, sequentially, a subproblem of
constant size.  It is easy to see, by unfolding the recursion, that
the algorithm consists of $\BT{4^i}$ supersteps with label
$(1-2/3^i)\log n$, for each $0\leq i < \log_{3/2} \log n$, where each
VP sends/receives $\BO{1}$ messages.  As before, in order to enforce
wiseness, without affecting the algorithm's asymptotic performance, we
assume that in each $(1-(2/3)^i)\log n$-superstep, $\text{VP}_j$ sends
a dummy message to $\text{VP}_{j+n^{(2/3)^i}/2}$, for each $0\leq j <
n^{(2/3)^i}/2$.

\begin{theorem}\label{thm-sort}
The communication complexity of the above network-oblivious algorithm
for $n$-sort when executed on \eval{p,\of} is
\[
H_{\textnormal{sort}}(n,p,\of)=
\BO{\left(\dfrac{n}{p}+\of\right) 
    \left(\dfrac{\log n}{\log(n/p)}\right)^{\log_{3/2}4}},
\]
for every $1 < p \leq n$ and $\of \geq 0$.
The algorithm is $(\BT{1},n)$-wise and 
it is $\BT{1}$-optimal with respect to $\mathscr{C}$ on any
\eval{p,\of} with  $p = \BO{n^{1-\delta}}$, for any 
arbitrary constant $\delta \in (0,1)$,
and $\of \geq 0$.
\end{theorem}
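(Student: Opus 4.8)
\emph{Plan.} The statement splits into two essentially independent parts — the closed form for $H_{\textnormal{sort}}$ and the $\BT{1}$-optimality claim — and the argument is a close mirror of the $n$-FFT analysis in Theorem~\ref{fft-ub}. For the upper bound I would work directly from the structure of the recursion recalled just before the theorem: for each $0 \le i < \log_{3/2}\log n$ the algorithm performs $\BT{4^i}$ supersteps of label $(1-(2/3)^i)\log n$, and in each of them every VP sends and receives $\BO{1}$ messages (dummies included). Folding onto $p$ processors, a superstep of label $\ell$ survives as a communication superstep iff $\ell<\log p$, i.e.\ $(1-(2/3)^i)\log n<\log p$, which rearranges to $(3/2)^i<\log n/\log(n/p)$, i.e.\ $i<\log_{3/2}\!\big(\log n/\log(n/p)\big)$. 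For each such level, since folding maps $n/p$ consecutively numbered VPs to one processor, the degree of every one of the $\BT{4^i}$ supersteps is $\BO{n/p}$; hence $F^{(1-(2/3)^i)\log n}_{\textnormal{sort}}(n,p)=\BO{4^i n/p}$ while $S^{(1-(2/3)^i)\log n}_{\textnormal{sort}}(n)=\BT{4^i}$. Substituting these into Equation~\ref{eq:commcompNO} and summing the geometric series $\sum_i 4^i$ up to $i^\star=\BT{\log_{3/2}(\log n/\log(n/p))}$ gives $H_{\textnormal{sort}}(n,p,\of)=\BO{(n/p+\of)\,4^{i^\star}}=\BO{(n/p+\of)\big(\log n/\log(n/p)\big)^{\log_{3/2}4}}$, using $4^{\log_{3/2}x}=x^{\log_{3/2}4}$. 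This is the claimed bound, valid for all $1<p\le n$ and $\of\ge0$.

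Next I would check $(\BT{1},n)$-wiseness. In every $(1-(2/3)^i)\log n$-superstep the injected dummy traffic has $\text{VP}_j$ send a message to $\text{VP}_{j+n^{(2/3)^i}/2}$ for all $0\le j<n^{(2/3)^i}/2$. Writing $\ell=(1-(2/3)^i)\log n$ for the label, one has $n^{(2/3)^i}/2=n/2^{\ell+1}=v(n)/2^{\ell+1}$, so the senders form exactly a segment of $v(n)/2^{\ell+1}$ consecutively numbered VPs starting at index $0$, each sending a message to a VP lying outside that segment but inside the $\ell$-cluster $[0,n^{(2/3)^i})$. This is precisely the pattern isolated in the remark following Definition~\ref{def:wiseness}; since in addition every superstep of the algorithm has degree $\BT{1}$, that remark yields $(\alpha,p)$-wiseness with $\alpha=\BT{1}$ for every $1<p\le n$, i.e.\ $(\BT{1},n)$-wiseness.

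For the optimality claim I would restrict to $p=\BO{n^{1-\delta}}$. Then $n/p=\BOM{n^\delta}$, hence $\log(n/p)\ge\delta\log n-\BO{1}=\BT{\log n}$ and therefore $\log n/\log(n/p)=\BT{1}$ (with the hidden constant depending only on the fixed $\delta$). Plugging this into the bound just derived collapses the polylogarithmic factor and gives $H_{\textnormal{sort}}(n,p,\of)=\BO{n/p+\of}$. Applying the same simplification to Lemma~\ref{Sortlowerbound}, every algorithm in $\mathscr{C}$ has communication complexity $\BOM{(n\log n)/(p\log(n/p))+\of}=\BOM{n/p+\of}$ on \eval{p,\of}. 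Thus, provided the network-oblivious algorithm and all its foldings lie in $\mathscr{C}$, its communication complexity is within a constant factor of that of every competitor in $\mathscr{C}$, which is exactly $\BT{1}$-optimality with respect to $\mathscr{C}$ on every \eval{p,\of} with $p=\BO{n^{1-\delta}}$ and $\of\ge0$. Note that no upper bound on $\of$ is required here, in contrast with the matrix-multiplication and FFT cases, precisely because once $\log n/\log(n/p)=\BT{1}$ the $+\of$ term in the upper bound already matches the $+\of$ term in the lower bound.

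The only genuinely delicate point I anticipate is verifying membership in $\mathscr{C}$. Properties (i) and (iii) are immediate: Columnsort only permutes keys, never replicating any of them beyond a constant factor, and the same holds verbatim for every folding. Property (ii), however, requires bounding the number of comparisons charged to a single processor — each processor inherits the work of $n/p$ VPs, each of which takes part in $\BT{4^{i^\star}}$ base-level constant-size sorts — and one must confirm that this total stays below the $\epsilon n\log n$ threshold for the ranges of $p$ and $n$ at hand (this is where a little care, and possibly a mild restriction on how small $p$ may be, is needed). Everything else is a routine transcription of the FFT analysis; the single conceptual novelty, exploited in the optimality step, is the observation that the hypothesis $p=\BO{n^{1-\delta}}$ forces $\log n/\log(n/p)=\BT{1}$.
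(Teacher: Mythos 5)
Your proposal is correct and follows essentially the same route as the paper's (much terser) proof: the same per-level accounting of $\BT{4^i}$ supersteps of degree $\BO{n/p}$ summed over the $\BO{\log_{3/2}(\log n/\log(n/p))}$ levels that survive folding, wiseness via the dummy messages and the remark after Definition~\ref{def:wiseness}, and optimality by observing that $p=\BO{n^{1-\delta}}$ collapses both the upper bound and the lower bound of Lemma~\ref{Sortlowerbound} to $\BT{n/p+\of}$. Your closing caveat about verifying property (ii) of $\mathscr{C}$ (the $\epsilon n\log n$ comparison bound per processor, which is delicate for small $p$ since the algorithm's total work is $\BT{n(\log n)^{\log_{3/2}4}}$) is a point the paper simply asserts without discussion, so you are, if anything, more careful than the original.
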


\begin{proof}
By the previous description of the algorithm, we conclude that
\[
H_{\textnormal{sort}}(n,p,\of) =
\BO{\sum_{i=0}^{\log_{3/2} (\log n/\log(n/p))} 4^i\left(\frac{n}{p} + \of\right)}
= \BO{\left(\dfrac{n}{p}+\of\right) \left(\dfrac{\log n}{\log(n/p)}\right)^{\log_{3/2}4}}.
\]
The wiseness is guaranteed by the dummy messages. Since the algorithm
satisfies the three requirements to be in $\mathscr{C}$, its
optimality follows from Lemma~\ref{Sortlowerbound}.
\end{proof}

\begin{corollary}\label{cor-sort}
The above $n$-sort network-oblivious algorithm is $\BT{1}$-optimal
with respect to $\mathscr{C}$ on any \execd\ machine with $p =
\BO{n^{1-\delta}}$, for some arbitrary constant $\delta \in
(0,1)$, and non-increasing $g_i$'s and $\ell_i/g_i$'s.
\end{corollary}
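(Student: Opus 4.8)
The plan is to obtain Corollary~\ref{cor-sort} as an immediate consequence of the optimality theorem (Theorem~\ref{thm:optimality}), in exact analogy with Corollary~\ref{lem:opt_nmm} and the corresponding FFT corollary. Two facts are already in hand from Theorem~\ref{thm-sort}: the network-oblivious $n$-sort algorithm is $(\BT{1},n)$-wise, and it is $\BT{1}$-optimal with respect to $\mathscr{C}$ on \eval{p,\of} for every $p = \BO{n^{1-\delta}}$ and \emph{every} $\of \ge 0$. Using the monotonicity remark following Definition~\ref{def:wiseness}, $(\BT{1},n)$-wiseness implies $(\BT{1},p^\star)$-wiseness for any power of two $p^\star \le n$, so the wiseness hypothesis of Theorem~\ref{thm:optimality} is met for whatever $p^\star$ we settle on, and $\mathscr{C}$ is closed under folding by its very definition.

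The substitution I would make is: take $p^\star$ to be the largest power of two with $p^\star = \BO{n^{1-\delta}}$ (so $p^\star = \BT{n^{1-\delta}}$), set $\of^m_j = 0$ for all $j$, and take the upper latency vector $\of^M_j$ unbounded --- formally one lets $\of^M_j \to \infty$, or simply picks it larger than every ratio $\ell_i/g_i$ under consideration; this is legitimate precisely because Theorem~\ref{thm-sort} places no ceiling on $\of$, unlike the MM and FFT cases. With $\of^m_j = 0$ the lower constraint $\max_{1\le k\le\log p}\{\of^m_{k-1}2^k/p\} \le \ell_i/g_i$ degenerates to $\ell_i/g_i \ge 0$, which holds automatically, and with $\of^M_j$ unbounded the upper constraint on $\ell_i/g_i$ is vacuous as well (and the footnote's nonemptiness condition is trivially satisfied). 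Hence, for every power of two $p \le p^\star$, Theorem~\ref{thm:optimality} yields $(\alpha\beta/(1+\alpha))$-optimality of the algorithm on \execd\ under the sole assumption that the $g_i$'s and $\ell_i/g_i$'s are non-increasing; and since $\alpha = \BT{1}$ and $\beta = \BT{1}$, the factor $\alpha\beta/(1+\alpha)$ is $\BT{1}$. The range ``$p$ a power of two with $p \le p^\star$'' is exactly ``$p = \BO{n^{1-\delta}}$'', which is the claimed statement.

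Little here is genuinely hard; the only point requiring care --- the ``main obstacle,'' such as it is --- is the bookkeeping around $p^\star$. Theorem~\ref{thm:optimality} requires $\beta$-optimality on \eval{2^j,\of} for \emph{all} $1 \le j \le \log p^\star$ with a \emph{single} $\beta = \BT{1}$, so I must verify that $p^\star = \BT{n^{1-\delta}}$ sits inside the regime of Theorem~\ref{thm-sort} and that the hidden constant there is uniform over $p$ throughout this interval --- which it is, since Theorem~\ref{thm-sort} asserts $\BT{1}$-optimality for the whole range $p = \BO{n^{1-\delta}}$ with one constant. Everything else is a direct plug-in, just as in the matrix-multiplication and FFT corollaries, and no new estimates on the algorithm itself are needed.
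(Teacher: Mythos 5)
Your proposal is correct and follows essentially the same route as the paper, which also obtains the corollary by plugging $\of_i^m = 0$ and $\of_i^M = +\infty$ into Theorem~\ref{thm:optimality}; the only difference is that the paper takes $p^\star = n$ while you take $p^\star = \BT{n^{1-\delta}}$. Your choice is in fact the more careful one, since the theorem's hypothesis asks for $\beta$-optimality on \eval{2^j,\of} for all $j \le \log p^\star$ and Theorem~\ref{thm-sort} only establishes this for $p = \BO{n^{1-\delta}}$ (the paper's instantiation works only because the theorem's proof for a given $p$ uses evaluation-model optimality only up to $2^j = p$).
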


\begin{proof}
Since the network-oblivious algorithm is $(\BT{1},n)$-wise and belongs
to $\mathscr{C}$, we get the claim by plugging $p^\star = n$, $\of_i^m
= 0$, and $\of_i^M = +\infty$ in Theorem~\ref{thm:optimality}.
\end{proof}

Consider now the more general case when
$n$ is an arbitrary power of two.  Now, the input keys must be regarded
as the entries of an $r \times s$ matrix, where $r$ is the smallest
power of two greater than or equal to $n^{2/3}$. Simple, yet tedious,
calculations show that the results stated in Theorem~\ref{thm-sort}
and Corollary~\ref{cor-sort} continue to hold in this case.

Finally, we remark that the above network-oblivious sorting algorithm
turns out to be $\BT{1}$-optimal on any \execd, as long as $p =
\BO{n^{1-\delta}}$ for constant $\delta$, with respect to a wider
class of algorithms which satisfy the requirements (i), (ii), and
(iii), specified above for $\mathscr{C}$, but need not be static.
By applying the lower bound for sorting in~\cite{ScquizzatoS14} on
two processors, it is easy to show that $\BOM{n}$ messages must cross
the bisection for this class of algorithms. Therefore, we get an
$\BOM{g_0 n/p}$ lower bound on the communication time on \execd, which
is matched by our network-oblivious algorithm.

\subsection{Stencil Computations}\label{sec:stencil}
A \emph{stencil} defines the computation of any element in a $d$-dimensional
spatial grid at time $t$ as a function of neighboring grid elements at
time $t-1,t-2,\dots,t-\tau$, for some integers $\tau \geq 1$ and
$d \geq 1$. Stencil computations arise in many contexts, ranging from
iterative finite-difference methods for the numerical solution of
partial differential equations, to algorithms for the simulation of
cellular automata, as well as in dynamic programming algorithms and in
image-processing applications. Also, the simulation of a $d$-dimensional
mesh~\cite{BilardiP97} can be envisioned as a stencil computation.

In this subsection, we restrict our attention to stencil computations
with $\tau = 1$. To this purpose, we define the \emph{$(n,d)$-stencil}
problem which represents a wide class of stencil computations (see,
e.g.,~\cite{FrigoS05}). Specifically, the problem consists in evaluating
all nodes of a DAG of $n^{d+1}$ nodes, each represented by a distinct
tuple $\langle i_0,i_1,\ldots,i_d \rangle$, with $0 \leq i_0,i_1,\ldots,i_d
< n$, where each node $\langle i_0,i_1,\ldots,i_d \rangle$ is connected,
through an outgoing arc, to (at most) $3^d$ neighbors, namely $\langle
i_0+\delta_0,i_1+\delta_1,\ldots,i_{d-1}+\delta_{d-1},i_d+1\rangle$ for
each $\delta_0,\delta_1,\ldots,\delta_{d-1} \in \{0, \pm 1\}$ (whenever
such nodes exist). We suppose $n$ to be a power of two. Intuitively,
the $(n,d)$-stencil problem consists of $n$ time steps of a stencil
computation on a $d$-dimensional spatial grid of side $n$, where each
DAG node corresponds to a grid element (first $d$ coordinates) at a
given time step (coordinate $i_d$).

Let $\mathscr{C}_d$ denote the class of static algorithms for the
$(n,d)$-stencil problem such that any $\A \in \mathscr{C}_d$ for $q$
processing elements satisfies the following properties: (i) each DAG
node is evaluated once (i.e., recomputation is not allowed); (ii) no
processing element computes more than $\epsilon n^{d+1}$ DAG nodes,
for some constant $0 < \epsilon < 1$; (iii) all of the foldings of
$\A$ on $q'$ processing elements, $2 \leq q' < q$, also belong to
$\mathscr{C}_d$. Note that, as before, this class of algorithms makes 
no assumptions on the input and output distributions. The following
lemma establishes a lower bound on the communication complexity of
the algorithms in $\mathscr{C}_d$.

\begin{lemma}\label{Stencillowerbound}
The communication complexity of any $(n,d)$-stencil algorithm in
$\mathscr{C}_d$ when executed on \evald\ is $\BOM{n^d/p^{(d-1)/d}+\of}$.
\end{lemma}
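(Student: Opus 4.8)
The plan is to establish the lower bound $\BOM{n^d/p^{(d-1)/d}}$ on communication complexity via a volume/surface argument, and then add the $\BOM{\of}$ term trivially since any nontrivial algorithm runs at least one superstep with at least one message. As in the analogous lemmas for matrix multiplication, FFT, and sorting, the bound for $\of=0$ should follow from an established combinatorial lower bound (here, a result on the communication cost of stencil DAG evaluation, e.g.\ from the line of work that includes \cite{ScquizzatoS14} or classical I/O lower bounds for stencils); the additive $\of$ term then follows, as in the proofs of Lemmas~\ref{MMlowerbound}, \ref{FFTlowerbound}, and \ref{Sortlowerbound}, because at least one message is sent by some processing element in at least one superstep, contributing $\of$ to the communication complexity.

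If a self-contained argument is desired rather than a citation, I would proceed as follows. Fix an algorithm $\A \in \mathscr{C}_d$ on $p$ processors and consider its execution on \eval{p,0}. Since no processor evaluates more than $\epsilon n^{d+1}$ of the $n^{d+1}$ DAG nodes, the work is spread over all $p$ processors; pick a processor $P$ that evaluates at least $n^{d+1}/p$ nodes (such a processor exists by averaging, and by property (ii) it evaluates at most a constant fraction, so $P$ is "balanced" in both directions). The key geometric fact is that the set $V$ of $(n,d)$-stencil nodes assigned to $P$ has a large "boundary": any value needed by a node in $V$ but produced outside $V$ (or produced in $V$ but consumed outside) must be communicated. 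Using an isoperimetric inequality for the stencil DAG — which, modulo lower-order terms, behaves like the $(d+1)$-dimensional grid with the space--time causal structure — a set of $m$ nodes must have a cut of size $\BOM{m^{d/(d+1)}}$ in general, but because the grid has bounded side $n$ in the spatial directions, a set of size $m = n^{d+1}/p$ that cannot "hide" inside a small spatial region is forced to have a boundary of size $\BOM{n^d/p^{(d-1)/d}}$; this is exactly the bound obtained by the standard slicing argument (partition time into $p^{1/d}$ blocks of $n/p^{1/d}$ consecutive time-steps, observe that within each block the spatial footprint processed by $P$ must be large, and sum the per-block boundary contributions).

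The main obstacle is making the isoperimetric/boundary estimate robust to the adversarial, possibly network-aware assignment of nodes to processors: a clever algorithm might try to assign $P$ a spatially compact region for many time steps so as to minimize its communication surface, and one must show that properties (ii) and (iii) of $\mathscr{C}_d$ rule this out in the relevant parameter range (specifically that $p^{(d-1)/d} \le n^d$, i.e.\ $p \le n^{d \cdot d/(d-1)}$, which holds whenever $p \le n$). Concretely, the compact-region strategy forces $P$ either to evaluate too few nodes (violating the $n^{d+1}/p$ lower bound from averaging, contradiction) or, to evaluate enough nodes while staying compact, to "grow" its spatial footprint over time, and the cone-of-influence structure of stencils (dependence spreading by $\pm 1$ per time step in every spatial coordinate) makes any sufficiently long compact run impossible without either importing or exporting $\BOM{n^d/p^{(d-1)/d}}$ boundary values. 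Once the $\of=0$ bound is in hand, appending the $+\of$ term is immediate, completing the proof of the lemma.
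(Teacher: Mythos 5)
Your first paragraph is exactly the paper's proof: the $\of=0$ case is delegated to \cite[Theorem~5]{ScquizzatoS14}, and the additive $\of$ term follows because at least one message is sent by some processing element, just as in Lemmas~\ref{MMlowerbound}, \ref{FFTlowerbound}, and~\ref{Sortlowerbound}. The additional self-contained isoperimetric sketch is not part of the paper's argument and is not needed; as written it is too informal to stand on its own, but since you offer it only as an optional alternative to the citation, the proof as proposed is correct and matches the paper.
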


\begin{proof}
The bound for $\of=0$ is proved in~\cite[Theorem~5]{ScquizzatoS14}, and it clearly
extends to the case $\of>0$. The additive $\of$ term follows since at least
one message is sent by some processing element.
\end{proof}

In what follows, we develop efficient network-oblivious algorithms for
the $(n,d)$-stencil problem, for the special cases of $d = \{1,2\}$.
The generalization to values $d > 2$, and to other types of stencils,
is left as an open problem.

\subsubsection{The $(n,1)$-Stencil Problem}
The $(n,1)$-stencil problem consists of the evaluation of a DAG shaped
as a $2$-dimensional array of side $n$.  We reduce the solution of the
stencil problem to the computation of a diamond DAG. Specifically, we
define a \emph{diamond DAG} of side $n$ as the intersection of a
$(2n-1,1)$-stencil DAG with the following four half-planes: $i_0 + i_1
\geq (n-1)$, $i_0 - i_1 \leq (n-1)$, $i_0 - i_1 \geq - (n-1)$, and
$i_0 + i_1 \leq 3(n-1)$ (i.e., the largest diamond included in the
stencil).\footnote{We observe that our definition of diamond DAG is
  consistent with the one in~\cite{BilardiP97}, whose edges are a
  superset of those of the diamond DAG defined
  in~\cite{AggarwalCS90}.} It follows that an $(n,1)$-stencil DAG can
be partitioned into five full or truncated diamond DAGs of side less
than $n$ which can be evaluated in a suitable order, with the outputs
of one DAG evaluation providing the inputs for subsequent DAG evaluations.

Our network-oblivious algorithm for the $(n,1)$-stencil is specified
on \spec{n} and consists of five \emph{stages}, where in each stage
the whole \spec{n} machine takes care of the evaluation of a distinct
diamond DAG (full or truncated) according to the aforementioned partition.
We require that all of the $\BO{n}$ inputs necessary for the evaluation
of a diamond DAG are evenly distributed among the $n$ VPs at the start
of the stage in charge of the DAG. No matter how the inputs are assigned
to the VPs at the beginning of the algorithm, the data movement required
to guarantee the correct input distribution at the various stages can be
accomplished in $\BO{1}$ 0-supersteps where each VP sends/receives
$\BO{n}$ messages. 

We now focus on the evaluation of the individual diamond DAGs. For ease
of presentation, we consider the evaluation of a full diamond DAG of
side $n$ on \spec{n}. Simple yet tedious modifications are required
for dealing with truncated or smaller diamond DAGs. We exploit the
fact that the DAG can be decomposed recursively into smaller diamonds.
(Oblivious parallel algorithms adopting this strategy have been studied
in~\cite{ChowdhuryR08,FrigoS09}, but these algorithms, specified for a
different computational framework, are not oblivious to the number of
processors $p$, hence they cannot be directly compared to ours.)

\begin{figure}[t]
    \begin{center}
       \includegraphics[width=0.6\textwidth]{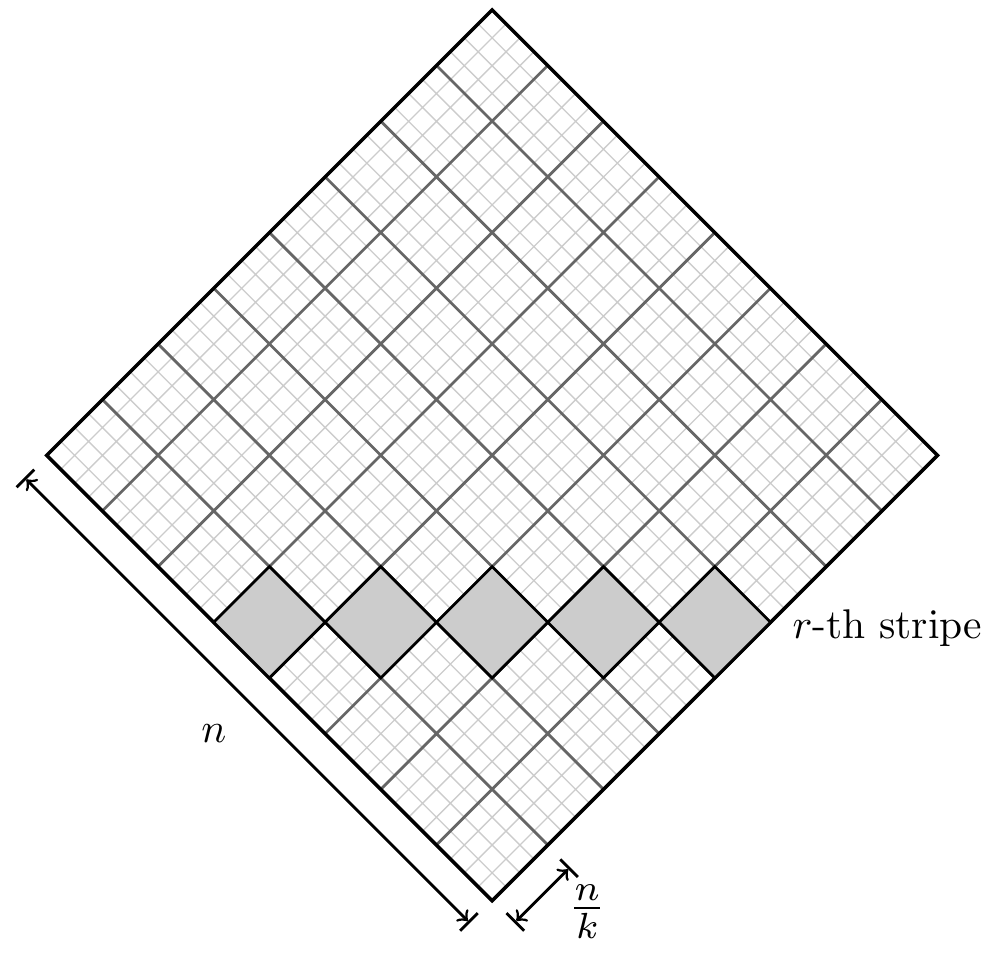}
       \caption{The decomposition of the diamond DAG performed by our algorithm.}
       \label{fig:diamonds}
    \end{center}
\end{figure}

Let $k = 2^{\lceil \sqrt{\log n} \rceil}$. The diamond DAG is
partitioned into $2k-1$ horizontal stripes, each containing up to $k$
diamonds of side $n/k$, as depicted in Figure~\ref{fig:diamonds}. The
DAG evaluation is accomplished into $2k-1$ non-overlapping \emph{phases}.
In the $r$-th such phase, with $0 \leq r < 2k-1$, the diamonds in the
$r$-th stripe are evaluated in parallel by distinct \spec{n/k} submachines
formed by disjoint segments of consecutively numbered VPs.\footnote{We
  observe that some \spec{n/k} submachines may not be assigned to
  subproblems, since the number of diamonds in a stripe could be
  smaller than $k$. In order to comply with the requirement that in
  the algorithm execution the sequence of superstep labels is the same
  at each processing element, we assume that idle \spec{n/k}
  submachines are assigned dummy diamonds of side $n/k$ to be
  evaluated.} At the beginning of each phase, a 0-superstep is
executed so to provide the VPs of each \spec{n/k} submachine with the
appropriate input, that is, the immediate predecessors (if any) of the
diamond assigned to the submachine. In this superstep each VP
sends/receives $\BO{1}$ messages. In each phase, the diamonds of side
$n/k$ are evaluated recursively.

In general, at the $i$-th recursive level, with $i \geq 1$, a total of
$(2k-1)^i$ non-overlapping phases are executed where diamonds of side
$n_i = n/k^i$ are evaluated in parallel by distinct \spec{n_i}
submachines. Each such phase starts with a superstep of label $(i-1)
\cdot \log k$ in order to provide each \spec{n_i} with the appropriate
input. In turn, the evaluation of a diamond of side $n_i$ within an
\spec{n_i} submachine is performed recursively by partitioning its
nodes into $2k-1$ horizontal stripes of diamonds of side $n_{i+1} =
n/k^{i+1}$ which are evaluated in $2k-1$ non-overlapping phases by
\spec{n_{i+1}} submachines, with each phase starting with a superstep
of label $i \cdot \log k$ where each VP sends/receives $\BO{1}$
messages (and thus each processor sends/receives $\BO{n/p}$ messages).
The recursion ends at level $\tau = \lfloor \log_k n \rfloor$,
which is the first level where the diamond of side $n_{\tau}$
becomes smaller than $k$. If $n_{\tau} > 1$, each diamond of side
$n_{\tau}$ assigned to an \spec{n_{\tau}} submachine is evaluated
straightforwardly in $2n_{\tau}-1$ supersteps of label $\tau \cdot
\log k$. Instead, if $n_{\tau} = 1$, at recursion level $\tau$ each VP
independently evaluates a 1-node diamond, and no communication is
required.

By unfolding the recursion, one can easily see that the evaluation of
a diamond DAG of side $n$ entails, overall, $(2k-1)^i$ supersteps of
label $(i-1) \cdot \log k$, for $1 \leq i \leq \tau$, and, if
$n_{\tau} > 1$, $(2k-1)^{\tau} n_{\tau}$ supersteps of label $\tau
\cdot \log k$. In each of these supersteps, every VP sends/receives
$\BO{1}$ messages. 

In order to guarantee $(\BT{1},n)$-wiseness of our algorithm, 
we assume that suitable dummy messages are added in the each superstep
so to make each VP exchange the same number of messages.

\begin{theorem}\label{thm:UB_1-stencil}
The communication complexity of the above network-oblivious algorithm
for the $(n,1)$-stencil problem when executed on \eval{p,\of} is
\[
H_{\textnormal{1-stencil}}(n,p,\of) = \BO{n4^{\sqrt{\log n}}},
\]
for every $1 < p \leq n$ and $0 \leq \of = \BO{n/p}$. The algorithm is
$(\BT{1},n)$-wise and $\BOM{1/4^{\sqrt{\log n}}}$-optimal with respect
to $\mathscr{C}_1$ on any \eval{p,\of} with $1 < p \leq n$ and $\of = \BO{n/p}$.
\end{theorem}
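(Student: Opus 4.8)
The statement bundles three assertions: the bound $H_{\textnormal{1-stencil}}(n,p,\of)=\BO{n4^{\sqrt{\log n}}}$, the $(\BT{1},n)$-wiseness of the algorithm, and its $\BOM{1/4^{\sqrt{\log n}}}$-optimality with respect to $\mathscr{C}_1$. The last is immediate from the first together with the lower bound of Lemma~\ref{Stencillowerbound}, and wiseness is a self-contained structural fact about the dummy-augmented communication pattern, so the core of the argument is the communication-complexity bound, which I would establish first.

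For the communication complexity, the plan is to add up the superstep contributions level by level along the recursive diamond evaluation. Write $k=2^{\lceil\sqrt{\log n}\rceil}$, $\tau=\lfloor\log_k n\rfloor$, and $n_\tau=n/k^\tau\in[1,k)$. The unfolding of the recursion gives $\BT{(2k-1)^i}$ supersteps of label $(i-1)\log k$ for each $1\le i\le\tau$, plus, when $n_\tau>1$, $\BT{(2k-1)^\tau n_\tau}$ leaf supersteps of label $\tau\log k$, with each VP exchanging $\BO{1}$ messages in all of them; the only other supersteps are the $\BO{1}$ redistribution $0$-supersteps at the start of the five stages. When the algorithm is folded onto \eval{p,\of}, each processor owns $n/p$ VPs, so each superstep of the first two types has degree $\BO{n/p}$, and a superstep contributes to the communication complexity only when its label is below $\log p$, i.e.\ a level-$i$ superstep contributes only when $k^{i-1}<p$ and a leaf superstep only when $k^\tau<p$. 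Because the branching factor $2k-1\ge3$, the sum $\sum_i F^{(i-1)\log k}_{\mathcal A}(n,p)=\sum_i\BO{(2k-1)^i\,n/p}$ over the contributing levels is, up to a constant, its top term; for that term one bounds $(2k-1)^{i}<(2k)^{i}=2^{i}k\cdot k^{i-1}<2^{i}k\,p$, and since the top contributing index is at most $\tau+1\le\sqrt{\log n}+1$ and $\log k=\lceil\sqrt{\log n}\rceil$, both $2^{i}$ and $k$ are $\BO{2^{\sqrt{\log n}}}$, hence $(2k-1)^{i}=\BO{p\,4^{\sqrt{\log n}}}$ and $\sum_i F^{(i-1)\log k}_{\mathcal A}(n,p)=\BO{n\,4^{\sqrt{\log n}}}$. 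The synchronization terms $\sum_i S^{(i-1)\log k}_{\mathcal A}(n)\,\of=\sum_i\BO{(2k-1)^i}\,\of$ obey the same bound since $\of=\BO{n/p}$. The leaf level, when it contributes, satisfies $k^\tau<p$, whence $(2k-1)^\tau n_\tau<2^\tau k^\tau\,n_\tau<2^\tau p\,n_\tau<2^\tau p\,k=\BO{p\,4^{\sqrt{\log n}}}$, so its $F$- and $S$-contributions are again $\BO{n\,4^{\sqrt{\log n}}}$; and the five redistribution $0$-supersteps move $\BO{n}$ messages in total (the number of inputs of a diamond DAG), hence $\BO{n}$ per processor, contributing only $\BO{n+\of}=\BO{n}$. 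Summing gives $H_{\textnormal{1-stencil}}(n,p,\of)=\BO{n\,4^{\sqrt{\log n}}}$ for every $1<p\le n$ and $0\le\of=\BO{n/p}$.

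For $(\BT{1},n)$-wiseness I would simply invoke the sufficient condition recorded right after Definition~\ref{def:wiseness}: the dummy messages make every VP exchange the same $\BT{1}$ number of messages in each $\ell$-superstep and follow the same half-cluster pattern used in the matrix-multiplication, FFT and sorting algorithms, so in each $\ell$-superstep there is a segment of $n/2^{\ell+1}$ consecutively numbered VPs each sending $\BT{1}=\BT{\textnormal{degree}}$ messages outside the segment, which is precisely the scenario that remark certifies as $(\BT{1},n)$-wise. For the optimality claim, I would first observe that the folding of the algorithm on \eval{p,\of} lies in $\mathscr{C}_1$: no DAG node is recomputed because the decomposition into (truncated) diamonds and then into horizontal stripes of smaller diamonds is a partition of the DAG; the per-processor work is $\BO{n^{d+1}/p}=\BO{n^2/p}\le\epsilon n^2$ for a suitable constant $\epsilon<1$ and every $p\ge2$; and a folding of a folding is a folding. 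Then Lemma~\ref{Stencillowerbound} with $d=1$ gives $H_{\mathcal B'}(n,p,\of)=\BOM{n+\of}=\BOM{n}$ for every $\mathcal B'\in\mathscr{C}_1$, and dividing the upper bound just proved by this lower bound yields the claimed $\BOM{1/4^{\sqrt{\log n}}}$-optimality.

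The main obstacle is the bookkeeping inside the communication-complexity step. One must check that the factor $(2k-1)^{i}$ at the top contributing level equals $p$ up to a $2^{\BT{\sqrt{\log n}}}$ overhead, so that the explicit $p$-dependence cancels against the folded degree $\BO{n/p}$ and the bound becomes independent of $p$ and $\of$; and one must pin down where the exponent doubles — it is the leaf level with $n_\tau>1$, where the branching overhead $2^\tau=2^{\BT{\sqrt{\log n}}}$ and the leaf multiplicity $n_\tau<k=2^{\BT{\sqrt{\log n}}}$ compound — which is what makes the final factor $4^{\sqrt{\log n}}$ rather than merely $2^{\sqrt{\log n}}$.
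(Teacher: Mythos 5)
Your proposal is correct and follows essentially the same route as the paper's proof: unfold the recursion, observe that only supersteps with label below $\log p$ contribute, exploit the geometric growth of $(2k-1)^i$ so the sum is dominated by the top contributing level, and cancel the explicit $p$ there against the folded degree $\BO{n/p}$ (the paper organizes this as two explicit cases, $p\le k^\tau$ and $k^\tau<p\le n$, the second handling the leaf level exactly as you do). The wiseness and optimality claims are likewise dispatched as in the paper, via the dummy messages and Lemma~\ref{Stencillowerbound}.
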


\begin{proof}
As observed before, the communication required at the beginning of
each of the five stages contributes an additive factor $\BO{n}$ to
the communication complexity, hence it is negligible. Let us then
concentrate on the communication complexity for one diamond DAG
evaluation. Recall that $\tau = \lfloor \log_k n \rfloor$. First
suppose that $p \leq k^\tau$. Observe that at every recursion level
$i$, with $0 \leq i < \lceil \log_k p \rceil$, the evaluation of each
diamond of side $n_i = n/k^i$ is performed by $p/k^i > 1$ processors,
while at every recursion level $i$, with $\lceil \log_k p \rceil \leq i
\leq \tau$, each diamond of side $n/k^i$ is evaluated by a single processor
of \eval{p,\of} and no communication takes place. Then, by the previous
discussion it follows that
\begin{eqnarray*}
H_{\textnormal{1-stencil}}(n,p,\of) & = & 
\BO{
\sum_{i=0}^{\lceil \log_k p\rceil-1} (2k-1)^{i+1}\left(\frac{n}{p} +
\of\right)} \\
& = & \BO{(2k)^{\log_k p+1}\frac{n}{p}} \\
& = & \BO{n2^{\log_k p}k} \\
& = & \BO{n4^{\sqrt{\log n}}}, 
\end{eqnarray*}
where we exploited the upper bound on $\of$. Instead, if $k^\tau < p \leq n$,
we have that at every recursion level $i$, with $0 \leq i \leq \tau$, the
evaluation of each diamond of side $n_i = n/k^i$ is performed by $p/k^i > 1$
processors. Then, by the above discussion and recalling that for $i = \tau$,
diamonds of side $n_\tau = n/k^\tau$ are evaluated straightforwardly
in $2n_\tau - 1$ supersteps, we obtain
\begin{eqnarray*}
H_{\textnormal{1-stencil}}(n,p,\of) & = &
\BO{\sum_{i=0}^{\tau-1} (2k-1)^{i+1}\left(\frac{n}{p} + \of\right)}
+
\BO{(2k-1)^{\tau}\frac{n}{k^{\tau}}\left(\frac{n}{p} + \of\right)} \\
& = & \BO{(2k)^{\tau}\frac{n}{k^{\tau}}\frac{n}{p}} \\
& = & \BO{n 2^{\tau} k} \\
& = & \BO{n4^{\sqrt{\log n}}},
\end{eqnarray*}
where we exploited the upper bound on $\of$ and the fact that $p >
k^{\tau}$ hence, by definition of $\tau$, $n/p < k$. The wiseness is
ensured by the dummy messages.  It is easy to see that the algorithm
complies with the requirements for belonging to $\mathscr{C}_1$, hence
the claimed optimality is a consequence of Lemma~\ref{Stencillowerbound},
and the theorem follows.
\end{proof}

Finally, we show that the network-oblivious algorithm for the $(n,1)$-stencil
problem achieves $\BOM{1/4^{\sqrt{\log n}}}$-optimality on the D-BSP as
well, for wide ranges of machine parameters.

\begin{corollary}
The above network-oblivious algorithm for the $(n,1)$-stencil problem 
is $\BOM{1/4^{\sqrt{\log n}}}$-optimal with respect to $\mathscr{C}_1$
on any \execd\ machine with $1 < p \leq n$, non-increasing $g_i$'s and
$\ell_i/g_i$'s, and $\ell_0/g_0 = \BO{n/p}$.
\end{corollary}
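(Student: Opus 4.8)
The plan is to obtain this corollary as an immediate application of the optimality theorem (Theorem~\ref{thm:optimality}), in exactly the same way the matrix-multiplication and FFT corollaries were derived. All the substantive work has already been carried out: Theorem~\ref{thm:UB_1-stencil} establishes that the network-oblivious $(n,1)$-stencil algorithm is $(\BT{1},n)$-wise and $\BOM{1/4^{\sqrt{\log n}}}$-optimal with respect to $\mathscr{C}_1$ on every \eval{2^j,\of} with $1 < 2^j \leq n$ and $0 \leq \of = \BO{n/2^j}$, and Lemma~\ref{Stencillowerbound} supplies the underlying lower bound. What remains is only to instantiate the parameters of Theorem~\ref{thm:optimality} correctly.

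First I would set $p^\star = n$, take $\beta = \BOM{1/4^{\sqrt{\log n}}}$, and choose the two vectors in the theorem to be $\of^m_i = 0$ and $\of^M_i = \BT{n/2^i}$ for $0 \leq i < \log n$. With these choices, the hypothesis of Theorem~\ref{thm:optimality} asking for $\beta$-optimality on \eval{2^j,\of} over the range $\of^m_{j-1} \leq \of \leq \of^M_{j-1}$ is precisely the evaluation-model guarantee provided by Theorem~\ref{thm:UB_1-stencil} (up to the harmless identification $n/2^{j-1} = \BT{n/2^j}$). The class hypothesis of the theorem is met because the algorithm lies in $\mathscr{C}_1$ and, by property~(iii) in the definition of $\mathscr{C}_1$, the class is closed under folding.

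Next I would check that the parameter conditions of Theorem~\ref{thm:optimality} collapse to the ones stated in the corollary. Since $\of^m_{k-1} = 0$, the lower constraint $\max_{1\leq k\leq \log p}\{\of^m_{k-1}2^k/p\} \leq \ell_i/g_i$ is vacuous. For the upper constraint, $\of^M_{k-1}2^k/p = \BT{n/2^k}\cdot 2^k/p = \BT{n/p}$ for every $k$, so $\min_{1\leq k\leq \log p}\{\of^M_{k-1}2^k/p\} = \BT{n/p}$ and the requirement becomes $\ell_i/g_i = \BO{n/p}$ for all $i$; because the sequence $\ell_i/g_i$ is assumed non-increasing, this is equivalent to $\ell_0/g_0 = \BO{n/p}$, exactly the condition in the corollary. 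Together with the assumed monotonicity of the $g_i$'s, all hypotheses of the theorem hold, and it yields $\alpha\beta/(1+\alpha)$-optimality on \execd; since the algorithm is $(\BT{1},n)$-wise we have $\alpha = \BT{1}$, so $\alpha\beta/(1+\alpha) = \BT{\beta} = \BOM{1/4^{\sqrt{\log n}}}$, which is the claim.

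There is no genuinely hard step here — the content is entirely in the already-established wiseness and evaluation-model optimality of Theorem~\ref{thm:UB_1-stencil}, whose proof carried out the two case analyses ($p \leq k^\tau$ versus $k^\tau < p \leq n$). The only point requiring a little care is the index bookkeeping between the $\of^M$ vector and the synchronization cost $\of$ on \eval{2^j,\of}, but since $n/2^{j-1} = \BT{n/2^j}$ this does not alter any of the asymptotic range conditions, and the corollary follows.
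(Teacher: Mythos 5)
Your proposal is correct and matches the paper's own proof, which likewise invokes Theorem~\ref{thm:UB_1-stencil} and then applies Theorem~\ref{thm:optimality} with $p^\star = n$, $\of_i^m = 0$, and $\of_i^M = \BT{n/2^i}$. Your additional verification that the range conditions reduce to $\ell_0/g_0 = \BO{n/p}$ is accurate and simply makes explicit what the paper leaves implicit.
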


\begin{proof}
The corollary follows by Theorem~\ref{thm:UB_1-stencil} and by applying
Theorem~\ref{thm:optimality} with $p^\star = n$, $\of_i^m = 0$, and
$\of_i^M = \BT{n/2^i}$.
\end{proof}

We remark that a tighter analysis of the algorithm and/or the adoption
of different values for the recursion degree $k$, still independent of
$p$ and $\of$, may yield slightly better efficiency. However, it is
an open problem to devise a network-oblivious algorithm which is
$\BT{1}$-optimal on the D-BSP for wide ranges of the machine
parameters.

\subsubsection{The $(n,2)$-Stencil Problem}
In this subsection we present a network-oblivious algorithm for the
$(n,2)$-stencil problem, which requires the evaluation of a DAG shaped
as a 3-dimensional array of side $n$. Both the algorithm and its
analysis are a suitable adaptation of the ones for the $(n,1)$-stencil
problem. In order to evaluate a $3$-dimensional domain we make use of
two types of subdomains which, intuitively, play the same role as the
diamond for the $(n,1)$-stencil: the octahedron and the
tetrahedron. An octahedron of side $n$ is the intersection of a
$(2n-1,2)$-stencil with the following eight half-spaces: $i_0 + i_2
\geq (n-1)$, $i_0 - i_2 \leq (n-1)$, $i_0 - i_2 \geq - (n-1)$, $i_0 +
i_2 \leq 3(n-1)$, $i_0 + i_1 \geq (n-1)$, $i_0 - i_1 \leq (n-1)$, $i_0
- i_1 \geq - (n-1)$, and $i_0 + i_1 \leq 3(n-1)$; a tetrahedron of
side $n$ is the intersection of a $(2n-1,2)$-stencil with the
following four half-spaces: $i_0 + i_1 \geq (n-1)$, $i_0 - i_1 \geq
(n-1)$, $i_1 + i_2 \leq 2(n-1)$, and $i_1 - i_2 \leq 0$.

As shown in~\cite{BilardiP97}, a $3$-dimensional array of side $n$ can
be partitioned into 17 instances of (possibly truncated) octahedra or
tetrahedra of side $n$ (see Figure~6 of~\cite{BilardiP97}). Our
network-oblivious algorithm exploits this partition and is specified
on \spec{n^2}. It consists of 17 stages, where in each stage the VPs
take care of the evaluation of one polyhedra of the partition. We
assume that at the beginning of the algorithm the inputs are evenly
distributed among the $n^2$ VPs, and also impose that the inputs of
each stage be evenly distributed among the VPs. The data movement
required to guarantee the correct input distribution for each stage
can be accomplished in $\BO{1}$ 0-supersteps, where each VP
sends/receives $\BO{1}$ messages.

Let $k = 2^{\lceil\sqrt{\log n}\rceil}$. An octahedron of side $n$ can
be partitioned into octahedra and tetrahedra of side $n/k$ in $\log k$
steps, where the $i$-th such step, with $1 \leq i \leq \log k$,
refines a partition of the initial octahedron into octahedra or
tetrahedra of side $n/2^{i-1}$ by decomposing each of these polyhedra
into smaller ones of side $n/2^i$, according to the scheme depicted in
\cite[Figure~5]{BilardiP97}. The final partition is obtained at the
end of the $\log k$-th step.  The octahedra and tetrahedra of the
final partition can be grouped in horizontal stripes in such a way
that the polyhedra of each stripe can be evaluated in
parallel. Consider first the set of octahedra of the partition. It can
be seen that the projection of these octahedra on the
$(i_0,i_2)$-plane coincides with the decomposition of the diamond DAG
depicted in Figure~\ref{fig:diamonds}. As a consequence, we can
identify $2k-1$ horizontal stripes of octahedra, where each stripe
contains up to $k^2$ octahedra of side $n/k$. Moreover, the
interleaving of octahedra and tetrahedra in the basic decompositions
of \cite[Figure~5]{BilardiP97}, implies that there is a stripe of
tetrahedra between each pair of consecutive stripes of
octahedra. Hence, there are also $(2k-1)-1$ horizontal stripes of
tetrahedra, each containing up to $k^2$ tetrahedra of side $n/k$.
Overall, the octahedron of side $n$ is partitioned into $4k-3$
horizontal stripes of at most $k^2$ polyhedra of side $n/k$ each,
where stripes of octahedra are interleaved with stripes of tetrahedra.
With a similar argument one can derive a partition of a tetrahedron of
side $n$ into $2k-1 \leq 4k-3$ horizontal stripes of at most $k^2$
polyhedra of side $n/k$ each, where stripes of octahedra are interleaved
with stripes of tetrahedra. 

Once established the above preliminaries, the network-oblivious algorithm
to evaluate a $3$-dimensional array of side $n$ on \spec{n^2} follows closely
from the recursive strategy used for the $(n,1)$-stencil problem: the evaluation
of an octahedron is accomplished in $4k-3$ non-overlapping phases,
in each of which the polyhedra (either octahedra or tetrahedra) of side
$n/k$ in one horizontal stripe of the partition described above
are recursively evaluated in parallel by distinct \spec{n^2/k^2}
submachines formed by disjoint segments of consecutively numbered VPs;
a tetrahedron of side $n$ can be evaluated through a recursive
strategy similar to the one for the octahedron within the same
complexity bounds. As usual, we add to each superstep
$\BO{1}$ dummy messages per VP to guarantee $(\BT{1},n^2)$-wiseness.

\begin{theorem}\label{thm:UB_2-stencil}
The communication complexity of the above network-oblivious algorithm
for the $(n,2)$-stencil problem when executed on \eval{p,\of} is
\[
H_{\textnormal{2-stencil}}(n,p,\of) = \BO{\frac{n^2}{\sqrt{p}} 8^{\sqrt{\log n}}},
\]
for every $1 < p \leq n^2$ and $0 \leq \of = \BO{n^2/p}$. The
algorithm is $(\BT{1},n^2)$-wise and $\BOM{1/8^{\sqrt{\log n}}}$-optimal
with respect to $\mathscr{C}_2$ on any \eval{p,\of} with $1 < p \leq n^2$
and $\of = \BO{n^2/p}$.
\end{theorem}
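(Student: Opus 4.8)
The plan is to mirror, step for step, the argument of Theorem~\ref{thm:UB_1-stencil}, exploiting the fact that the recursive structure of the $(n,2)$-stencil algorithm is the three-dimensional analogue of the diamond decomposition. First I would dispose of the $17$ stages: the reshuffling needed to place the inputs of each stage evenly among the $n^2$ VPs costs $\BO{1}$ $0$-supersteps with $\BO{1}$ messages per VP, which folds to $\BO{n^2/p+\of}=\BO{n^2/p}$ on \eval{p,\of} under the hypothesis $\of=\BO{n^2/p}$, and is therefore negligible against the claimed bound. Hence it suffices to analyse the cost of evaluating one octahedron of side $n$, the tetrahedron being handled identically with its $2k-1$ stripes bounded by the octahedron's $4k-3$.

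Next I would unfold the recursion with degree $k=2^{\lceil\sqrt{\log n}\rceil}$. The structural fact to establish is that, since both an octahedron and a tetrahedron of side $m$ decompose into at most $4k-3$ sequentially processed stripes of at most $k^2$ mutually independent polyhedra of side $m/k$, at recursion level $i$ there are at most $(4k-3)^i$ phases, each evaluating polyhedra of side $n_i=n/k^i$ on a \spec{n^2/k^{2i}} submachine and opening with a single superstep of label $(i-1)\log k$ in which every VP sends/receives $\BO{1}$ messages; when $p/k^{2i}>1$ such a superstep costs $\BO{n^2/p+\of}$ on \eval{p,\of}, otherwise it is local computation. The recursion bottoms out at $\tau=\lfloor\log_k n\rfloor$ with $1\le n_\tau=n/k^\tau<k$, each leaf polyhedron of side $n_\tau>1$ being evaluated straightforwardly in $\BO{n_\tau}$ supersteps of label $\tau\log k$.

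The communication complexity then follows, exactly as for the one-dimensional case, by summing two geometric series. For $p\le k^{2\tau}$ only the levels $i<\lceil\tfrac12\log_k p\rceil$ involve communication, and $\sum_i (4k-3)^{i+1}(n^2/p+\of)=\BO{(4k)^{\frac12\log_k p+1}\,n^2/p}=\BO{k\,2^{\log_k p}\,n^2/\sqrt p}$; for $k^{2\tau}<p\le n^2$ every level up to $\tau$ communicates and the leaf supersteps dominate, giving $\BO{(4k)^\tau (n/k^\tau)(n^2/p)}=\BO{4^\tau n^3/p}=\BO{4^\tau k\, n^2/\sqrt p}$ after using $n<k\sqrt p$ (which follows from $p>k^{2\tau}$). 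Substituting $k=\BT{2^{\sqrt{\log n}}}$, $\tau\le\log_k n\le\sqrt{\log n}$, and $\log_k p\le 2\log n/\sqrt{\log n}=2\sqrt{\log n}$ collapses both cases to $\BO{8^{\sqrt{\log n}}\,n^2/\sqrt p}$.

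Finally, $(\BT{1},n^2)$-wiseness is granted by the $\BO{1}$ dummy messages appended to each superstep, which realise the segment condition for wiseness with $\alpha=\BT{1}$; the algorithm and all its foldings belong to $\mathscr{C}_2$ because the partitions are non-overlapping (so each DAG node is evaluated exactly once) and the work is spread uniformly (so no processor evaluates more than a constant fraction of the $n^3$ nodes); and the $\BOM{1/8^{\sqrt{\log n}}}$-optimality with respect to $\mathscr{C}_2$ is then immediate from the $\BOM{n^2/p^{1/2}+\of}$ lower bound of Lemma~\ref{Stencillowerbound}. I expect the only real difficulty to be organisational rather than conceptual: verifying that the mixed octahedron/tetrahedron recursion still produces at most $(4k-3)^i$ phases with label $(i-1)\log k$ at level $i$, and carrying the two ranges of $p$ relative to $k^{2\tau}$ through the exponent arithmetic without slips — all the genuinely new three-dimensional geometry having already been settled in the decomposition described before the theorem.
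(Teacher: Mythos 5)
Your proposal is correct and follows essentially the same route as the paper's proof: the same reduction to a single octahedron/tetrahedron evaluation, the same two-case analysis relative to $p \lessgtr k^{2\tau}$ with the identical geometric sums and the key inequalities $n/k^{\tau}<k$ and $\log_k p\leq 2\sqrt{\log n}$, and the same closing appeal to dummy messages for wiseness and to Lemma~\ref{Stencillowerbound} for optimality. No gaps.
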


\begin{proof}
Let $H_{\textnormal{octahedron}}(n,p,\of)$ be the communication
complexity required by the recursive strategy presented above for the
evaluation of an octahedron of side $n$, when executed on
\eval{p,\of}. The recursion depth of
that strategy is $\tau = \lfloor \log_k n \rfloor$. First
suppose that $p \leq k^{2\tau}$. At every recursion level
$i$, with $0 \leq i < \lceil (\log_k p)/2 \rceil$, the evaluation of
each polyhedron of side $n_i = n/k^i$ is performed by $p/k^{2i} > 1$
processors, while at every recursion level $i$, with $\lceil (\log_k
p)/2 \rceil \leq i \leq \tau$, each polyhedron of side $n/k^i$ is
evaluated by a single processor of \eval{p,\of} and no communication
takes place. Then, by the previous discussion it follows that
\begin{align*}
H_{\textnormal{octahedron}}(n,p,\of) 
&= \BO{\sum_{i=0}^{\lceil(\log_k p)/2\rceil - 1}(4k-3)^{i+1} \left(\frac{n^2}{p} + \of\right)} \\
&= \BO{(4k)^{(\log_k p)/2 + 1} \frac{n^2}{p}} \\
&= \BO{\frac{n^2}{\sqrt{p}} 2^{\log_k p} k} \\
&= \BO{\frac{n^2}{\sqrt{p}} 4^{\sqrt{\log n}}},
\end{align*}
where we used the hypothesis $\of = \BO{n^2/p}$. Instead, when
$k^{2\tau} < p \leq n^2$, we have that at every recursion level $i$,
with $0 \leq i \leq \tau$, the evaluation of each polyhedron of side
$n_i = n/k^i$ is performed by $p/k^{2i} > 1$ processors. Then, since
for $i = \tau$, polyhedra of side
$n_{\tau} = n/k^{\tau}$ are evaluated straightforwardly in
$\BT{n_{\tau}}$ supersteps, we obtain
\begin{align*}
H_{\textnormal{octahedron}}(n,p,\of)
&= \BO{\sum_{i=0}^{\tau-1}(4k-3)^{i+1}\left(\frac{n^2}{p} + \of\right)} +
\BO{(4k-3)^{\tau}\frac{n}{k^{\tau}}\left(\frac{n^2}{p} + \of\right)} \\
&= \BO{(4k)^{\tau} \frac{n}{k^{\tau}} \frac{n^2}{p}} \\
&= \BO{\frac{n^2}{\sqrt{p}} 4^{\tau} k} \\
&= \BO{\frac{n^2}{\sqrt{p}} 8^{\sqrt{\log n}}},
\end{align*}
where we used the hypothesis $\of = \BO{n^2/p}$ and the inequalities
$n/k^{\tau} < k$ and $k^{2\tau} < p$.
Similar upper bounds on the communication complexity can be proved for
the evaluation of a tetrahedron of side $n$ and for the evaluation of
truncated octahedra or tetrahedra.  

Recall that the algorithm for the
$(n,2)$-stencil problem consists of 17 stages, where in each stage the
VPs take care of the evaluation of one (possibly truncated) octahedron
or tetrahedron of side $n$, and that the data movement which
ensures  the correct input distribution for each stage can be
accomplished in $\BO{1}$ 0-supersteps, where each VP sends/receives
$\BO{1}$ messages. This implies that
\[
H_{\textnormal{2-stencil}}(n,p,\of)
=
\BO{\frac{n^2}{\sqrt{p}} 8^{\sqrt{\log n}}}.
\]
Since the strategies for the evaluation of (possibly truncated)
octahedra or tetrahedra can be made $(\BT{1},n^2)$-wise, through the
introduction of suitable dummy messages, the overall algorithm is also
$(\BT{1},n^2)$-wise. Moreover, the algorithm complies with the
requirements for belonging to $\mathscr{C}_2$, hence the claimed
optimality is a consequence of Lemma~\ref{Stencillowerbound}.
\end{proof}

\begin{corollary}
The above network-oblivious algorithm for the $(n,2)$-stencil problem is
$\BOM{1/8^{\sqrt{\log n}}}$-optimal with respect to $\mathscr{C}_2$ on
any \execd\ machine with $1 < p \leq n^2$, non-increasing $g_i$'s and
$\ell_i/g_i$'s, and $\ell_0/g_0 = \BO{n^2/p}$.
\end{corollary}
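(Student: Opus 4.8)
The plan is to obtain this statement as a direct corollary of Theorem~\ref{thm:UB_2-stencil} through a single application of the optimality theorem (Theorem~\ref{thm:optimality}), exactly in the spirit of the corresponding corollary for the $(n,1)$-stencil problem. First I would instantiate Theorem~\ref{thm:optimality} with $p^\star = n^2$ (legitimate, since the algorithm is specified on \spec{n^2}, so $v(n)=n^2$) and with $\mathscr{C} = \mathscr{C}_2$, which by its defining property~(iii) is closed under folding as the theorem requires. From Theorem~\ref{thm:UB_2-stencil} the algorithm is network-oblivious, $(\BT{1},n^2)$-wise, and $\BOM{1/8^{\sqrt{\log n}}}$-optimal with respect to $\mathscr{C}_2$ on every \eval{p,\of} with $1 < p \leq n^2$ and $\of = \BO{n^2/p}$; reading this with $p = 2^j$ for $1 \leq j \leq \log n^2$ supplies the $\beta$-optimality hypothesis of Theorem~\ref{thm:optimality} with $\alpha = \BT{1}$ and $\beta = \BOM{1/8^{\sqrt{\log n}}}$.

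Next I would fix the threshold vectors $\of_i^m = 0$ and $\of_i^M = \BT{n^2/2^i}$ for $0 \leq i < \log n^2$, so that the requirement ``$\A$ is $\beta$-optimal on \eval{2^j,\of} for $\of_{j-1}^m \leq \of \leq \of_{j-1}^M$'' becomes precisely the range $\of = \BO{n^2/2^j}$ guaranteed above. With this choice one has $\psi_p^m = \max_{1 \leq k \leq \log p}\{\of_{k-1}^m 2^k/p\} = 0$ and $\psi_p^M = \min_{1 \leq k \leq \log p}\{\of_{k-1}^M 2^k/p\} = \BT{n^2/p}$, where the $\min$ is immediate because $\of_{k-1}^M 2^k/p = \BT{n^2/p}$ does not depend on $k$. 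Hence the side condition of Theorem~\ref{thm:optimality} on the ratios $\ell_i/g_i$ collapses to $0 \leq \ell_i/g_i \leq \BT{n^2/p}$, which, since the $\ell_i/g_i$ are assumed non-increasing, is equivalent to the corollary's hypothesis $\ell_0/g_0 = \BO{n^2/p}$; the remaining side conditions (non-increasing $g_i$'s and $\ell_i/g_i$'s) are assumed outright.

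Theorem~\ref{thm:optimality} then yields $\alpha\beta/(1+\alpha)$-optimality on \execd\ for every power of two $p \leq n^2$, and since $\alpha = \BT{1}$ forces $\alpha/(1+\alpha) = \BT{1}$, the resulting degree of optimality is $\alpha\beta/(1+\alpha) = \BOM{1/8^{\sqrt{\log n}}}$, as claimed. I do not foresee a genuine obstacle: the argument is mechanical once Theorem~\ref{thm:UB_2-stencil} is in hand, and the only points requiring a little care are the index shift between $\of_i$ and the submachine size $2^{i+1}$ and the verification that $\psi_p^M = \BT{n^2/p}$ uniformly in $k$ --- both are routine bookkeeping, entirely parallel to the $(n,1)$-stencil and matrix-multiplication corollaries.
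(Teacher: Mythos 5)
Your proposal is correct and follows exactly the paper's own (one-line) proof: invoke Theorem~\ref{thm:UB_2-stencil} for wiseness and $\beta$-optimality in the evaluation model, then apply Theorem~\ref{thm:optimality} with $p^\star = n^2$, $\of_i^m = 0$, and $\of_i^M = \BT{n^2/2^i}$. The additional bookkeeping you supply (verifying $\psi_p^M = \BT{n^2/p}$ and translating the ratio condition into $\ell_0/g_0 = \BO{n^2/p}$ via monotonicity) is accurate and merely makes explicit what the paper leaves implicit.
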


\begin{proof}
The corollary follows by Theorem~\ref{thm:UB_2-stencil} and by applying
Theorem~\ref{thm:optimality} with $p^\star = n^2$, $\of_i^m = 0$,
and $\of_i^M = \BT{n^2/2^i}$.
\end{proof}

\subsection{Limitations of the Oblivious Approach}\label{sec:alg_broadcast}
In this subsection, we establish a negative result by showing that for
the broadcast problem, defined below, a network-oblivious algorithm can
achieve $\BO{1}$-optimality on \eval{p,\of} only for very limited
ranges of $\of$. Let $V[0, \ldots, n-1]$ be a vector of
$n$ entries. The \emph{$n$-broadcast problem} requires to copy the
value $V[0]$ into all other $V[i]$'s.  Let $\mathscr{C}$
denote the class of static algorithms for the $n$-broadcast problem
such that any $\A \in \mathscr{C}$ for $q$ processing elements
satisfies the following properties: (i) 
at least $\epsilon q$ processing elements hold entries of $V$,
for some constant $0 < \epsilon \leq 1$, and the
distribution of the entries of $V$ among the processing elements
cannot change during the execution of the algorithm; (ii) all of the
foldings of $\A$ on $q'$ processing elements, $2 \leq q' < q$, also
belong to $\mathscr{C}$. The following lemma establishes a lower bound
on the communication complexity of the algorithms in $\mathscr{C}$

\begin{theorem}\label{thm:LBbroadcast}
The communication complexity of any $n$-broadcast algorithm in
$\mathscr{C}$ when executed on \evald, with $1 < p \leq n$ and
$\sigma \geq 0$, is
$\BOM{\max\{2,\of\} \log_{\max\{2,\of\}} p}$.
\end{theorem}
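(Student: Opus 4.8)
The plan is to reduce the lower bound to a communication argument on the nested cluster structure of the folded machine, exploiting the fact that information about $V[0]$ can spread to new processing elements only across superstep boundaries. First I would fix an arbitrary $n$-broadcast algorithm $\A \in \mathscr{C}$ for $p$ processing elements (obtained by folding the network-oblivious algorithm, which is legitimate since foldings stay in $\mathscr{C}$), and consider its execution on \evald. Since the distribution of the entries of $V$ is fixed and at least $\epsilon p$ processing elements must end up holding the value $V[0]$, I would track, for each superstep $s$, the set $K_s$ of processing elements that ``know'' the value (i.e., whose local state depends on $V[0]$). Initially only processing elements originally holding $V[0]$ are in $K_0$; at the end every one of the $\geq \epsilon p$ holders of entries of $V$ must be in $K_{\text{final}}$.

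The key step is a spreading bound: in a single $i$-superstep of degree $h$, the number of knowing processing elements can grow by at most a factor roughly $1+h$, because each knowing processor can inform at most $h$ others (it sends at most $h$ messages). More carefully, to be useful against the $\sigma$ term I would argue that to multiply $|K_s|$ by a factor $f$ one must pay communication cost $\Omega(f + \sigma)$ in that superstep: either $h \geq f-1$, contributing $h$ to $F^i$, or $h < f-1$ and the superstep still costs $\sigma$ via $S^i$. Chaining these multiplicative steps, going from $|K_0| = O(1)$ (or at most $\epsilon' p$ worst case — here one has to be a little careful, but one may assume $V[0]$ starts in few places, or reduce to that case) up to $\Omega(p)$ requires a product of factors totalling $p$. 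If there are $t$ supersteps that actually increase the knowing set, with factors $f_1, \dots, f_t$ and $\prod f_j \geq \Omega(p)$, then the total communication complexity is at least $\sum_j \Omega(f_j + \sigma) = \Omega\!\left(\sum_j f_j + t\sigma\right)$. Minimizing this subject to $\prod f_j \geq \Omega(p)$ is a standard optimization: with $t$ fixed, $\sum f_j$ is minimized when all $f_j$ are equal to $p^{1/t}$, giving cost $\Omega(t p^{1/t} + t\sigma)$; optimizing over $t$ balances $p^{1/t}$ against $\max\{2,\sigma\}$, yielding $t = \Theta(\log_{\max\{2,\sigma\}} p)$ and total cost $\Omega(\max\{2,\sigma\} \log_{\max\{2,\sigma\}} p)$, as claimed. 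The role of the $\max\{2,\cdot\}$ is to handle $\sigma < 2$ (in particular $\sigma = 0$), where the bound degrades gracefully to the familiar $\Omega(\log p)$ for broadcast.

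The main obstacle I anticipate is making the ``spreading factor vs.\ cost'' tradeoff rigorous in the evaluation model, where the cost of a superstep is $h + \sigma$ but $h$ is a \emph{max} over processors, not a total: a single knowing processor with large out-degree inflates $h$ even if few processors actually learn the value that round. The clean way around this is to observe that if $|K_{s+1}|/|K_s| = f$ then some processor received (or sent) at least $\lceil f \rceil - 1$ messages in that superstep — more precisely, the total number of value-carrying messages is $\geq |K_{s+1}| - |K_s|$, and since there are at most $|K_s|$ senders, some sender has out-degree $\geq (|K_{s+1}|-|K_s|)/|K_s| = f-1$, so $h^s_{\A}(I,p) \geq f-1$. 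Combined with the unconditional $+\sigma$ per superstep (using $S^i_{\A}(I) \geq 1$ for each label that occurs, or more simply that every value-increasing superstep contributes $\sigma$ to Equation~\ref{eq:commcompNO}), this gives the per-step lower bound $\Omega(f + \sigma)$ — or $\Omega(f)$ plus $\sigma$ times the count of such supersteps — that drives the optimization. A secondary technical point is the worst-case initial configuration: the adversary in the definition of communication complexity picks the input, so I would simply fix the instance where $V[0]$ starts at a single processing element, which maximizes the spreading work and is consistent with property (i) (taking that one processor's location as given); the $\max_{I}$ in Equation~\ref{eq:commcompNO} then only helps the lower bound.
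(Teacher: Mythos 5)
Your proposal is correct and follows essentially the same route as the paper's proof: track the set of processors that know $V[0]$, observe that a superstep multiplying this set by a factor $f$ forces degree at least $f-1$ (since only the current knowers can send the value) while also incurring the $\sigma$ term, and then minimize $\sum_j (f_j+\sigma)$ subject to $\prod_j f_j = \Omega(p)$ via the equal-factors argument and a final optimization over the number of supersteps $t$. The technical points you flag (max-degree vs.\ total messages, and the initial location of $V[0]$) are handled in the paper exactly as you suggest.
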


\begin{proof}
Let $\A$ be an algorithm in $\mathscr{C}$.  Suppose that the execution
of $\A$ on \evald\ requires $t$ supersteps, and let $p_i$ denote the
number of processors that ``know'' the value $V[0]$ by the end of the
$i$-th superstep, for $1 \leq i \leq t$. Clearly, $p_0 = 1$ and $p_t
\geq \epsilon p$, since by definition of $\mathscr{C}$, at least
$\epsilon p$ processors hold entries of $V$ to be updated with the
value $V[0]$. During the $i$-th superstep, $p_i-p_{i-1}$ new
processors get to know $V[0]$. Since at the beginning of this
superstep only $p_{i-1}$ processors know the value, we conclude that
the superstep involves an $h$-relation with $h \geq
\left\lceil (p_i-p_{i-1})/p_{i-1} \right\rceil$. 
Therefore, the communication complexity of $\A$ is
\begin{eqnarray*}
H_{\A}(n,p,\of) \geq 
\sum_{i=1}^{t}\left(\left\lceil \dfrac{p_i-p_{i-1}}{p_{i-1}}\right\rceil
+\of \right)
=
\sum_{i=1}^{t}\left(\left\lceil \dfrac{p_i}{p_{i-1}}\right\rceil
-1+\of \right).
\end{eqnarray*}
Assuming, without loss of generality, that the $p_i$'s are strictly
increasing, we obtain
\[
H_{\A}(n,p,\of)
= \BOM{t \max\{2,\of\} + \sum_{i=1}^{t} \dfrac{p_i}{p_{i-1}}}.
\]
Since $\prod_{i=1}^{t} {p_i/p_{i-1}} = p_t$, it follows that
$\sum_{i=1}^{t} p_i/p_{i-1}$ is minimized for $p_i/p_{i-1} = (p_t)^{1/t}
\geq (\epsilon p)^{1/t}$, for $1 \leq i \leq t$. Hence,
\begin{equation}\label{eq:genericLBbroadcast}
H_{\A}(n,p,\of) = \BOM{t \left(\max\{2,\of\}+ p^{1/t} \right)}.
\end{equation}
Standard calculus shows that the right-hand side is minimized (to
within a constant factor) by choosing 
$t = \BT{\log_{\max\{2,\of\}} p}$, and the claim follows.
\end{proof}

The above lower bound is tight. Consider the following
\evald\ algorithm for $n$-broadcast.  Let the entries of $V$ be evenly
distributed among the processors, with $V[0]$ held by processor
$\text{P}_0$.  For convenience we assume $n$ is a power of two.  Let
$\kappa$ be the smallest power of 2 greater than or equal to
$\max\{2,\of\}$.  The algorithm consists of $\lceil \log_{\kappa} p
\rceil$ supersteps: in the $i$-th superstep, with $0 \leq i < \lceil
\log_{\kappa} p\rceil$, each $\text{P}_{j p/\kappa^i}$, with $0 \leq j
< \kappa^i$, sends the value $V[0]$ to $\text{P}_{(\kappa j+\ell)p/
  \kappa^{i+1}}$, for each $0 \leq \ell < \kappa$. (When
$\log_{\kappa} p$ is not an integer value, in the last superstep only
values of $\ell$ that are multiples of $\kappa^{i+1}/p$ are used.)  It
is immediate to see that the algorithm belongs to $\mathscr{C}$ and
that its communication complexity on \evald\ is
\[
H_{\textnormal{broad}_\kappa}(n,p,\of) =
\BO{(\kappa + \of)\log_{\kappa} p}
= 
\BO{\max\{2,\of\} \log_{\max\{2,\of\}} p}.
\]
Therefore, the algorithm is $\BO{1}$-optimal. Observe that the algorithm 
is aware of parameter $\of$ and, in fact, this knowledge is
crucial to achieve optimality. In order to see this,
we prove that \emph{any} network-oblivious algorithm
for $n$-broadcast can be $\BT{1}$-optimal on \eval{p,\of}, only for
limited ranges of $\of$.  Let $H^{\star}(n,p,\of)$ denote the best
communication complexity achievable on \eval{p,\of} by an algorithm
for $n$-broadcast belonging to $\mathscr{C}$. By the above discussion
we know that $H^{\star}(n,p,\of) = \BT{\max\{2,\of\}
  \log_{\max\{2,\of\}} p}$.  Let $\A \in \mathscr{C}$ be a
network-oblivious algorithm for $n$-broadcast specified on
\spec{v(n)}.  For every $1 < p \leq v(n)$ and $0 \leq \of_1 \leq
\of_2$, we define the maximum slowdown incurred by $\A$ with respect
to the best \eval{p,\of}-algorithm in $\mathscr{C}$, for $\of \in
[\of_1,\of_2]$, as
\[
\GAP_{\A}(n,p,\of_1,\of_2) =
\max_{\substack{\of_1 \leq \of \leq \of_2}}
\left\{\dfrac{H_{\A}(n,p,\of)}{H^{\star}(n,p,\of)}\right\}.
\]
\begin{theorem}\label{thm:impossibility}
Let $\A \in \mathscr{C}$ be a network-oblivious algorithm for
$n$-broadcast specified on \spec{v(n)}.  For every $1 < p \leq v(n)$
and $0 \leq \of_1 \leq \of_2$, we have
\begin{equation*}
\GAP_{\A}(n,p,\of_1, \of_2) = 
\BOM{\dfrac{\log\max\{2,\of_2\}}
           {\log \max\{2,\of_1\} + \log\log \max\{2,\of_2\}}}.
\end{equation*}
\end{theorem}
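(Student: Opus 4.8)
The plan is to leverage the one property that distinguishes a network-oblivious algorithm from a network-aware one: since $\A$ carries no machine parameters, its folding onto \spec{p} is a fixed static computation whose communication pattern does not depend on $\of$. Hence there is a quantity $F_p$, \emph{independent of $\of$}, and an integer $t\ge 1$ (the number of supersteps of the folding in which new processors learn $V[0]$), such that $H_{\A}(n,p,\of)\ge F_p+t\of$, while $F_p=\BOM{t\,p^{1/t}}$ by instantiating at $\of=0$ the argument leading to Inequality~\ref{eq:genericLBbroadcast} in the proof of Theorem~\ref{thm:LBbroadcast}, applied to the folding of $\A$ (which belongs to $\mathscr{C}$). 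Write $a_j=\max\{2,\of_j\}$ for $j=1,2$, and recall from Theorem~\ref{thm:LBbroadcast} and its matching upper bound that $H^{\star}(n,p,\of)=\BT{a\log p/\log a}$ with $a=\max\{2,\of\}$.

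I would then bound $\GAP_{\A}(n,p,\of_1,\of_2)$ from below by the ratio $H_{\A}/H^{\star}$ at the two endpoints of the interval. At $\of=\of_2$, using $H_{\A}(n,p,\of_2)\ge t\of_2$, the ratio is $\BOM{t\log a_2/\log p}$; at $\of=\of_1$, using $H_{\A}(n,p,\of_1)\ge F_p$, the ratio is $\BOM{t\,p^{1/t}\log a_1/(a_1\log p)}$. So it suffices to prove that, for every admissible $t\ge 1$, the larger of these two quantities is $\BOM{\log a_2/(\log a_1+\log\log a_2)}$.

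This is a case split on $t$ against the threshold $t^{\star}=\log p/(\log a_1+\log\log a_2)$. If $t\ge t^{\star}$, the $\of_2$-ratio is already $\BOM{t^{\star}\log a_2/\log p}=\BOM{\log a_2/(\log a_1+\log\log a_2)}$. If $t<t^{\star}$, I use that $p^{1/t^{\star}}=2^{\log a_1+\log\log a_2}\ge a_1\log a_2$ together with the monotonicity of $t\mapsto t\,p^{1/t}$, which is decreasing on $(0,\ln p)$; since the paper's convention $\log x=\max\{1,\log_2 x\}$ forces $\log a_1+\log\log a_2\ge 2$, one has $t^{\star}<\ln p$, so $t\,p^{1/t}\ge t^{\star}p^{1/t^{\star}}$ on all of $[1,t^{\star}]$, whence the $\of_1$-ratio is at least $t^{\star}p^{1/t^{\star}}\log a_1/(a_1\log p)=\log a_1\log a_2/(\log a_1+\log\log a_2)$, again $\BOM{\log a_2/(\log a_1+\log\log a_2)}$ because $\log a_1\ge 1$. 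One boundary regime must be dispatched separately: when $a_2=\BO{1}$ (equivalently, the asserted lower bound is $\BO{1}$), the claim holds trivially since $\GAP_{\A}\ge 1$ by definition of $H^{\star}$.

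I expect the crux to be the case $t<t^{\star}$: the immediate inequality $p^{1/t}>a_1\log a_2$ available there still carries a stray factor $t/\log p$ and does not suffice on its own, so one genuinely needs the monotonicity of $t\,p^{1/t}$ on $(0,\ln p)$ — and hence the small but essential check, via the $\max\{1,\cdot\}$ convention, that $t^{\star}$ never escapes that interval. The rest is routine: establishing that $F_p$ is $\of$-independent, invoking the broadcast lower bound at $\of=0$, and manipulating the two endpoint ratios.
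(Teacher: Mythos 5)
Your proposal is correct and follows essentially the same route as the paper: both exploit that network-obliviousness fixes the superstep count $t$ independently of $\of$, reuse the lower bound $H_{\A}(n,p,\of)=\BOM{t(\max\{2,\of\}+p^{1/t})}$ from the proof of Theorem~\ref{thm:LBbroadcast}, evaluate the ratio $H_{\A}/H^{\star}$ at the two endpoints $\of_1,\of_2$, and optimize over $t$. Your explicit case split on $t$ against $t^{\star}=\log p/(\log a_1+\log\log a_2)$ (with the monotonicity of $t\mapsto t\,p^{1/t}$ on $(0,\ln p)$) is simply a careful rendering of the paper's ``standard calculus shows the sum is minimized at $t=\BT{t^{\star}}$'' step.
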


\begin{proof}
The definition of function $\GAP$ implies that 
\[
\GAP_{\A}(n,p,\of_1,\of_2) = 
\BOM{
\dfrac{H_{\A}(n,p,\of_1)}{H^{\star}(n,p,\of_1)}
+
\dfrac{H_{\A}(n,p,\of_2)}{H^{\star}(n,p,\of_2)}
}.
\]
Let $t$ be the number of supersteps executed by the folding of 
$\A$ on \eval{p,\of}, and note that, since $\A$ is network-oblivious, 
this number cannot depend on $\of$.  By arguing as in the proof of
Theorem~\ref{thm:LBbroadcast} (see
Inequality~\ref{eq:genericLBbroadcast}) we get that 
$H_{\A}(n,p,\of) = \BOM{t \left(\max\{2,\of\}+ p^{1/t} \right)}$, 
for any $\of$, hence $\GAP_{\A}(n,p,\of_1,\of_2)$ is bounded from
below by
\[
\BOM{
\dfrac{t \left(\max\{2,\of_1\}+ p^{1/t} \right)}
{\max\{2,\of_1\}\log_{\max\{2,\of_1\}} p}
+
\dfrac{t \left(\max\{2,\of_2\}+ p^{1/t} \right)}
{\max\{2,\of_2\}\log_{\max\{2,\of_2\}} p}
},
\]
which is minimized for $t = \BT{\log p/(\log \max\{2,\of_1\} + \log
\log \max \{2,\of_2\})}$.  Substituting this value of $t$ in the above
formula yields the stated result.
\end{proof}

An immediate consequence of the above theorem is that if a
network-oblivious algorithm for $n$-broadcast is $\BT{1}$-optimal on
\eval{p,\of} it cannot be simultaneously $\BT{1}$-optimal on an
\eval{p,\of'}, for any $\of'$ sufficiently larger than $\of$.
A similar limitation of the optimality of a network-oblivious
algorithm for $n$-broadcast can be argued with respect to 
its execution on \execd.

\section{Extension to the Optimality Theorem}\label{sec:extension}
The optimality theorem of Section~\ref{sec:optimality} makes crucial
use of the wiseness property. Broadly speaking, a network-oblivious
algorithm is $(\BT{1},p)$-wise when the communication performed in the
various supersteps is somewhat balanced in the sense that the maximum
number of messages sent/received by a virtual processor does not
differ significantly from the average number of messages sent/received
by other virtual processors belonging to the same region of suitable
size.  While there exist $(\BT{1},p)$-wise network-oblivious
algorithms for a number of important problems, as shown in
Section~\ref{sec:algorithms}, there cases where wiseness may not be
guaranteed.

As a simple example of poor wiseness, consider a network-oblivious
algorithm $\A$ for \spec{n} consisting of one $0$-superstep where
$\text{VP}_0$ sends $n$ messages to $\text{VP}_{n/2}$.  Fix $p$ with
$2\leq p \leq n$. Clearly, for each $1 \leq j \leq \log p$ we have
that $H_{\A}(n,2^j, 0)=n$, hence the algorithm is $(\alpha, p)$-wise
only for $\alpha = \BO{1/p}$.  When executed on a \exec{p, \bfm{g},
  \mathbf{0}}, the communication time of the algorithm is $ng_0$.
However, as already observed in \cite{BilardiPP07}, under reasonable
assumptions the communication time of the algorithm's execution on
the D-BSP can be improved by first evenly spreading the $n$ messages
among clusters of increasingly larger size which include the sender,
and then gathering the messages within clusters of increasingly smaller
size which include the receiver. Motivated by this observation, we
introduce a more effective protocol to execute network-oblivious
algorithms on the D-BSP. By employing this protocol we are able to
prove an alternative optimality theorem which requires a much weaker
property than wiseness, at the expense of a slight (polylogarithmic)
loss of efficiency.

Let $\A$ be a network-oblivious algorithm specified on \spec{v(n)},
and consider its execution on a \exec{p, \bfm{g}, \bfm{\ell}}, with
$1 \leq p \leq v(n)$. As before, each D-BSP processor $\text{P}_j$,
with $0\leq j<p$, carries out the operations of the $v(n)/p$
consecutively numbered VPs of \spec{v(n)} starting with
$\text{VP}_{j(v(n)/p)}$.  However, the communication required by each
superstep is now performed on D-BSP more effectively by enforcing a
suitable balancing.  More precisely, each $i$-superstep $s$ of $\A$,
with $0 \leq i < \log p$, is executed on the D-BSP through the
following protocol, which we will refer to as \emph{novel protocol}:
\begin{enumerate}
\item
(\textit{Computation phase}) Each D-BSP processor 
performs the local computations of its assigned virtual processors.
\item
(\textit{Ascend phase}) For $k=\log p-1$ downto $i+1$: within each
$k$-cluster $\Gamma_k$, the messages  which originate in
$\Gamma_k$ but are destined outside $\Gamma_k$, are evenly distributed
among the $p/2^k$ processors of $\Gamma_k$.
\item
(\textit{Descend phase}) For $k=i$ to $\log p-1$: within each
  $k$-cluster $\Gamma_k$, the messages currently residing in
  $\Gamma_k$ are evenly distributed among the processors of the
$(k+1)$-clusters inside $\Gamma_k$ which contain their final 
destinations.
\end{enumerate}
Observe that each iteration of the ascend/descend
phases requires a prefix-like computation to assign suitable
intermediate destinations to the messages in order to guarantee their
even distribution in the appropriate clusters. 
\begin{lemma}\label{lem:sim_dbsp}
Let $\A$ be a network-oblivious algorithm specified on \spec{v(n)} and
consider its execution on \exec{p,\bfm{g},\bfm{\ell}}, with $1 < p
\leq v(n)$, using the novel protocol. Let $\xi_s$ be the sequence of
supersteps employed by the protocol for executing an $i$-superstep $s$ of
$\A$, where $0 \leq i < \log p$.  Then, for every $i< k < \log p$,
$\xi_s$ comprises $\BO{1}$ $k$-supersteps of degree $\BO{2^{k}
  h_{\A}^s(n,2^{k})/p}$, and $\BO{\log p}$ $k$-supersteps each of
constant degree.
\end{lemma}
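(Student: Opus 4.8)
The plan is to fix an $i$-superstep $s$ of $\A$ and walk through the three phases of the novel protocol, classifying each superstep it generates by cluster level and bounding its degree. The tool behind every degree estimate is a \emph{folding identity}: for $0\le j\le\log p$, the VPs assigned to a $j$-cluster of \execd\ are exactly the VPs handled by one processor in the $2^j$-folding of $\A$ (immediate from the definitions, since $v(n)$ and $p$ are powers of two). Hence, among the messages of superstep $s$, at most $h_{\A}^s(n,2^j)$ have source inside a fixed $j$-cluster and destination outside it, being a subset of those sent by the corresponding $2^j$-folding processor; symmetrically at most $h_{\A}^s(n,2^j)$ enter that $j$-cluster from outside.

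For the \emph{descend phase} a downward induction on $k$ shows that at the start of descend-iteration $k$ every message of $s$ with destination in a $k$-cluster $\Gamma_k$ already resides in $\Gamma_k$ and, for $k>i$, is evenly spread over its $p/2^k$ processors; since there are at most $h_{\A}^s(n,2^k)$ of them, each processor holds $\BO{2^k h_{\A}^s(n,2^k)/p}$. Iteration $k$ then redistributes these messages inside $\Gamma_k$ so as to evenly fill the $(k+1)$-subcluster containing each destination: this is a single $k$-superstep whose send-degree is at most what a processor already held, and whose receive-degree is the number of messages destined for a $(k+1)$-subcluster divided by its $p/2^{k+1}$ processors, at most $2^{k+1}h_{\A}^s(n,2^k)/p$; both are $\BO{2^k h_{\A}^s(n,2^k)/p}$. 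The \emph{ascend phase} is symmetric: ascend-iteration $k$ collects the at most $h_{\A}^s(n,2^k)$ messages leaving each $k$-cluster and spreads them evenly over its $p/2^k$ processors, again a single $k$-superstep of receive-degree $\BO{2^k h_{\A}^s(n,2^k)/p}$. These two movement supersteps furnish the $\BO{1}$ ``heavy'' $k$-supersteps, for every $i<k<\log p$.

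The constant-degree supersteps arise from the bookkeeping each balancing needs: inside the cluster it operates on, an exclusive prefix of the per-processor message counts together with the broadcast of the corresponding totals, in order to assign balanced intermediate destinations. On a $k'$-cluster --- a binary hierarchy whose internal levels are $k',k'+1,\dots,\log p-1$ --- a textbook up-sweep/down-sweep computes all of this with $\BO{1}$ constant-degree supersteps at each of those levels. Since only the $\BO{\log p}$ ascend/descend iterations that operate on clusters of level at most $k$ can produce a level-$k$ superstep from their prefixes, and each produces $\BO{1}$ of them, the total number of constant-degree $k$-supersteps is $\BO{\log p}$. Collecting the two tallies yields the statement.

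The step I expect to be the main obstacle is the \emph{send}-degree of the ascend-phase movements. The receive-degree is handed over by the folding identity, but a processor forwarding messages at ascend-iteration $k$ holds whatever the finer iteration $k+1$ assigned to it, and a subset of a collection spread evenly over a cluster need not itself be spread evenly. The remedy is to perform each balancing in a \emph{refinement-compatible} way: when iteration $k+1$ spreads the messages leaving a $(k+1)$-cluster over its processors, it should interleave them according to the length of the longest common prefix of the source and destination indices (equivalently, by the coarsest cluster they must exit), so that every downward-closed sub-collection also stays evenly spread. Then, at iteration $k$, a processor forwards at most $h_{\A}^s(n,2^{k})\cdot 2^{k+1}/p=\BO{2^{k}h_{\A}^s(n,2^{k})/p}$ messages, the key arithmetic being that the number of messages that must leave $\Gamma_k$ is bounded by the $k$-cluster quantity $h_{\A}^s(n,2^{k})$, while they currently sit spread over the $p/2^{k+1}$ processors (half of $p/2^k$) of a $(k+1)$-subcluster. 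The delicate point is to confirm that this interleaving, implemented with per-granularity prefixes, does not inflate the superstep counts beyond $\BO{\log p}$ per level nor the heavy-superstep degree beyond $\BO{2^k h_{\A}^s(n,2^k)/p}$, up to an additive $\BO{\log p}$ that is absorbed into the constant-degree supersteps.
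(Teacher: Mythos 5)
Your proof follows the same route as the paper's: bound the message volume entering/leaving each cluster via the folding correspondence, maintain an even-distribution invariant through the ascend and descend iterations so that iteration $k$ amounts to a single $k$-superstep of the claimed degree, and charge the bookkeeping to $\BO{\log p}$ constant-degree supersteps per level via tree-based prefix computations. Two remarks. First, the ``main obstacle'' you isolate --- that a subset of an evenly spread collection need not itself be evenly spread, so the send-degree at ascend-iteration $k$ is not immediate --- is a real issue, but the paper resolves it without your refinement-compatible interleaving: it simply bounds the send-degree by the size of the \emph{whole} collection each processor received at iteration $k+1$, obtaining an $\BO{\lceil 2^{k+1}h^s_{\A}(n,2^{k+1})/p\rceil}$-relation, i.e.\ it expresses the degree of the level-$k$ supersteps through the level-$(k+1)$ quantity. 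Since $h^s_{\A}(n,2^{k})\le 2h^s_{\A}(n,2^{k+1})$ and in the application (Theorem~\ref{thm:extoptimality}) one only uses $h^s_{\A}(n,2^{k+1})\le 2^{j-k-1}h^s_{\A}(n,2^{j})$, both forms telescope to the same bound; your interleaving is a genuine tightening but is not needed.

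Second, the one place where your argument is actually off is the descend phase: you claim that the messages handled at iteration $k$ number at most $h^s_{\A}(n,2^{k})$ and are already evenly spread over all $p/2^{k}$ processors of $\Gamma_k$. This misses the messages that \emph{turn around} at level $k$, i.e.\ those whose source and destination lie in the same $k$-cluster but in different $(k+1)$-subclusters: at the start of descend-iteration $k$ these still sit spread only over their source's $(k+1)$-subcluster, and their number is controlled by $2h^s_{\A}(n,2^{k+1})$, not by $h^s_{\A}(n,2^{k})$ (the latter can be zero while the former is large). The paper's invariant is phrased exactly to account for this (``at most $2h^s_{\A}(n,2^{k+1})$ messages to be moved in the iteration''), which again yields the $(k+1)$-indexed degree. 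Correcting your count to this form leaves the rest of your argument, and the downstream use of the lemma, intact.
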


\begin{proof}
Consider iteration $k$ of the ascend phase of the protocol, with
$i+1\leq k \leq \log p-1$, and a $k$-cluster $\Gamma_k$. As invariant
at the beginning of the iteration, we have that the at most
$h_{\A}^s(n,2^{k+1})$ messages originating in each $k+1$-cluster
$\Gamma'$ included in $\Gamma_k$ and destined outside $\Gamma_k$ are
evenly distributed among the processors of $\Gamma'$. Hence, the even
distribution of these messages among the $p/2^k$ processors of
$\Gamma_k$ requires a prefix-like computation and an $\BO{\lceil
  2^{k+1} h_{\A}^s(n,2^{k+1})/p\rceil}$-relation within
$\Gamma_k$. Consider now iteration $k$ of the descend phase of the
protocol, with $i\leq k \leq \log p-1$, and a $k$-cluster
${\Gamma_k}$. As invariant at the beginning of the iteration, we have
that the at most $2h_{\A}^s(n,2^{k+1})$ messages to be moved in the
iteration are evenly distributed among the processors of $\Gamma_k$.
Since each $(k+1)$-cluster included in $\Gamma_k$ receives at most
$h_{\A}^s(n,2^{k+1})$ messages, the iteration requires a prefix-like
computation and an $\BO{\lceil 2^{k+1}
  h_{\A}^s(n,2^{k+1})/p\rceil}$-relation within $\Gamma_k$.  The lemma
follows, since each prefix-like computation in a $k$-cluster can be
performed in $\BO{\log p}$ $k$-supersteps of constant degree (e.g.,
using a straightforward tree-based strategy~\cite{JaJa92}).
\end{proof}

We now define the notion of fullness which is weaker than 
wiseness but which still allows us to port the optimality of
network-oblivious algorithms with respect to the evaluation model onto
the execution machine model, although incurring some loss of
efficiency.
\begin{definition}\label{def:full}
A static network-oblivious algorithm $\A$ specified on \spec{v(n)} is
said to be \emph {$(\gamma,p)$-full}, for some $\gamma > 0$ and
$1 < p \leq v(n)$, if considering the folding of 
$\A$ on \eval{2^{j},0} we have 
\[
\sum_{i=0}^{j-1} F_{\A}^i\left(n, 2^j\right) \geq \gamma \frac{p}{2^j} \sum_{i=0}^{j-1}
S_{\A}^i(n),
\]
for every $1 \leq j \leq \log p$ and input size $n$.
\end{definition}

It is easy to see that a $(\BT{1},p)$-wise network-oblivious algorithm
$\A$ is also $(\BT{1},p)$-full as long as $h_{\A}^s(n,p) \geq 1$, for
every $i$-superstep $s$ of $\A$ and every $1 < p \leq v(n)$. On the
other hand, a $(\BT{1},p)$-full algorithm is not necessarily
$(\BT{1},p)$-wise, as witnessed by the previously mentioned
network-oblivious algorithm consisting of a single 0-superstep where
$\text{VP}_0$ sends $n$ messages to $\text{VP}_{n/2}$, which is
$(\BT{1},p)$-full but not $(\BT{1},p)$-wise, for any $2\leq p \leq n$.
In this sense, $(\gamma,p)$-fullness is a weaker condition than
$(\BT{1},p)$-wiseness.

The following theorem shows that when $(\gamma,p)$-full algorithms are
executed on the D-BSP using the novel protocol, optimality in the
evaluation model is preserved on the D-BSP within a polylogarithmic
factor. As in Section~\ref{sec:optimality}, let $\mathscr{C}$ denote a
given class of static algorithms solving a problem $\Pi$, with the
property that for any algorithm $\A\in \mathscr{C}$, all its foldings
on $p\geq 2$ processing elements also belong to $\mathscr{C}$.

\begin{theorem}\label{thm:extoptimality}
Let $\A\in \mathscr{C}$ be a $(\gamma,p^\star)$-full network-oblivious
algorithm for some $\gamma>0$ and a power of two $p^*$. Let also
$\{\of^m_0, \ldots, \of^m_{\log p^{\star}-1}\}$ and $\{\of^M_0,
\ldots, \of^M_{\log p^{\star}-1}\}$ be two vectors of non-negative
values, with 
  $\of^m_j \leq \of^M_j$, for every $0 \leq j < \log p^{\star}$.
If $\A$ is $\beta$-optimal on \eval{2^{j},\of}
w.r.t. $\mathscr{C}$, for $\of_{j-1}^m \leq \of \leq \of^M_{j-1}$
and $1 \leq j \leq \log p^\star$, then, for every power of two
$p \leq p^\star$, $\A$ is $\BT{\beta/((1+1/\gamma)\log^2 p)}$-optimal
on \execd\ w.r.t. $\mathscr{C}$ when executed with the above protocol,
as long as:
\begin{itemize}
 \item the execution of $\A$ on \execd\ using the above protocol is in $\mathscr{C}$;
 \item $g_{i} \geq g_{i+1} $ and $\ell_{i}/g_{i} \geq \ell_{i+1}/g_{i+1}$, for $0 \leq i < \log p-1$;
 \item $\max_{1 \leq k \leq \log p}\{\of^m_{k-1} 2^{k} /p\} \leq \ell_{i}/g_{i} \leq
\min_{1 \leq k \leq \log p}\{\of^M_{k-1} 2^{k} /p\}$, for $0 \leq i < \log p$.
\end{itemize}
\end{theorem}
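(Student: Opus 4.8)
The plan is to replay the proof of Theorem~\ref{thm:optimality}, with two changes: the quantity to be bounded is now $D_{\A}(n,p,\bfm{g},\bfm{\ell})$ for the \emph{novel-protocol} execution of $\A$, which has to be controlled through Lemma~\ref{lem:sim_dbsp}, and the step that there used $(\alpha,p^\star)$-wiseness to obtain Inequality~\ref{eq:traffics} is replaced by an argument based on $(\gamma,p^\star)$-fullness. Throughout, $\C$ denotes an arbitrary algorithm of $\mathscr{C}$, $D_{\C}=D_{\C}(n,p,\bfm{g},\bfm{\ell})=\sum_{i=0}^{\log p-1}(F_{\C}^i(n,p)g_i+S_{\C}^i(n)\ell_i)$, and $\psi^m_p,\psi^M_p$ are the quantities defined in the proof of Theorem~\ref{thm:optimality}; the hypothesis on $\ell_i/g_i$ again gives $\psi^m_p\le\psi^M_p$ and makes both $\psi=\psi^m_p$ and $\psi=\psi^M_p$ admissible in Inequality~\ref{eq:partial_sums} for each $1\le j\le\log p$.

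First I would reuse the derivation of Inequality~\ref{eq:partial_sums} essentially verbatim (fold $\C$ down to $2^j$ processors, invoke $\beta$-optimality of $\A$ on $M(2^j,\psi p/2^j)$, expand the communication complexity, and apply Lemma~\ref{lem:general_up} to $\C$). Multiplying by $2^j/p$, setting $\psi=\psi^m_p$, then discarding the nonnegative $S_{\A}$ terms on the left and using $\psi^m_p\le\ell_i/g_i$, yields for $1\le j\le\log p$
\[
\frac{2^j}{p}\sum_{i=0}^{j-1}F_{\A}^i(n,2^j)\;\le\;\frac{1}{\beta}\sum_{i=0}^{j-1}\left(F_{\C}^i(n,p)+S_{\C}^i(n)\frac{\ell_i}{g_i}\right).
\]
Setting instead $\psi=\psi^M_p$ and dividing by it (the degenerate case $\psi^M_p=0$ forces every $\ell_i=0$ and is then immediate) gives $\sum_{i<j}S_{\A}^i(n)\le\frac1\beta\sum_{i<j}(F_{\C}^i(n,p)/\psi^M_p+S_{\C}^i(n))$; and combining the displayed inequality with $(\gamma,p^\star)$-fullness, i.e.\ $\sum_{i<j}S_{\A}^i(n)\le\frac1\gamma\cdot\frac{2^j}{p}\sum_{i<j}F_{\A}^i(n,2^j)$, gives $\sum_{i<j}S_{\A}^i(n)\le\frac{1}{\gamma\beta}\sum_{i<j}(F_{\C}^i(n,p)+S_{\C}^i(n)\ell_i/g_i)$. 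Two applications of Lemma~\ref{lem:dominance} to these two partial-sum bounds on $\sum_{i<j}S_{\A}^i(n)$ — with weights $f_i=\ell_i$ and $f_i=g_i$ respectively, both nonincreasing since $\ell_i=(\ell_i/g_i)g_i$ is a product of nonincreasing nonnegative sequences — together with $\ell_i/\psi^M_p\le g_i$, produce
\[
\sum_{i} S_{\A}^i(n)\,\ell_i\le\frac{1}{\beta}\,D_{\C},\qquad \sum_{i} S_{\A}^i(n)\,g_i\le\frac{1}{\gamma\beta}\,D_{\C}.
\]
Likewise, replacing $j$ by $k$ in the displayed inequality, multiplying by $g_k$, summing over $k$, swapping the order of summation and using $\sum_{k>i}g_k\le g_i\log p$, gives $\sum_k g_k\cdot\frac{2^k}{p}\sum_{i<k}F_{\A}^i(n,2^k)\le\frac{\log p}{\beta}D_{\C}$.

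It remains to express the protocol's communication time in these terms. By Lemma~\ref{lem:sim_dbsp}, the protocol realizes an $i$-superstep $s$ of $\A$ by, for each level $i\le k<\log p$, $\BO{1}$ $k$-supersteps of degree $\BO{2^k h_{\A}^s(n,2^k)/p}$ plus $\BO{\log p}$ $k$-supersteps of constant degree; charging $g_k$ per message and $\ell_k$ per superstep, summing over $s\in L_{\A}^i(n)$ (so that $\sum_{s\in L_{\A}^i(n)} h_{\A}^s(n,2^k)=F_{\A}^i(n,2^k)$ and $|L_{\A}^i(n)|=S_{\A}^i(n)$) and over $i$, and using $g_k\le g_i$, $\ell_k\le\ell_i$ for $k\ge i$, one obtains
\[
D_{\A}(n,p,\bfm{g},\bfm{\ell})=\BO{\sum_{k=1}^{\log p-1}g_k\,\frac{2^k}{p}\sum_{i=0}^{k-1}F_{\A}^i(n,2^k)\;+\;\log^2 p\sum_{i=0}^{\log p-1}S_{\A}^i(n)(g_i+\ell_i)}.
\]
Plugging in the three bounds of the previous paragraph makes the first summand $\BO{(\log p/\beta)D_{\C}}$ and the second $\BO{(1+1/\gamma)\log^2 p\cdot D_{\C}/\beta}$, whence $D_{\A}(n,p,\bfm{g},\bfm{\ell})=\BO{(1+1/\gamma)\log^2 p/\beta}\,D_{\C}(n,p,\bfm{g},\bfm{\ell})$ for every $\C\in\mathscr{C}$. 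Since, by the first hypothesis, the protocol execution of $\A$ itself belongs to $\mathscr{C}$, Definition~\ref{def:bopt2} then yields the claimed $\BT{\beta/((1+1/\gamma)\log^2 p)}$-optimality of $\A$ on \execd.

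I expect the main obstacle to be the bookkeeping inside the protocol-cost formula: correctly charging the prefix-like computations of Lemma~\ref{lem:sim_dbsp} (which is exactly where the extra $\log^2 p$ factor is produced), accounting for the level-$i$ descend iteration that falls just outside the range $i<k<\log p$ stated in the lemma (it contributes a term of the same shape, with $g_i$ in place of $g_{i+1}$, and is absorbed into the first summand), and reorganizing the double sum $\sum_i\sum_k$ into a single sum over $k$ so that Inequality~\ref{eq:partial_sums} and fullness apply term by term. The one genuinely new ingredient — and the reason fullness can replace wiseness — is that fullness is invoked solely to trade the term $\sum_i S_{\A}^i(n)g_i$, which $\beta$-optimality alone cannot control because $g_i/\psi^M_p$ need not be $\BO{1}$, for a multiple of $D_{\C}$, at the cost of the factor $1+1/\gamma$; everything else is a mechanical re-run of the proof of Theorem~\ref{thm:optimality}.
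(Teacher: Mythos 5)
Your argument is correct, and it reaches the theorem by a genuinely different organization than the paper's. The paper packages the novel-protocol execution of $\A$ as a new network-oblivious algorithm $\tilde{\A}$ on \spec{p}, uses Lemma~\ref{lem:sim_dbsp} together with $(\gamma,p^\star)$-fullness to show $H_{\tilde{\A}}(n,2^j,\of)=\BO{(1+1/\gamma)\log^2 p\cdot H_{\A}(n,2^j,\of)}$ (so that $\tilde{\A}$ inherits $\beta/((1+1/\gamma)\log^2 p)$-optimality in the evaluation model), argues that $\tilde{\A}$ can be made $(\BT{1},p)$-wise by padding its supersteps with dummy messages, and then invokes Theorem~\ref{thm:optimality} as a black box. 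You instead bound the D-BSP communication time of the protocol execution directly, inlining the partial-sum machinery from the proof of Theorem~\ref{thm:optimality}: you apply Inequality~\ref{eq:partial_sums} separately at every level $k$ (which neatly sidesteps the conversion $h^s_{\A}(n,2^k)\le 2^{j-k}h^s_{\A}(n,2^j)$ used in the paper), invoke fullness exactly once—to trade $\sum_i S^i_{\A}(n)g_i$ for a multiple of $D_{\C}$, which is indeed the one term $\beta$-optimality alone cannot control—and charge the protocol's cost via Lemma~\ref{lem:sim_dbsp}. What your route buys is that you never need to establish wiseness of the transformed algorithm (the paper's dummy-message step, which is asserted rather than verified); what the paper's route buys is modularity, reusing Theorem~\ref{thm:optimality} verbatim. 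The off-by-one issue you flag—the descend iteration at level $k=i$ falling outside the range stated in Lemma~\ref{lem:sim_dbsp}—is present in the paper's own accounting as well (its sum likewise starts at $k=i+1$), and your fix (weight that term by $g_i$ and rerun the estimate with $\sum_{k>i}g_{k-1}\le g_i\log p$) is sound. Both derivations yield the same $\BO{(1+1/\gamma)\log^2 p/\beta}$ loss, and your closing appeal to the hypothesis that the protocol execution belongs to $\mathscr{C}$ correctly discharges the requirement of Definition~\ref{def:bopt2}.
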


\begin{proof}
Consider the execution of $\A$ on a \exec{p, \bfm{g}, \bfm{\ell}}
using the novel protocol.  Let $\tilde \A$ denote the actual sequence
of supersteps performed on the D-BSP in this execution of $\A$. Note
that once the D-BSP parameters are fixed, $\tilde \A$ can be regarded
as a network-oblivious algorithm specified on \spec{p}.  Clearly, any
optimality considerations on the communication time of the execution
of $\tilde \A$ (regarded as a network-oblivious algorithm) on \exec{p,
  \bfm{g}, \bfm{\ell}} using the standard protocol, will also apply to
the communication time of the execution of $\A$ on \exec{p, \bfm{g},
  \bfm{\ell}} using the novel protocol, being the two communication
times the same.

We will assess the degree of optimality of the communication time of
$\tilde \A$ by resorting to Theorem~\ref{thm:optimality}. This entails
analyzing the communication complexity of $\tilde \A$ on \eval{2^{j},
  \of}, for any $1 \leq j \leq \log p$, and determining its wiseness.
Focus on \eval{2^{j}, \of} for some $1 \leq j \leq \log p$, and
consider an arbitrary $i$-superstep $s$ of $\A$, for some $0\leq i<j$.
Let $\xi_s$ be the sequence of supersteps in $\tilde \A$ executed in
the ascend and descend phases associated with superstep $s$.  From
Lemma~\ref{lem:sim_dbsp}, we know that for every $i< k < \log p$,
$\xi_s$ comprises $\BO{1}$ $k$-supersteps of degree $\BO{2^{k}
  h_{\A}^s(n,2^{k})/p}$, and $\BO{\log p}$ $k$-supersteps each of
constant degree.  Now, in the execution on \eval{2^{j}, \of} a
$k$-superstep with $k \geq j$ becomes local to the processors and does
not contribute to the communication complexity. Since each processor
of \eval{2^{j}, \of} corresponds to $p/2^j$ processors of \eval{p},
the communication complexity on \eval{2^{j}, \of} contributed by the
sequence $\xi_s$ is
\begin{equation*}
\BO{\sum_{k=i+1}^{j-1} \left(
\frac{p}{2^j}\left(
\dfrac{2^{k}}{p}h_{\A}^s(n,2^{k})
+ \log p \right) + \of \log p \right)
}.
\end{equation*}
Therefore, since $h_{\A}^s(n,2^{k}) \leq 2^{j-k}h_{\A}^s(n,2^j)$, the
above summation is upper bounded by
\[
\BO{\sum_{k=i+1}^{j-1} \left(
h_{\A}^s(n, 2^{j}) + \frac{p \log p}{2^j} + \of
\log p}\right)=
\BO{
\left(h_{\A}^s(n, 2^{j}) + \frac{p}{2^j} + \of
\right)\log^2 p}.
\]
Recall that $L_{\A}^i(n)$ denotes the set of $i$-supersteps executed
by $\A$, and $S^i_{\A}(n) = |L_{\A}^i(n)|$. Thus, the communication
complexity of $\tilde \A$ on \eval{2^{j}, \of} can be written as
\begin{eqnarray*}
H_{\tilde \A}(n, 2^{j}, \of) & = &
\BO{\sum_{i=0}^{j-1} \sum_{s \in L_{\A}^i(n)}  
\left(h_{\A}^s(n, 2^{j}) + \frac{p}{2^j} + \of
\right)\log^2 p} \\
& = &
\BO{\log^2 p 
\left(\sum_{i=0}^{j-1} \sum_{s \in L_{\A}^i(n)}  
\left(h_{\A}^s(n, 2^{j}) + \of
\right) +
\sum_{i=0}^{j-1} \sum_{s \in L_{\A}^i(n)}  
\frac{p}{2^j}\right)
} \\
& = &
\BO{\log^2 p 
\left(H_{\A}(n, 2^{j}, \of) +
\frac{p}{2^j}  \sum_{i=0}^{j-1} 
S^i_{\A}(n) \right)
} \\
& = &
\BO{(1+1/\gamma) \log^2 p \cdot H_{\A}(n, 2^{j}, \of)},
\end{eqnarray*}
where the last inequality follows by the $(\gamma,p^*)$-fullness of $\A$. 

The above inequality shows that algorithm $\tilde \A$ is
$\beta/((1+1/\gamma) \log^2 p)$-optimal as a consequence of the
$\beta$-optimality of $\A$. Let us now assess the wiseness of $\tilde
\A$.  Consider again the sequence $\xi_s$ of supersteps of $\tilde \A$
associated with an arbitrary $i$-superstep $s$ of $\A$, for some $0\leq
i< \log p$. We know that for every $i< k < \log p$,
$\xi_s$ comprises $\BO{1}$ $k$-supersteps of degree $\BO{2^{k}
  h_{\A}^s(n,2^{k})/p}$, and $\BO{\log p}$ $k$-supersteps each of
constant degree. Moreover, we can assume that suitable dummy messages
are added so that in a $k$-superstep of degree $\BO{2^{k}
  h_{\A}^s(n,2^{k})/p}$ (resp., degree $\BO{1}$) all processors of a
$(k+1)$-cluster send $\BT{2^{k} h_{\A}^s(n,2^{k})/p}$ (resp.,
$\BT{1}$) messages to the sibling $(k+1)$-cluster included in the same
$k$-cluster. It is easy to see that the above considerations about the
optimality of $\tilde \A$ remain unchanged, while $\tilde \A$ becomes
$(\BT{1},p)$-wise.  Finally, we recall that $\tilde \A$ belongs to
class $\mathscr{C}$ by hypothesis, and this is so even forcing it into
being wise.  Therefore, by applying Theorem~\ref{thm:optimality} to
$\tilde \A$, we can conclude that $\tilde \A$, hence $\A$, is
$\BT{\beta/((1+1/\gamma)\log^2 p)}$-optimal on a \execd\ with parameters
satisfying the initial hypotheses.
\end{proof}

We conclude this section by observing that the relation stated by the above theorem between
optimality in the evaluation model and optimality in D-BSP can be tighten when the $g_{i}$ and
$\ell_{i}$ parameters of the D-BSP decrease geometrically.  In this case, it is known that a
prefix-like computation within a $k$-cluster, for $0 \leq k < \log p$, can be performed in
$\BO{g_k+\ell_k}$ communication time (e.g., see \cite[Proposition 2.2.2]{BilardiPP07}).
Then, by a similar argument used to prove Theorem~\ref{thm:extoptimality} it can be shown
that a $(\gamma,p)$-full algorithm $\A$ which is $\beta$-optimal in the evaluation model
becomes $\BT{\beta/((1+1/\gamma)\log p)}$-optimal when executed on the D-BSP, thus
reducing by a factor $\log p$ the gap between the two optimality factors.

\section{Conclusions}\label{sec:conclusions}

We introduced a framework to explore the design of network-oblivious
algorithms, that is, algorithms which run efficiently on machines with
different processing power and different bandwidth/latency
characteristics, without making explicit use of architectural
parameters for tuning performance.  In the framework, a
network-oblivious algorithm is written for $v(n)$ virtual processors
(specification model), where $n$ is the input size and $v(\cdot)$ a
suitable function. Then, the performance of the algorithm is analyzed
in a simple model (evaluation model) consisting of $p\leq v(n)$
processors and where the impact of the network topology on
communication costs is accounted for by a latency parameter
$\of$. Finally, the algorithm is executed on the D-BSP
model~\cite{delaTorreK96,BilardiPP07} (execution machine model), which
describes reasonably well the behavior of a large class of
point-to-point networks by capturing their hierarchical structures. A
D-BSP consists of $p\leq v(n)$ processors and its network topology is
described by the $\log p$-size vectors $\bfm{g}$ and $\bfm{\ell}$,
which account for bandwidth and latency costs within nested clusters,
respectively. We have shown that for static network-oblivious
algorithms, where the communication requirements depend only on the
input size and not on the specific input instance (e.g., algorithms
arising in DAG computations), the optimality on the evaluation model
for certain ranges of $p$ and $\of$ translates into
optimality on the D-BSP model for corresponding ranges of 
the model's parameters. This result justifies the introduction of the
evaluation model that allows for a simple analysis of
network-oblivious algorithms while effectively bridging the
performance analysis to D-BSP which more accurately models the
communication infrastructure of parallel platforms through a
logarithmic number of parameters.

We devised $\BT{1}$-optimal static network-oblivious algorithms for
prominent problems such as matrix multiplication, FFT, and sorting,
although in the case of sorting optimality is achieved only when the
available parallelism is polynomially sublinear in the input
size. Also, we devised suboptimal, yet efficient, network-oblivious
algorithms for stencil computations, and we explored limitations of
the oblivious approach by showing that for the broadcasting problem
optimality in D-BSP can be achieved by a network-oblivious algorithm
only for rather limited ranges of the parameters.  Similar negative
results were also proved in the realm of cache-oblivious algorithms
(e.g., see~\cite{BilardiP01,BrodalF03,Silvestri06,Silvestri08}). Despite
these limitations, the pursuit of oblivious algorithms appears worthwhile
even when the outcome is a proof that no such algorithm can be
$\BT{1}$-optimal on an ample class of target machines. Indeed, the
analysis behind such a result is likely to reveal what kind of
adaptivity to the target machine is necessary to obtain optimal
performance. 

The present work can be naturally extended in several directions, some
of which are briefly outlined below. First, it would be useful to
further assess the effectiveness of our framework by developing novel
efficient network-oblivious algorithms for prominent problems beyond
the ones of this paper.  Some progress in this direction has been done
in~\cite{ChowdhuryRSB13,DemmelEFKLSS13}.
For the problems considered here, in
particular sorting and stencil computations, it would be very interesting
to investigate the potentiality of the network-oblivious approach at a
fuller degree.  More generally, it would be interesting to develop
lower-bound techniques to limit the level of optimality that
network-oblivious algorithms can reach on certain classes of target
platforms.  Another challenging goal concerns the generalization of
the results of Theorems~\ref{thm:optimality} and
\ref{thm:extoptimality} to algorithms for a wider class of problems,
and the weakening of the assumptions (wiseness or fullness) required to
prove these results.  It would be also useful to identify other
classes of machines, for which network-oblivious algorithms can be
effective.  Another natural question to investigate is how to
determine an efficient virtual-to-physical processor mapping for a
network with arbitrary topology.
Again, it would be very interesting to generalize our work to apply
to computing scenarios, such as traditional time-shared systems as well
as emerging global computing environments, where the amount of resources
devoted to a specific application can itself vary dynamically over time,
in the same spirit as~\cite{BenderEFGJM14} generalized the cache-oblivious
framework to environments in which the amount of memory available to an
algorithm can fluctuate. Finally, a somewhat ultimate goal is
represented by the integration of cache- and network-obliviousness in
a unified framework for the development of \emph{machine}-oblivious
computations, as recently done in~\cite{ChowdhuryRSB13} for shared-memory
platforms.

\paragraph*{Acknowledgments.}
The authors would like to thank Vijaya Ramachandran for helpful discussions.

\bibliographystyle{plain}
\bibliography{biblio}

\end{document}